\def\d{\delta}
\renewcommand{\epsilon}{\ve}
\def\ve{\varepsilon}
\newcommand{\E}{\mbox{\bf E}}
\newcommand{\Var}{\mbox{\bf Var}}
\newcommand{\pr}[2][]{\mbox{Pr}\ifthenelse{\not\equal{}{#1}}{_{#1}}{}\!\left[#2\right]}
\newcommand{\dtv}{d_{\mathrm {TV}}}
\newcommand{\dkl}{d_{\mathrm {KL}}}
\newcommand{\dxs}{d_{\chi^2}}
\renewcommand{\dh}{d_{\mathrm {H}}}
\newcommand{\dlt}{d_{\ell_2}}
\newcommand{\calC}{\mathcal{C}}
\newcommand{\calF}{\mathcal{F}}
\newcommand{\bX}{\boldsymbol{X}}
\newcommand{\bY}{\boldsymbol{Y}}
\newcommand{\bZ}{\boldsymbol{Z}}
\newtheorem{theorem}{Theorem}
\newtheorem{proposition}{Proposition}
\newtheorem{remark}{Remark}
\newtheorem{fact}{Fact}
\newtheorem{lemma}{Lemma}
\newtheorem{corollary}{Corollary}
\newtheorem{definition}{Definition}
\newcommand{\ignore}[1]{}
\providecommand{\poly}{\operatorname*{poly}}
\newcommand{\bg}[1]{\medskip\noindent{\bf #1}}
\newcommand{\accept}{\textsc{Accept}\xspace}
\newcommand{\reject}{\textsc{Reject}\xspace}
\definecolor{Red}{rgb}{1,0,0}
\definecolor{Blue}{rgb}{0,0,1}
\newcommand{\oldbound}[1]{{}}
\title{Which Distribution Distances are Sublinearly Testable?}
\author {
Constantinos Daskalakis\thanks{Supported by NSF CCF-1617730, CCF-1650733, and ONR N00014-12-1-0999.}\\
EECS \& CSAIL, MIT\\
\tt{costis@csail.mit.edu}
\and
Gautam Kamath\thanks{Supported by NSF CCF-1617730, CCF-1650733, and ONR N00014-12-1-0999. Part of this work was done while the author was an intern at Microsoft Research New England.} \\
EECS \& CSAIL, MIT\\
\tt{g@csail.mit.edu}
\and
John Wright\thanks{Supported by NSF grant CCF-6931885.} \\
Physics, MIT\\
\tt{jswright@mit.edu}
}
\begin{document}
\maketitle
\begin{abstract}
Given samples from an unknown distribution $p$ and a description of a distribution $q$, are $p$ and $q$ close or far?
This question of ``identity testing'' has received significant attention in the case of testing whether $p$ and $q$ are equal or far in total variation distance.
However, in recent work~\cite{ValiantV11a, AcharyaDK15, DaskalakisP17}, the following questions have been been critical to solving problems at the frontiers of distribution testing:
\begin{itemize}
\item Alternative Distances: Can we test whether $p$ and $q$ are far in other distances, say Hellinger?
\item Tolerance: Can we test when $p$ and $q$ are \emph{close}, rather than equal? And if so, close in which distances?
\end{itemize}

Motivated by these questions, we characterize the complexity of distribution testing under a variety of distances, including total variation, $\ell_2$, Hellinger, Kullback-Leibler, and $\chi^2$.
For each pair of distances $d_1$ and $d_2$, we study the complexity of testing if $p$ and $q$ are close in $d_1$ versus far in $d_2$, with a focus on identifying which problems allow \emph{strongly} sublinear testers (i.e., those with complexity $O(n^{1 - \gamma})$ for some $\gamma > 0$ where $n$ is the size of the support of the distributions $p$ and $q$).
We provide matching upper and lower bounds for each case.
We also study these questions in the case where we only have samples from $q$ (equivalence testing), showing qualitative differences from identity testing in terms of when tolerance can be achieved.
Our algorithms fall into the classical paradigm of $\chi^2$-statistics, but require crucial changes to handle the challenges introduced by each distance we consider.
Finally, we survey other recent results in an attempt to serve as a reference for the complexity of various distribution testing problems.

\end{abstract}

\section{Introduction} \label{sec:intro}

The arch problem in science is determining whether observations of some phenomenon conform to a conjectured model. Often, phenomena of interest are probabilistic in nature, and so are our models of these phenomena; hence, testing their validity becomes a statistical hypothesis testing problem. In mathematical notation, suppose that we have access to samples from some unknown distribution $p$ over some set $\Sigma$ of size $n$. We also have a hypothesis distribution $q$, and our goal is to distinguish whether $p=q$ or $p \neq q$. For instance, we may want to test whether the sizes of some population of insects are normally distributed around their mean by sampling insects and measuring their sizes.

Of course, our models are usually imperfect. In our insect example, perhaps our estimation of the mean and variance of the insect sizes is a bit off. Furthermore, the sizes will clearly always be positive numbers. Yet a Normal distribution could still be a good fit. To get a meaningful testing problem some slack may be introduced, turning the problem into that of distinguishing whether $d_1(p,q) \le \epsilon_1$ versus $d_2(p,q)\ge\epsilon_2$, for some distance measures $d_1(\cdot,\cdot)$ and $d_2(\cdot,\cdot)$ between distributions over $\Sigma$ and some choice of $\epsilon_1$ and $\epsilon_2$ which may potentially depend on $\Sigma$ or even $q$. Regardless, for the problem to be well-defined, the sets of distributions ${\cal C} = \{p~|~d_1(p,q) \le \epsilon_1 \}$ and ${\cal F} = \{p~|~d_2(p,q) \ge \epsilon_2 \}$ should be disjoint. In fact, as our goal is to distinguish between $p \in {\cal C}$ and $p \in {\cal F}$ from samples, we cannot possibly draw the right conclusion with probability $1$ or detect the most minute deviations of $p$ from ${\cal C}$ or ${\cal F}$. So our guarantee should be probabilistic, and there should be some ``gap'' between the sets $\cal C$ and $\cal F$. In sum, the problem is the following:

\medskip \framebox{
\begin{minipage}{15.5cm}
{\em $(d_1,d_2)$-Identity Testing}: Given an explicit description of a distribution $q$ over $\Sigma$, sample access to a distribution $p$ over $\Sigma$, and bounds $\epsilon_1 \ge 0$, and $\epsilon_2, \delta >0$, distinguish with probability at least $1-\delta$ between $d_1(p,q)\le \epsilon_1$ and $d_2(p,q) \ge \epsilon_2$, whenever $p$ satisfies one of these two inequalities.
\end{minipage}}

\medskip \noindent A related problem is when we have sample access to both $p$ and $q$. For example, we might be interested in whether two populations of insects have distributions that are close or far. The resulting problem is the following:

\medskip \framebox{
\begin{minipage}{15.5cm}
{\em $(d_1,d_2)$-Equivalence (or Closeness) Testing}: Given sample access to distributions $p$ and $q$ over $\Sigma$, and bounds $\epsilon_1 \ge 0$, and $\epsilon_2, \delta >0$, distinguish with probability at least $1-\delta$ between $d_1(p,q)\le \epsilon_1$ and $d_2(p,q) \ge \epsilon_2$, whenever $p,q$ satisfy one of these two inequalities.
\end{minipage}}

\medskip The above questions are of course fundamental, and widely studied since the beginning of statistics.  However, most tests only detect certain types of deviations of $p$ from $q$, or are designed for distributions in parametric families. Moreover, most of the emphasis has been on the asymptotic sample regime. To address these challenges, there has been a surge of recent interest in information theory, property testing, and sublinear-time algorithms aiming at finite sample and $d_1$-close vs. $d_2$-far distinguishers, as in the formulations above; see e.g.\ \cite{BatuFFKRW01,BatuKR04,Paninski08,ValiantV17,AcharyaDK15,CanonneDGR16,DiakonikolasK16}. This line of work has culminated in computationally efficient and sample optimal testers for several choices of distances $d_1$ and $d_2$, as well as error parameters $\epsilon_1$ and $\epsilon_2$, for example:
\begin{itemize}
\item for identity testing, when: 
\begin{itemize}
\item $d_2$ is taken to be the total variation distance, and $\epsilon_1=0$~\cite{BatuFFKRW01, Paninski08,ValiantV17};

\item $d_1$ is taken to be the $\chi^2$-divergence, $d_2$ is taken to be the total variation distance, and $\epsilon_1=(\epsilon_2)^2/4$~\cite{AcharyaDK15, DiakonikolasK16};
\end{itemize}

\item for equivalence testing, when $d_2$ is taken to be the total variation distance, and $\epsilon_1=0$~\cite{BatuFRSW13, Valiant11, ChanDVV14}.

\end{itemize}
There are also several other sub-optimal results known for other combinations of $d_1$, $d_2$, $\epsilon_1$ and $\epsilon_2$, and for many combinations there are no known testers. A more extensive discussion of the literature is provided in Section~\ref{sec:related-work}.

\medskip The goal of this paper is to {\em provide a complete mapping of the optimal sample complexity required to obtain computationally efficient testers for identity testing and equivalence testing under the most commonly used notions of distances $d_1$ and $d_2$.} Our results are summarized in Tables~\ref{tab:id}, \ref{tab:eq}, and \ref{tab:l2} and discussed in detail in Section~\ref{sec:results}. In particular, we obtain computationally efficient and sample optimal testers for distances $d_1$ and $d_2$ ranging in the set \{$\ell_2$-distance, total variation distance, Hellinger distance, Kullback-Leibler divergence, $\chi^2$-divergence\},\footnote{These distances are nicely nested, as discussed in Section~\ref{sec:preliminaries}, from the weaker $\ell_2$ to the stronger $\chi^2$-divergence.} and for combinations of these distances and choice of errors $\epsilon_1$ and $\epsilon_2$ which give rise to meaningful testing problems as discussed above. The sample complexities stated in the tables are for probability of error $1/3$. Throwing in extra factors of $O(\log 1/\delta)$ boosts the probability of error to $1-\delta$, as usual.\footnote{Namely, one can repeat the test $O(\log 1/\d)$ times and output the majority result. One can analyze the resulting probability of success by the Chernoff bound.}

Our motivation for this work is primarily the fundamental nature of identity and equivalence testing, as well as of the distances under which we study these problems. It is also the fact that, even though distribution testing is by now a mature subfield of information theory, property testing, and sublinear-time algorithms, several of the testing questions that we consider have had unknown statuses prior to our work. This gap is accentuated by the fact that, as we establish, closely related distances may have radically different behavior. To give a quick example, it is easy to see that $\chi^2$-divergence is the second-order Taylor expansion of KL-divergence. Yet, as we show, the sample complexity for identity testing changes radically when $d_2$ is taken to be total variation or Hellinger distance, and $d_1$ transitions from $\chi^2$ to KL or weaker distances; see~Table~\ref{tab:id}. Prior to this work we knew about a transition somewhere between $\chi^2$-divergence and total variation distance, but our work identifies a more refined understanding of the point of transition. Similar fragility phenomena are identified by our work for equivalence testing, when we switch from total variation to Hellinger distance, as seen in Tables~\ref{tab:eq} and~\ref{tab:l2}. 

Adding to the fundamental nature of the problems we consider here, we should also emphasize that a clear understanding of the different tradeoffs mapped out by our work is critical at this point for the further development of the distribution testing field, as recent experience has established. Let us provide a couple of recent examples, drawing from our prior work. Acharya, Daskalakis, and Kamath~\cite{AcharyaDK15} study whether properties of distributions, such as unimodality or log-concavity, can be tested in total variation distance. Namely, given sample access to a distribution $p$, how many samples are needed to test whether it has some property (modeled by a set ${\cal P}$ of distributions) or whether it is far from having the property, i.e.~$d_{\rm TV}(p,{\cal P})>\epsilon$, for some  error $\epsilon$? Their approach is to first learn a proxy distribution $\hat{p} \in {\cal P}$ that satisfies $d'(p,\hat{p}) \le \epsilon'$ for some distance $d'$, whenever $p \in {\cal P}$, then reduce the property testing problem to $(d',d_{\rm TV})$-identity testing  of $p$ to $\hat{p}$. Interestingly, rather than picking $d'$ to be total variation distance, they take it to be $\chi^2$-divergence, which leads to optimal testers of sample complexity $O(\sqrt{n}/\epsilon^2)$ for several ${\cal P}$'s such as monotone, unimodal, and log-concave distributions over $[n]$. Had they picked $d'$ to be total variation distance, they would be stuck with a $\Omega(n/\log n)$ sample complexity in the resulting identity testing problem, as Table~\ref{tab:id} illustrates, which would lead to a suboptimal overall tester. The choice of $\chi^2$-divergence in the work of Acharya et al.~was somewhat ad hoc. By providing a full mapping of the sample complexity tradeoffs in the use of different distances, we expect to help future work in identifying better where the bottlenecks and opportunities lie.

Another example supporting our expectation can be found in recent work of Daskalakis and Pan~\cite{DaskalakisP17}. They study equivalence testing of Bayesian networks under total variation distance. Bayesian networks are flexible models expressing combinatorial structure in high-dimensional distributions in terms of a directed acyclic graph (DAG) specifying their conditional dependence structure. The challenge in testing Bayes nets is that their support scales exponentially in the number of nodes, and hence naive applications of known equivalence tests lead to sample complexities that are exponential in the number of nodes, even when the in-degree $\delta$ of the underlying DAGs is bounded. To address this challenge, Daskalakis and Pan establish ``localization-of-distance'' results of the following form, for various choices of distance $d$: ``If two Bayes nets $P$ and $Q$ are $\epsilon$-far in total variation distance, then there exists a small set of nodes $S$ (whose size is $\Delta+1$, where $\Delta$ is again the maximum in-degree of the underlying DAG where $P$ and $Q$ are defined) such that the marginal distributions of $P$ and $Q$ over the nodes of set $S$ are $\epsilon'$-far under distance $d$.'' When they take $d$ to be total variation distance, they can show $\epsilon'=\Omega(\epsilon/m)$, where $m$ is the number of nodes in the underlying DAG (i.e.~the dimension). Given this localization of distance, to test whether two Bayes nets $P$ and $Q$ satisfy $P=Q$ vs. $d_{\rm TV}(P,Q) \ge \epsilon$, it suffices to test, for all relevant marginals $P_S$ and $Q_S$ whether $P_S=Q_S$ vs. $d_{\rm TV}(P_S,Q_S)=\Omega(\epsilon/m)$. From Table~\ref{tab:eq} it follows that this requires sample size superlinear in $m$, which is suboptimal. Interestingly, when they take $d$ to be the square Hellinger distance, they can establish a localization-of-distance result with $\epsilon'=\epsilon^2/2m$. By Table~\ref{tab:eq}, to test each $S$ they need sample complexity that is linear in $m$, leading to an overall dependence of the sample complexity on $m$ that is $\tilde{O}(m)$,\footnote{The extra log factors are to guarantee that the tests performed on all sets $S$ of size $\delta+1$ succeed.} which is optimal up to log factors. Again, switching to a different distance results in near-optimal overall sample complexity, and our table is guidance as to where the bottlenecks and opportunities lie. 


Finally, we comment that tolerant testing (i.e., when $\ve_1 > 0$) is perhaps one of the most interesting questions in the design of practically useful testers.
Indeed, as mentioned before, in many statistical settings there may be model misspecification.
For example, why should one expect to be receiving samples from \emph{precisely} the uniform distribution?
As such, one may desire that a tester is \emph{robust} to small errors, and accepts all distributions which are \emph{close} to uniform.
Unfortunately, Valiant and Valiant~\cite{ValiantV11a} ruled out the possibility of a strongly sublinear tester which has total variation tolerance, showing that such a problem requires $\Theta\left(\frac{n}{\log n}\right)$ samples.
However, as shown by Acharya, Daskalakis, and Kamath~\cite{AcharyaDK15}, $\chi^2$-tolerance is possible with only $O\left(\frac{\sqrt{n}}{\ve^2}\right)$ samples.
This raises the following question: Which distances can a tester be tolerant to, while maintaining a strongly sublinear sample complexity? We outline what is possible.

\subsection{Results} \label{sec:results}
Our results are pictorially presented in Tables~\ref{tab:id}, \ref{tab:eq}, and \ref{tab:l2}.
We note that these tables are intended to provide only references to the \emph{sample complexity} of each testing problem, rather than exhaustively cover all prior work. 
As such, several references are deferred to Section~\ref{sec:related-work}.
In Tables~\ref{tab:id} and \ref{tab:eq}, each cell contains the complexity of testing whether two distributions are close in the distance for that row, versus far in the distance for that column.\footnote{Note that we chose constants in our theorem statements for simplicity of presentation, and they may not match the constants presented in the table. This can be remedied by appropriate changing of constants in the algorithms and constant factor increases in the sample complexity.}
These distances and their relationships are covered in detail in Section~\ref{sec:preliminaries}, but we note that the distances are scaled and transformed such that problems become harder as we traverse the table down or to the right.
In other words, lower bounds hold for cells which are down or to the right in the table, and upper bounds hold for cells which are up or to the left; problems with the same complexity are shaded with the same color.
The dark grey boxes indicate problems which are not well-defined, i.e. two distributions could simultaneously be close in KL and far in $\chi^2$-divergence.
  \newlength{\lena}
  \settowidth{\lena}{$\Omega\left(\frac{\sqrt{n}}{\ve^2}\right)$\cite{Paninski08}}
  \newlength{\lenb}
  \settowidth{\lenb}{$O\left(\frac{\sqrt{n}}{\ve^2}\right)$[Theorem~\ref{thm:ones-csq-h}]}
  \newlength{\lenc}
  \settowidth{\lenc}{Untestable [Theorem~\ref{thm:untestable}]}
  \newlength{\lend}
  \settowidth{\lend}{$\Omega\left(\frac{n}{\log n}\right)$[Theorem~\ref{thm:ones-lb}]}
  \newlength{\lene}
  \settowidth{\lene}{$O\left(\frac{n}{\log n}\right)$[Corollary~\ref{cor:tv-est}]}

\bgroup
\def\arraystretch{1.3}%
\begin{table}[t]
\centering
\scalebox{0.75}{
  \begin{tabular}{| c | c | c | c | c | c |}
  \hline
  {} & {$\dtv(p,q) \geq \ve$ } & {$\dh(p,q) \geq \ve/\sqrt{2}$} & {$\dkl(p,q) \geq \ve^2$} & {$\dxs(p,q) \geq \ve^2$} \\
  \hline
  {$p = q$} & \cellcolor{green!20}{\rlap{\hspace{0.5\lena}\tikzmark{l}}$\Omega\left(\frac{\sqrt{n}}{\ve^2}\right)$\cite{Paninski08}\tikzmark{a}} & \cellcolor{green!20}{\tikzmark{b}} & \cellcolor{red!20}{\rlap{\hspace{0.5\lenc}\tikzmark{o}}Untestable [Theorem~\ref{thm:untestable}]\tikzmark{c}} & \cellcolor{red!20}{}  \\
  \hline
  {$\dxs(p,q) \leq \ve^2/4$} & \cellcolor{green!20}{\tikzmark{e}} & \cellcolor{green!20}{\tikzmark{f}\rlap{\hspace{0.5\lenb}\tikzmark{m}}$O\left(\frac{\sqrt{n}}{\ve^2}\right)$[Theorem~\ref{thm:ones-csq-h}]} & \cellcolor{red!20}{} & \cellcolor{red!20}{}  \\
  \hline
  {$\dkl(p,q) \leq \ve^2/4$} & \cellcolor{orange!20}{\rlap{\hspace{0.5\lend}\tikzmark{p}}$\Omega\left(\frac{n}{\log n}\right)$[Theorem~\ref{thm:ones-lb}]\tikzmark{g}} & \cellcolor{orange!20}{\tikzmark{h}} & \cellcolor{red!20}{} & \cellcolor{black!20}{} \\
  \hline
  {$\dh(p,q) \leq \ve/2\sqrt{2}$} & \cellcolor{orange!20}{} & \cellcolor{orange!20}{} & \cellcolor{black!20}{} & \cellcolor{black!20}{}  \\
  \hline
  {$\dtv(p,q) \leq \ve/2$ or $\ve^2/4$\footnotemark} & \cellcolor{orange!20}{\tikzmark{i}} & \cellcolor{orange!25}{\tikzmark{j}\rlap{\hspace{0.5\lene}\tikzmark{q}}$O\left(\frac{n}{\log n}\right)$[Corollary~\ref{cor:tv-est}]} & \cellcolor{black!20}{\tikzmark{n}} & \cellcolor{black!20}{}  \\
  \hline
\end{tabular}
}
\caption{Identity Testing. Rows correspond to completeness of the tester, and columns correspond to soundness.}
\label{tab:id}
\end{table}

\egroup
\footnotetext{\label{fn:nlogn}We note that we must use $\ve/2$ or $\ve^2/4$ depending on whether we are testing with respect to TV or Hellinger. For more details and other discussion of the $n/\log n$ region of this chart, see Section~\ref{sec:nlogn}.}

  \newlength{\lenaa}
  \settowidth{\lenaa}{$\Omega\left(\frac{n}{\log n}\right)$ [Theorem~\ref{thm:twos-lb}]}
  \newlength{\lenbb}
  \settowidth{\lenbb}{$O\left(\frac{n}{\log n}\right)$[Corollary~\ref{cor:tv-est}]}

\bgroup
\def\arraystretch{1.3}%

\begin{table}[t]
\centering
\scalebox{0.75}{
  \begin{tabular}{| c | c | c | c | c | c |}
  \hline
  {} & {$\dtv(p,q) \geq \ve$ } & {$\dh(p,q) \geq \ve/\sqrt{2}$} & {$\dkl(p,q) \geq \ve^2$} & {$\dxs(p,q) \geq \ve^2$}  \\
  \hline
  {$p = q$} & \cellcolor{violet!20}{$O \left(\max\left\{\frac{n^{1/2}}{\ve^2},\frac{n^{2/3}}{\ve^{4/3}}\right\}\right)$\cite{ChanDVV14}} & \cellcolor{yellow!20}{$O\left(\min\left\{\frac{n^{3/4}}{\ve^2},\frac{n^{2/3}}{\ve^{8/3}}\right\}\right)$[Theorem~\ref{thm:twos-h}]} & \cellcolor{red!20}{\rlap{\hspace{0.5\lenc}\tikzmark{ii}}Untestable [Theorem~\ref{thm:untestable}]\tikzmark{aa}} & \cellcolor{red!20}{}  \\
  {} & \cellcolor{violet!20}{$\Omega \left(\max\left\{\frac{n^{1/2}}{\ve^2},\frac{n^{2/3}}{\ve^{4/3}}\right\}\right)$\cite{ChanDVV14}} & \cellcolor{yellow!20}{$\Omega\left(\min\left\{\frac{n^{3/4}}{\ve^2},\frac{n^{2/3}}{\ve^{8/3}}\right\}\right)$\cite{DiakonikolasK16}} & \cellcolor{red!20}{} & \cellcolor{red!20}{}  \\
  \hline
  {$\dxs(p,q) \leq \ve^2/4$} & \cellcolor{orange!20}{\rlap{\hspace{0.5\lenaa}{\tikzmark{jj}}}$\Omega\left(\frac{n}{\log n}\right)$ [Theorem~\ref{thm:twos-lb}]\tikzmark{dd}} & \cellcolor{orange!20}{\tikzmark{ee}} & \cellcolor{red!20}{} & \cellcolor{red!20}{}  \\
  \hline
  {$\dkl(p,q) \leq \ve^2/4$} & \cellcolor{orange!20}{} & \cellcolor{orange!20}{} & \cellcolor{red!20}{} & \cellcolor{black!20}{} \\
  \hline
  {$\dh(p,q) \leq \ve/2\sqrt{2}$} & \cellcolor{orange!20}{} &\cellcolor{orange!20} {} & \cellcolor{black!20}{} & \cellcolor{black!20}{}  \\
  \hline
  {$\dtv(p,q) \leq \ve/2$ or $\ve^2/4$\textsuperscript{\ref{fn:nlogn}}} & \cellcolor{orange!20}{\tikzmark{ff}} & \cellcolor{orange!25}{\tikzmark{gg}\rlap{\hspace{0.5\lenbb}{\tikzmark{kk}}}$O\left(\frac{n}{\log n}\right)$[Corollary~\ref{cor:tv-est}]} & \cellcolor{black!20}{\tikzmark{hh}} & \cellcolor{black!20}{}  \\
  \hline
\end{tabular}}
\caption{Equivalence Testing. Rows correspond to completeness of the tester, and columns correspond to soundness.}
\label{tab:eq}
\end{table}
\egroup

\bgroup
\def\arraystretch{1.3}%
\begin{table}[t]
\centering
  \begin{tabular}{| c | c | c |}
  \hline
  {} & {Identity Testing} & {Equivalence Testing}\\
  \hline
  {$d(p,q) \leq f_d(n,\ve)$ vs. $d_{\ell_2}(p,q) \geq \ve$} & \cellcolor{blue!20}{$\Theta\left(\frac1{\ve^2}\right)$ [Corollary \ref{cor:l2-est}]} & \cellcolor{blue!20}{$\Theta\left(\frac1{\ve^2}\right)$ [Corollary \ref{cor:l2-est}]} \\
  \hline
  {$d_{\ell_2}(p,q) \leq \frac{\ve}{\sqrt{n}}$ vs. $\dtv(p,q) \geq \ve$} & \cellcolor{green!20}{$\Theta\left(\frac{\sqrt{n}}{\ve^2}\right)$ [Theorem~\ref{thm:ones-tv}]} & \cellcolor{violet!20}{$\Theta\left(\max\left\{\frac{n^{1/2}}{\ve^2},\frac{n^{2/3}}{\ve^{4/3}}\right\}\right)$ [Theorem~\ref{thm:twos-tv}]} \\
  \hline
  {$d_{\ell_2}(p,q) \leq \frac{\ve^2}{\sqrt{n}}$ vs. $\dh(p,q) \geq \ve$} & \cellcolor{green!20}{$\Theta\left(\frac{\sqrt{n}}{\ve^2}\right)$ [Theorem~\ref{thm:ones-h}]} & \cellcolor{yellow!20}{$\Theta\left(\min\left\{\frac{n^{3/4}}{\ve^2},\frac{n^{2/3}}{\ve^{8/3}}\right\}\right)$[Theorem~\ref{thm:twos-h}}] \\
  \hline
\end{tabular}
\caption{$\ell_2$ Testing. $f_d(n, \ve)$ is a quantity such that $d(p,q) \leq f_d(n, \ve)$ and $d_{\ell_2}(p,q) \geq \ve$ are disjoint.}
\label{tab:l2}
\end{table}

\egroup

We highlight some of our results:
\begin{enumerate}
\item We give an $O(\sqrt{n}/\ve^2)$ sample algorithm for identity testing whether $\dxs(p,q) \leq \ve^2/4$ or $\dh(p,q) \geq \ve/\sqrt{2}$ (Theorem \ref{thm:ones-csq-h}).
This is the first algorithm which achieves the optimal dependence on both $n$ and $\ve$ for identity testing with respect to Hellinger distance (even non-tolerantly).
We note that a $O(\sqrt{n}/\ve^4)$ algorithm was known, due to optimal identity testers for total variation distance and the quadratic relationship between total variation and Hellinger distance.
\item In the case of identity testing, a stronger form of tolerance (i.e., KL divergence instead of $\chi^2$) causes the sample complexity to jump to $\Omega\left(n/\log n\right)$ (Theorem~\ref{thm:ones-lb}).
We find this a bit surprising, as $\chi^2$-divergence is the second-order Taylor expansion of KL divergence, so one might expect that the testing problems have comparable complexities.
\item In the case of equivalence testing, \emph{even $\chi^2$-tolerance} comes at the cost of an $\Omega\left(n/\log n\right)$ sample complexity (Theorem~\ref{thm:twos-lb}).
This is a qualitative difference from identity testing, where $\chi^2$-tolerance came at no cost.
\item However, in both identity and equivalence testing, $\ell_2$ tolerance comes at no additional cost (Theorems~\ref{thm:ones-tv}, \ref{thm:ones-h}, \ref{thm:twos-tv}, and \ref{thm:twos-h}).
Thus, in many cases, $\ell_2$ tolerance is the best one can do if one wishes to maintain a strongly sublinear sample complexity.
\end{enumerate}

From a technical standpoint, our algorithms are $\chi^2$-statistical tests, and most closely resemble those of~\cite{AcharyaDK15} and~\cite{ChanDVV14} (similar $\chi^2$-tests were employed in~\cite{ValiantV17, DiakonikolasKN15a, CanonneDGR16}).
However, crucial changes are required to satisfy the more stringent requirements of testing with respect to Hellinger distance.
In our identity tester for Hellinger, we deal with this different distance measure by pruning light domain elements of $q$ less aggressively than~\cite{AcharyaDK15},
in combination with a preliminary test to reject early if the difference between $p$ and $q$ is contained exclusively within the set of light elements -- this is a new issue that cannot arise when testing with respect to total variation distance.
In our equivalence tester for Hellinger, we follow an approach, similar to~\cite{ChanDVV14} and~\cite{DiakonikolasK16},
of analyzing the light and heavy domain elements separately, with the challenge that the algorithm does not know which elements are which.
Finally, to achieve $\ell_2$ tolerance in these cases, we use a ``mixing" strategy
in which instead of testing based solely on samples from~$p$ and~$q$, we mix in some number (depending on our application) of samples from the uniform distribution.
At a high level, the purpose of mixing is to make our distributions \emph{well-conditioned},
i.e.\ to ensure that all probability values are sufficiently large.
Such a strategy was recently employed by Goldreich in~\cite{Goldreich16} for uniformity testing.

\subsubsection{Comments on $\ell_2$-tolerance}
$\ell_2$ tolerance has been indirectly considered in~\cite{GoldreichR00, BatuFFKRW01, BatuFRSW13} through their weak tolerance for total variation distance and the relationship with $\ell_2$ distance, though these results have suboptimal sample complexity.
Our equivalence testing results improve upon~\cite{ChanDVV14} by adding $\ell_2$-tolerance.
We note that~\cite{DiakonikolasK16} also provides $\ell_2$-tolerant testers (as well as~\cite{DiakonikolasKN15a} for the case of uniformity), comparable to those obtained in Theorems~\ref{thm:ones-tv},~\ref{thm:ones-h}, and~\ref{thm:twos-h}, though this tolerance is not explicitly analyzed in their paper.
This can be seen by noting that the underlying tester from~\cite{ChanDVV14} is tolerant, and the ``flattening'' operation they apply reduces the $\ell_2$-distance between the distributions.
The testers in~\cite{DiakonikolasK16} are those of Propositions 2.7, 2.10, and 2.15, combined with the observation of Remark 2.8.
We rederive these results for completeness, and to show a direct way of proving $\ell_2$-tolerance.
Note that Theorem~\ref{thm:twos-h} also improves upon Proposition 2.15 of~\cite{DiakonikolasK16} by removing log factors in the sample complexity.

\subsubsection{Comments on the $\Theta(n/\log n)$ Results}
\label{sec:nlogn}
Our upper bounds in the bottom-left portion of the table are based off the total variation distance estimation algorithm of Jiao, Han, and Weissman~\cite{JiaoHW16}, where an $\Theta(n/\log n)$ complexity is only derived for $\ve \geq 1/\poly(n)$.
Similarly, in~\cite{ValiantV10a}, the lower bounds are only valid for constant $\ve$.
We believe that the precise characterization  is a very interesting open problem.
In the present work, we focus on the case of constant $\ve$ for these testing problems.

We wish to draw attention to the bottom row of the table, and note that the two testing problems are $\dtv(p,q) \leq \ve/2$ versus $\dtv(p,q) \geq \ve$, and $\dtv(p,q) \leq \ve^2/4$ versus $\dh(p,q) \geq \ve/\sqrt{2}$.
This difference in parameterization is required to make the two cases in the testing problem disjoint.
With this parameterization, we conjecture that the latter problem has a greater dependence on $\ve$ as it goes to $0$ (namely, $\ve^{-4}$ versus $\ve^{-2}$), so we colour the box a slightly darker shade of orange.

\subsection{Related Work}
\label{sec:related-work}
The most classic distribution testing question is uniformity testing, which is identity testing when $\ve_1 = 0$, $d_2$ is total variation distance, and $q$ is the uniform distribution.
This was first studied in theoretical computer science in~\cite{GoldreichR00}.
Paninski gave an optimal algorithm (for when $\ve_2$ is not too small) with a complexity of $O(\sqrt{n}/\ve^2)$ and a matching lower bound~\cite{Paninski08}.
More generally, letting $q$ be an arbitrary distribution, exact total variation identity testing was studied~\cite{BatuFFKRW01}, and an (instance) optimal algorithm was given by Valiant and Valiant~\cite{ValiantV17}, with the same complexity of $O(\sqrt{n}/\ve^2)$.
Optimal algorithms for this problem were rediscovered several times, see i.e.~\cite{DiakonikolasKN15a, AcharyaDK15, DiakonikolasK16, DiakonikolasGPP16}.

Equivalence (or closeness) testing was studied in~\cite{BatuFRSW13}, in the same setting ($\ve_1 = 0$, $d_2$ is total variation distance).
A lower bound of $\Omega(n^{2/3})$ was given by~\cite{Valiant11}.
Tight upper and lower bounds were given in~\cite{ChanDVV14}, which shows interesting behavior of the sample complexity as the parameter $\ve$ goes from large to small.
This problem was also studied in the setting where one has unequal sample sizes from the two distributions~\cite{BhattacharyaV15,DiakonikolasK16}.
When the distance $d_1$ is Hellinger, the complexity is qualitatively different, as shown by~\cite{DiakonikolasK16}.
They prove a nearly-optimal upper bound and a tight lower bound for this problem.

\cite{Waggoner15, DoBaNNR11} also consider testing problems with other distances, namely $\ell_p$ distances and earth mover's distance (also known as Wasserstein distance), respectively.

Tolerant identity testing (where $\ve_1 = O(\ve)$ and $d_1$ is total variation distance) was studied in~\cite{ValiantV10a,ValiantV10b,ValiantV11a,ValiantV11b}, through the (equivalent) lens of estimating total variation distance between distributions.
In these works, $\Theta\left(n/\log n\right)$ bounds were proven for the sample complexity.
Several other related problems (i.e., support size and entropy estimation) share the same sample complexity, and have enjoyed significant study~\cite{AcharyaOST17, WuY16, AcharyaDOS17}.
The closest related results to our work are those on estimating distances between distributions~\cite{JiaoHW16, JiaoKHW17, HanJW16}.

$\chi^2$-tolerance (when $d_1$ is $\chi^2$-divergence and $\ve_1 = O(\ve^2)$) was introduced and applied by~\cite{AcharyaDK15} for testing families of distributions, i.e., testing if a distribution is monotone or far from being monotone.
It was shown that this tolerance comes at no additional cost over vanilla identity testing; that is, the sample complexity is still $O(\sqrt{n}/\ve^2)$.
Testing such families of distributions was also studied by~\cite{CanonneDGR16}.

Testing with respect to Hellinger distance was applied in~\cite{DaskalakisP17} for testing Bayes networks.
Since lower bounds of~\cite{AcharyaDK15} show that distribution testing suffers from the curse of dimensionality, further structural assumptions must be made if one wishes to test multivariate distributions.
This ``high-dimensional frontier'' has also been studied on graphical models by~\cite{DaskalakisDK18} and~\cite{CanonneDKS17} (for Ising models and Bayesian networks, respectively).

Our work focuses on characterizing the complexity of identity and equivalence testing in the worst case over pairs $p$ and $q$.
Related works attempt to nail down the sample complexity of identity testing on an \emph{instance-by-instance} basis~\cite{ValiantV17, JiaoHW16, DiakonikolasK16, BlaisCG17} -- that is, reducing the sample complexity depending on which distribution $q$ is given as input (and sometimes depending on $p$ as well).
We consider this to be an interesting open question for different distances $d_1$ and $d_2$.
For example, Theorem~\ref{thm:untestable} states that identity testing is impossible when $d_2$ is the KL divergence.
However, if $q$ is the uniform distribution, then the complexity becomes $\Theta(\sqrt{n})$.
An instance-by-instance analysis would allow one to bypass some of these strong lower bounds.

This is only a fraction of recent results; we direct the reader to~\cite{Canonne15} for an excellent recent survey of distribution testing.

\subsection{Organization}
The organization of this paper is as follows.
In Section~\ref{sec:preliminaries}, we state preliminaries and notation used in this paper.
In Sections~\ref{sec:ones-ub} and~\ref{sec:twos-ub}, we prove upper bounds for identity testing and equivalence testing (respectively) based on $\chi^2$-style statistics.
In Section~\ref{sec:est-ub}, we prove upper bounds for distribution testing based on distance estimation.
Finally, in Section~\ref{sec:lb}, we prove testing lower bounds.

\section{Preliminaries}
\label{sec:preliminaries}
In this paper, we will focus on discrete probability distributions over $[n]$.
For a distribution $p$, we will use the notation $p_i$ to denote the mass $p$ places on symbol $i$.
For a set $S \subseteq [n]$ and a distribution $p$ over $[n]$, $p_S$ is the vector $p$ restricted to the coordinates in $S$.
We will call this a \emph{restriction} of distribution $p$.

The following probability distances and divergences are of interest to us:
\begin{definition}
The \emph{total variation distance} between $p$ and $q$ is defined as
$$\dtv(p,q) = \max_{S \subseteq [n]} p(S) - q(S) = \frac12\sum_{i \in [n]} \left|p_i - q_i\right| = \|p - q\|_1 \in [0,1].$$
\end{definition}

\begin{definition}
The \emph{KL divergence} between $p$ and $q$ is defined as
$$\dkl(p,q) = \sum_{i \in [n]} p_i \log \left(\frac{p_i}{q_i}\right) \in [0, \infty).$$
This definition uses the convention that $0 \log 0 = 0$.
\end{definition}

\begin{definition}
The \emph{Hellinger distance} between $p$ and $q$ is defined as 
$$\dh(p,q) = \frac{1}{\sqrt{2}}\sqrt{\sum_{i \in [n]} \left(\sqrt{p_i} - \sqrt{q_i}\right)^2} \in [0,1].$$
\end{definition}

\begin{definition}
The \emph{$\chi^2$-divergence} (or \emph{chi-squared divergence}) between $p$ and $q$ is defined as
$$\dxs(p,q) = \sum_{i \in [n]} \frac{(p_i - q_i)^2}{q_i} \in [0,\infty).$$
\end{definition}

\begin{definition}
The \emph{$\ell_2$ distance} between $p$ and $q$ is defined as
$$\dlt(p,q) = \sqrt{\sum_{i \in [n]} (p_i - q_i)^2} = \|p-q\|_2 \in [0,1].$$
\end{definition}
\noindent
We also define these distances for restrictions of distributions $p_S$ and $q_S$ by replacing the summations over $i \in [n]$ with summations over $i \in S$.

We have the following relationships between these distances.
These are well-known for distributions, i.e., see \cite{GibbsS02}, but we prove them more generally for restrictions of distributions in Section~\ref{sec:distanceinequalities}. 
\begin{proposition}
\label{prop:distanceinequalities}
Letting $p_S$ and $q_S$ be restrictions of distributions $p$ and $q$ to $S \subseteq [n]$,
$$\dh^2(p_S,q_S) \leq \dtv(p_S,q_S) \leq \sqrt{2}\dh(p_S,q_S) \leq \sqrt{\sum_{i \in S} (q_i - p_i)  + \dkl(p_S,q_S)} \leq \sqrt{\dxs(p_S,q_S)}.$$
\end{proposition}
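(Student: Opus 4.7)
The plan is to prove the four inequalities in sequence, and then observe that the only place where ``$p_S,q_S$ are restrictions'' matters (versus proper distributions) is in the step $\dtv \le \sqrt{2}\,\dh$, where we need $\sum_{i\in S} p_i + \sum_{i\in S} q_i \le 2$, which of course still holds. Every other inequality will follow from a pointwise application of the elementary fact $\log x \le x-1$ for $x>0$.

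For $\dh^2(p_S,q_S) \le \dtv(p_S,q_S)$, I would write $|p_i - q_i| = |\sqrt{p_i}-\sqrt{q_i}|\cdot(\sqrt{p_i}+\sqrt{q_i})$ and observe that $\sqrt{p_i}+\sqrt{q_i} \ge |\sqrt{p_i}-\sqrt{q_i}|$, so $(\sqrt{p_i}-\sqrt{q_i})^2 \le |p_i-q_i|$. Summing over $S$ and dividing by $2$ gives the bound. For $\dtv(p_S,q_S) \le \sqrt{2}\,\dh(p_S,q_S)$, use the same factorization with Cauchy–Schwarz:
\[
2\dtv(p_S,q_S) = \sum_{i\in S}|\sqrt{p_i}-\sqrt{q_i}|(\sqrt{p_i}+\sqrt{q_i}) \le \sqrt{2\dh^2(p_S,q_S)}\cdot\sqrt{\sum_{i\in S}(\sqrt{p_i}+\sqrt{q_i})^2},
\]
and then bound $\sum_{i\in S}(\sqrt{p_i}+\sqrt{q_i})^2 \le 2\sum_{i\in S}(p_i+q_i) \le 4$ using $\sum_{i\in S} p_i, \sum_{i\in S} q_i \le 1$.

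The middle inequality, $\sqrt{2}\,\dh(p_S,q_S) \le \sqrt{\sum_{i\in S}(q_i-p_i) + \dkl(p_S,q_S)}$, is the one I expect will require the most care. After squaring and expanding $2\dh^2 = \sum_{i\in S}(p_i+q_i) - 2\sum_{i\in S}\sqrt{p_i q_i}$, the target reduces to
\[
\sum_{i\in S} p_i\bigl[\,2(1-\sqrt{q_i/p_i}) + \log(q_i/p_i)\bigr] \le 0,
\]
which I would prove by showing that each summand is nonpositive. Setting $y=\sqrt{q_i/p_i}$, the bracket becomes $2(1-y)+2\log y$, which is $\le 0$ by $\log y \le y-1$. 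Terms with $p_i = 0$ are handled by the $0\log 0 = 0$ convention (the Hellinger contribution $q_i$ matches the $q_i - p_i$ term on the right). Finally, for $\sqrt{\sum_{i\in S}(q_i-p_i)+\dkl(p_S,q_S)} \le \sqrt{\dxs(p_S,q_S)}$, the cleanest approach is to observe the identities
\[
\sum_{i\in S} q_i\,f(p_i/q_i) = \dkl(p_S,q_S) + \sum_{i\in S}(q_i - p_i), \qquad \sum_{i\in S} q_i\,g(p_i/q_i) = \dxs(p_S,q_S),
\]
where $f(x)=x\log x - x + 1$ and $g(x)=(x-1)^2$. Then $f(x)\le g(x)$ for $x\ge 0$ reduces to $\log x \le x-1$, proving the inequality pointwise and hence after the nonnegative $q_i$-weighted sum. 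The hardest step will be the Hellinger/KL bound, mainly because of the bookkeeping with the ``extra'' $\sum(q_i-p_i)$ correction term, but once the right algebraic rearrangement is found it again collapses to $\log x \le x-1$.
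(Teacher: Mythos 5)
Your proposal is correct, and for the two inequalities involving the KL divergence it takes a genuinely different route from the paper. The first two steps ($\dh^2 \leq \dtv$ via the factorization $|p_i-q_i| = |\sqrt{p_i}-\sqrt{q_i}|(\sqrt{p_i}+\sqrt{q_i})$, and $\dtv \leq \sqrt{2}\,\dh$ via Cauchy--Schwarz) match the paper's argument, though you bound $\sum_{i\in S}(\sqrt{p_i}+\sqrt{q_i})^2$ by $2\sum_{i\in S}(p_i+q_i)\leq 4$ where the paper uses the exact identity $\frac12\sum_{i\in S}(\sqrt{p_i}+\sqrt{q_i})^2 = \sum_{i\in S}(p_i+q_i)-\dh^2(p_S,q_S)$; both yield the same constant. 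For the Hellinger--KL and KL--$\chi^2$ steps, the paper normalizes by $\sum_{j\in S}p_j$, applies Jensen's inequality to move the logarithm outside the sum, and then uses $1+x\leq e^x$; you instead compare the summands pointwise, reducing each inequality to $\log x \leq x-1$ term by term (for the last step this is the standard $f$-divergence comparison $x\log x - x + 1 \leq (x-1)^2$ weighted by $q_i$). Your identities $\sum_{i\in S} q_i f(p_i/q_i) = \dkl(p_S,q_S)+\sum_{i\in S}(q_i-p_i)$ and $\sum_{i\in S} q_i g(p_i/q_i)=\dxs(p_S,q_S)$ check out, as does the reduction of the Hellinger bound to $2(1-y)+2\log y\leq 0$ with $y=\sqrt{q_i/p_i}$. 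Your pointwise approach is arguably more elementary --- it avoids the normalization bookkeeping and establishes a stronger, termwise domination --- while the paper's Jensen-based aggregation keeps the argument phrased directly in terms of the global quantities. The only loose ends in your write-up are the degenerate indices ($q_i=0<p_i$ in the last step, $q_i=0<p_i$ in the KL definition), where the right-hand sides are $+\infty$ and the inequalities hold vacuously; the paper glosses over these as well.
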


We recall that $\dlt$ fits into the picture by its relationship with total variation distance:
\begin{proposition}
\label{prop:ltinequalities}
Letting $p$ and $q$ be distributions over $[n]$,
$$\dlt(p,q) \leq 2\dtv(p,q) \leq \sqrt{n}\dlt(p,q).$$
\end{proposition}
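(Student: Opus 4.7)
The plan is to prove each of the two inequalities separately using standard $\ell_p$-norm comparisons on the difference vector $v = p - q \in \reals^n$.

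For the left inequality $\dlt(p,q) \le 2\dtv(p,q)$, I would rewrite both sides in terms of $v$: the right-hand side equals $\|v\|_1$ and the left-hand side equals $\|v\|_2$. The desired bound is then the well-known monotonicity $\|v\|_2 \le \|v\|_1$, which follows from squaring both sides and observing
\[
\|v\|_2^2 \;=\; \sum_{i \in [n]} v_i^2 \;\le\; \Bigl(\sum_{i \in [n]} |v_i|\Bigr)^{\!2} \;=\; \|v\|_1^2,
\]
since all cross terms in the expansion of $\|v\|_1^2$ are nonnegative.

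For the right inequality $2\dtv(p,q) \le \sqrt{n}\,\dlt(p,q)$, the target is $\|v\|_1 \le \sqrt{n}\,\|v\|_2$. I would apply the Cauchy--Schwarz inequality to the vectors $(|v_1|, \ldots, |v_n|)$ and $(1, \ldots, 1)$:
\[
\|v\|_1 \;=\; \sum_{i \in [n]} 1 \cdot |v_i| \;\le\; \sqrt{n}\,\sqrt{\sum_{i \in [n]} v_i^2} \;=\; \sqrt{n}\,\|v\|_2.
\]
Combining the two bounds yields the proposition.

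There is no real obstacle here; the only thing to double-check is that the definitions in the preliminaries match exactly, namely that $\dtv(p,q) = \tfrac12 \|p-q\|_1$ and $\dlt(p,q) = \|p-q\|_2$, so that the factor of $2$ on the middle term is accounted for correctly. Both identifications are immediate from the definitions stated above.
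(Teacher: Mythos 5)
Your proof is correct and matches the paper's approach: the paper simply notes that the second inequality is Cauchy--Schwarz and treats the first as the standard monotonicity $\|v\|_2 \le \|v\|_1$, exactly as you spell out. Your identification $2\dtv(p,q) = \sum_{i}|p_i - q_i|$ is the right reading of the definitions, so the factor of $2$ is handled correctly.
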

The second inequality follows from Cauchy-Schwarz.

We will also need to following bound for Hellinger distance:
\begin{proposition}\label{prp:didnt-know-it-was-hellinger}
$\displaystyle 2 \dh^2(p,q) \leq \sum_{i=1}^n \frac{(p_i - q_i)^2}{p_i+q_i} \leq 4\dh^2(p,q)$.
\end{proposition}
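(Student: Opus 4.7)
The plan is to prove both inequalities simultaneously by factoring each summand termwise. The key algebraic identity is that $p_i - q_i = (\sqrt{p_i} - \sqrt{q_i})(\sqrt{p_i} + \sqrt{q_i})$, so squaring gives
\[
(p_i - q_i)^2 = (\sqrt{p_i} - \sqrt{q_i})^2 \cdot (\sqrt{p_i} + \sqrt{q_i})^2.
\]
Dividing by $p_i + q_i$, the claim reduces (term by term) to controlling the ratio
\[
r_i := \frac{(\sqrt{p_i} + \sqrt{q_i})^2}{p_i + q_i} = 1 + \frac{2\sqrt{p_i q_i}}{p_i + q_i}.
\]

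The first step is to observe that $r_i \geq 1$ trivially (since $p_i, q_i \geq 0$). The second step uses the AM-GM inequality $2\sqrt{p_i q_i} \leq p_i + q_i$, which immediately yields $r_i \leq 2$. Combining these bounds on $r_i$ and multiplying back by $(\sqrt{p_i} - \sqrt{q_i})^2$ gives
\[
(\sqrt{p_i} - \sqrt{q_i})^2 \;\leq\; \frac{(p_i - q_i)^2}{p_i + q_i} \;\leq\; 2(\sqrt{p_i} - \sqrt{q_i})^2
\]
for each $i$ (with the convention that summands with $p_i = q_i = 0$ are $0$, so there is no division-by-zero issue).

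The final step is to sum over $i \in [n]$ and recall from the definition that $\sum_i (\sqrt{p_i} - \sqrt{q_i})^2 = 2\dh^2(p,q)$. This turns the pointwise inequality into the desired
\[
2\dh^2(p,q) \;\leq\; \sum_{i=1}^n \frac{(p_i - q_i)^2}{p_i + q_i} \;\leq\; 4\dh^2(p,q).
\]
There is no real obstacle here; the only thing to be careful about is handling coordinates where $p_i + q_i = 0$, which contribute zero to both sides by convention.
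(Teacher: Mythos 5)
Your proof is correct and follows essentially the same route as the paper's: both rewrite $(p_i-q_i)^2$ via the factorization through $(\sqrt{p_i}\pm\sqrt{q_i})$ and then use the two-sided bound $(p_i+q_i)\le(\sqrt{p_i}+\sqrt{q_i})^2\le 2(p_i+q_i)$ (your AM--GM step is exactly the upper half of this). No gaps.
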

\begin{proof}
Expanding the Hellinger-squared distance,
\begin{equation*}
\dh^2(p,q) = \frac12 \sum_{i=1}^n (\sqrt{p_i} - \sqrt{q_i})^2 = \frac12 \sum_{i=1}^n \frac{(p_i-q_i)^2}{(\sqrt{p_i}+\sqrt{q_i})^2}.
\end{equation*}
The fact now follows because $(p_i + q_i) \leq (\sqrt{p_i}+\sqrt{q_i})^2 \leq 2(p_i + q_i)$.
\end{proof}
\noindent
The quantity $\sum_{i=1}^n (p_i-q_i)^2/(p_i+q_i)$ is sometimes called the \emph{triangle distance}.
However, we see here that it is essentially the Hellinger distance (up to constant factors).

\begin{proposition}\label{prop:mixing}
Given a number $\delta \in [0,1]$ and a discrete distribution~$r  = (r_1, \ldots, r_n)$, define
\begin{equation*}
r^{+\delta} := (1-\delta) \cdot r + \delta \cdot (\tfrac{1}{n} , \ldots, \tfrac{1}{n}).
\end{equation*}
Then given two discrete distributions $p=(p_1, \ldots, p_n)$ and $q=(q_1, \ldots, q_n)$,
\begin{equation*}
\dtv(p^{+\delta}, q^{+\delta}) = (1-\delta) \dtv(p,q),
\quad
\dlt(p^{+\delta},q^{+\delta})= (1-\delta) \dlt(p,q).
\end{equation*}
In addition, $\dh(p^{+\delta},q^{+\delta}) \geq \dh(p,q) - 2\sqrt{\delta}$.
\end{proposition}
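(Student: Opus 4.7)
The plan is as follows. First I would observe the key algebraic identity: for each coordinate $i$,
\begin{equation*}
p^{+\delta}_i - q^{+\delta}_i = (1-\delta)(p_i - q_i) + \delta\bigl(\tfrac{1}{n} - \tfrac{1}{n}\bigr) = (1-\delta)(p_i - q_i).
\end{equation*}
The total variation and $\ell_2$ statements then fall out immediately: $\dtv(p^{+\delta}, q^{+\delta}) = \tfrac{1}{2}\sum_i |p^{+\delta}_i - q^{+\delta}_i| = (1-\delta) \dtv(p,q)$, and similarly for $\dlt$ by pulling $(1-\delta)^2$ out of the sum of squares.

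The Hellinger bound is the interesting part. My plan is to invoke the triangle inequality for the Hellinger metric along the chain $p \to p^{+\delta} \to q^{+\delta} \to q$, which gives
\begin{equation*}
\dh(p,q) \leq \dh(p, p^{+\delta}) + \dh(p^{+\delta}, q^{+\delta}) + \dh(q^{+\delta}, q),
\end{equation*}
so it suffices to show $\dh(r, r^{+\delta}) \leq \sqrt{\delta}$ for any distribution $r$. Rearranging yields exactly $\dh(p^{+\delta}, q^{+\delta}) \geq \dh(p,q) - 2\sqrt{\delta}$.

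To establish $\dh(r, r^{+\delta}) \leq \sqrt{\delta}$, I would work with the Bhattacharyya coefficient: since $\dh^2(r, r^{+\delta}) = 1 - \sum_i \sqrt{r_i \, r^{+\delta}_i}$, it is enough to show $\sum_i \sqrt{r_i \, r^{+\delta}_i} \geq 1 - \delta$. For each term, concavity of $\sqrt{\cdot}$ (Jensen's inequality applied to the convex combination $r^{+\delta}_i = (1-\delta) r_i + \delta \cdot \tfrac{1}{n}$) gives
\begin{equation*}
\sqrt{r^{+\delta}_i} \geq (1-\delta)\sqrt{r_i} + \delta \cdot \tfrac{1}{\sqrt{n}},
\end{equation*}
so that $\sqrt{r_i \, r^{+\delta}_i} \geq (1-\delta) r_i + \delta \sqrt{r_i/n}$. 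Summing over $i$ and dropping the nonnegative second term yields $\sum_i \sqrt{r_i \, r^{+\delta}_i} \geq 1 - \delta$, as desired.

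The only mild subtlety is making sure I use the concavity inequality in the correct direction (lower-bounding $\sqrt{r^{+\delta}_i}$) so that the Bhattacharyya coefficient is lower-bounded and hence the Hellinger distance upper-bounded; the rest is bookkeeping. No structural obstacles; the TV/$\ell_2$ parts are one-liners and the Hellinger part reduces cleanly to the Jensen step above.
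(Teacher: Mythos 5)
Your proof is correct and follows the same overall route as the paper's: the TV and $\ell_2$ claims are immediate from the coordinatewise identity, and the Hellinger bound comes from the triangle inequality along $p \to p^{+\delta} \to q^{+\delta} \to q$ reduced to showing $\dh(r, r^{+\delta}) \leq \sqrt{\delta}$. The only (minor) difference is in that last sub-step, where you lower-bound the Bhattacharyya coefficient via Jensen's inequality, while the paper instead uses $\dh^2(r,r^{+\delta}) \leq \dtv(r,r^{+\delta}) \leq \delta$; both give the same constant.
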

\begin{proof}
The statements for total variation and~$\ell_2$ distance are immediate.  As for the Hellinger distance, we have by the triangle inequality that
\begin{equation*}
\dh(p,q) \leq \dh(p,p^{+\delta}) + \dh(p^{+\delta},q^{+\delta}) + \dh(q^{+\delta},q).
\end{equation*}
We can bound the first term by
\begin{equation*}
\dh^2(p,p^{+\delta})
\leq \dtv( p, p^{+\delta})
= \tfrac{1}{2} \cdot\Vert \delta \cdot p - \delta \cdot (\tfrac{1}{n}, \ldots, \tfrac{1}{n}) \Vert_1
\leq \delta,
\end{equation*}
where the last step is by the triangle inequality,
and a similar argument bounds the third term by~$\sqrt{\delta}$ as well.
Thus, $\dh(p^{+\delta},q^{+\delta}) \geq \dh(p,q) - 2\sqrt{\delta}$.
\end{proof}
A similar technique was employed in~\cite{Goldreich16}.

At times, our algorithms will employ \emph{Poisson sampling}.
Instead of taking $m$ samples from a distribution $p$, we instead take $\mathrm{Poisson}(m)$ samples.
As a result, letting $N_i$ be the number of occurences of symbol $i$, all $N_i$ will be independent and distributed as $\mathrm{Poisson}(m \cdot p_i)$.
We note that this method of sampling is for purposes of analysis -- concentration bounds imply that $\mathrm{Poi}(m) = O(m)$ with high probability, so such an algorithm can be converted to one with a fixed budget of samples at a constant-factor increase in the sample complexity.

\section{Upper Bounds for Identity Testing}
\label{sec:ones-ub}
In this section, we prove the following theorems for identity testing.
\begin{theorem}\label{thm:ones-csq-h}
There exists an algorithm for identity testing between $p$ and $q$ distinguishing the cases:
\begin{itemize}
\item $\dxs(p,q) \leq \ve^2$;
\item $\dh(p,q) \geq \ve$.
\end{itemize}
The algorithm uses $O\left(\frac{n^{1/2}}{\ve^2}\right)$ samples.
\end{theorem}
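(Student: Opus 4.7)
The plan is to adapt the $\chi^2$-vs-total-variation tester of Acharya, Daskalakis, and Kamath~\cite{AcharyaDK15} with two modifications tailored to Hellinger distance. Draw $m = \Theta(\sqrt{n}/\ve^2)$ samples from $p$ (Poissonized, so $N_i \sim \mathrm{Poi}(m p_i)$ independently), split the domain into ``heavy'' elements $H = \{i : q_i \geq \tau\}$ and ``light'' elements $L = [n]\setminus H$ for a threshold $\tau = c\ve^2/n$, and run two tests in parallel. The \emph{main} test computes the standard ADK statistic
$$Z = \sum_{i \in H} \frac{(N_i - m q_i)^2 - N_i}{q_i}$$
and accepts only if $Z$ is below a Chebyshev-calibrated threshold of order $m^2 \ve^2$. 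The \emph{preliminary} test accepts only if $\sum_{i \in L} N_i - m\, q(L)$ is below a threshold of order $m\ve^2$; its role is exactly the ``new issue'' flagged in the introduction, namely detecting when $p$ and $q$ differ on the light elements in a way that is invisible to TV-style analyses but costly in Hellinger. The threshold $\tau = c\ve^2/n$ is ``less aggressive'' than ADK's $\ve/n$: we keep more domain elements in $H$, at the price of this extra light-mass check.

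Completeness under $\dxs(p,q) \leq \ve^2$ is essentially free. The $\chi^2$-contribution restricted to $H$ is at most $\ve^2$, so the standard variance bound for $Z$ makes the main test accept, and $|p(L) - q(L)| \leq \dtv(p,q) \leq \ve$ (via Proposition~\ref{prop:distanceinequalities}) is comfortably below the preliminary threshold. For soundness, suppose $\dh(p,q) \geq \ve$, so $2\dh^2(p,q) \geq 2\ve^2$; split this sum over $L$ and $H$, and at least one piece is at least $\ve^2$. If the $H$-piece dominates, then Proposition~\ref{prp:didnt-know-it-was-hellinger} combined with $(\sqrt{p_i}+\sqrt{q_i})^2 \geq q_i$ gives $\sum_{i \in H}(p_i - q_i)^2/q_i \geq 2\dh^2|_H \gtrsim \ve^2$, which the main test detects. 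If instead the $L$-piece dominates, the bound $(\sqrt{p_i} - \sqrt{q_i})^2 \leq p_i + q_i$ gives $p(L) + q(L) \geq \ve^2$; but by construction $q(L) \leq c\ve^2$ (take $c$ small), so $p(L) - q(L) = \Omega(\ve^2)$, exactly the signal the preliminary test is designed to catch.

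The main obstacle is calibrating $\tau$ and proving the preliminary variance bound with only $m = O(\sqrt{n}/\ve^2)$ samples. Since $N_L := \sum_{i \in L} N_i \sim \mathrm{Poi}(m\, p(L))$, its standard deviation is $\sqrt{m\, p(L)}$, and for the test to see a signal of size $m\ve^2$ we need $\sqrt{m\, p(L)} \ll m\ve^2$, i.e.\ $p(L) \ll m\ve^4 = \ve^2\sqrt{n}$. This holds trivially (since $p(L) \leq 1$) whenever $\ve \gtrsim n^{-1/4}$, which is exactly the regime where the $\sqrt{n}/\ve^2$ bound is nontrivial. Once $\tau$ and the two thresholds are fixed so that these estimates compose with the standard variance analysis of $Z$ (where the contribution per heavy coordinate is controlled by $q_i \geq \tau$), completeness and soundness follow by applying Chebyshev's inequality to the two independent statistics and a union bound.
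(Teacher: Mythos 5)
Your algorithm is structurally the same as the paper's: split the domain at $q_i \gtrsim \ve^2/n$, run a preliminary test on the total mass of the light part, run the ADK-style $\chi^2$ statistic on the heavy part, and conclude by Chebyshev. Your soundness case analysis (either the Hellinger mass concentrates on $H$, where $(p_i-q_i)^2/q_i \ge (\sqrt{p_i}-\sqrt{q_i})^2$ pushes up $\E[Z]$, or it concentrates on $L$, forcing $p(L)-q(L) = \Omega(\ve^2)$) is correct and is an equivalent reorganization of the paper's Lemma~\ref{lem:means}.

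There is, however, a genuine gap in the completeness analysis of the preliminary test. You claim that $\dxs(p,q)\le\ve^2$ implies the preliminary test accepts because $|p(L)-q(L)| \le \dtv(p,q) \le \ve$ is ``comfortably below'' the threshold. But your threshold for $\sum_{i\in L} N_i - m\,q(L)$ is of order $m\ve^2$, i.e.\ you need $p(L)-q(L) = O(\ve^2)$, and the bound $\le \ve$ you derive is a factor $1/\ve$ too large. As written, a $\chi^2$-close distribution with $p(L)\approx\ve$ would (per your estimates) be wrongly rejected, and your later fallback ``$p(L)\le 1$ suffices when $\ve\gtrsim n^{-1/4}$'' does not repair this, since the issue is the \emph{mean} of $N_L$, not its fluctuation. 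The fix is the Cauchy--Schwarz step that the paper uses (in contrapositive form): since every $i\in L$ has $q_i \le c\ve^2/n$, one has $\bigl(p(L)-q(L)\bigr)^2 \le q(L)\cdot \dxs(p_L,q_L) \le c\ve^2\cdot\ve^2$, hence $|p(L)-q(L)|\le\sqrt{c}\,\ve^2$; equivalently, any $p$ with $p(L)-q(L)=\Omega(\ve^2)$ already has $\dxs(p,q)>\ve^2$ and may be safely rejected. This inequality is exactly where the choice $\tau=\Theta(\ve^2/n)$ earns its keep, and without it the two halves of your argument do not meet. Separately, the variance bound $\Var[Z]=O(\E[Z]^2)$ (resp.\ $O(m^4\ve^4/n\cdot\ldots)$ in the completeness case) with a small enough constant is asserted rather than proved; it does hold with $q_i\ge c\ve^2/n$, but it requires the full computation of the paper's Lemma~\ref{lem:vars} and is not automatic from ADK, whose cutoff is $\Theta(\ve/n)$.
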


\begin{theorem}\label{thm:ones-tv}
There exists an algorithm for identity testing between $p$ and $q$ distinguishing the cases:
\begin{itemize}
\item $\dlt(p,q) \leq \frac{\ve}{\sqrt{n}}$;
\item $\dtv(p,q) \geq \ve$.
\end{itemize}
The algorithm uses $O\left(\frac{n^{1/2}}{\ve^2}\right)$ samples.
\end{theorem}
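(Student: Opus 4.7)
The plan is to reduce to a $\chi^2$-vs-total-variation identity test in the style of~\cite{AcharyaDK15}, after first ``mixing'' with the uniform distribution $U_n$, following the strategy that Goldreich~\cite{Goldreich16} employed for uniformity testing. Mixing is needed because the $\ell_2$-closeness hypothesis $\dlt(p,q) \leq \ve/\sqrt{n}$ does not by itself control $\dxs(p,q)$---that quantity can be arbitrarily large whenever $q$ places tiny mass on some element---whereas well-conditioning $q$ makes such a bound automatic.

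Fix a constant $\delta \in (1/4,1)$, say $\delta=1/2$, and work with the mixed distributions $\tilde p = p^{+\delta}$ and $\tilde q = q^{+\delta}$ from Proposition~\ref{prop:mixing}. Any $m$-sample tester for $(\tilde p,\tilde q)$ can be simulated with $m$ samples from $p$, by independently replacing each sample with a uniformly random element of $[n]$ with probability $\delta$. By Proposition~\ref{prop:mixing}, $\dtv(\tilde p,\tilde q) = (1-\delta)\dtv(p,q)$ and $\dlt(\tilde p,\tilde q) = (1-\delta)\dlt(p,q)$, and every entry of $\tilde q$ is now at least $\delta/n$. Using $\tilde q_i \geq \delta/n$,
\begin{equation*}
\dxs(\tilde p,\tilde q)
 = \sum_i \tfrac{(\tilde p_i - \tilde q_i)^2}{\tilde q_i}
 \leq \tfrac{n}{\delta}\,\dlt^2(\tilde p,\tilde q)
 = \tfrac{n(1-\delta)^2}{\delta}\,\dlt^2(p,q)
 \leq \tfrac{(1-\delta)^2}{\delta}\,\ve^2,
\end{equation*}
so the yes case becomes a $\chi^2$-closeness statement on $(\tilde p,\tilde q)$. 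In the no case, Cauchy--Schwarz (as in Proposition~\ref{prop:distanceinequalities}) gives $\dxs(\tilde p,\tilde q) \geq 4\,\dtv^2(\tilde p,\tilde q) = 4(1-\delta)^2\ve^2$. The condition $\delta > 1/4$ ensures $(1-\delta)^2/\delta < 4(1-\delta)^2$, so the two hypotheses differ by a constant factor in $\chi^2$ on the mixed instance.

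We then run the $\chi^2$-statistic tester of~\cite{AcharyaDK15} on the mixed problem with $m = C\sqrt{n}/\ve^2$ Poisson samples for a sufficiently large constant $C$. Letting $N_i$ count samples of $\tilde p$ at $i$, the statistic $Z = \sum_i \frac{(N_i - m\tilde q_i)^2 - N_i}{m\tilde q_i}$ has expectation $m\cdot\dxs(\tilde p,\tilde q)$, and is thresholded between the two values above. The main obstacle is the variance analysis: the dominant term is $4m\sum_i \tilde p_i(\tilde p_i - \tilde q_i)^2/\tilde q_i^2 \leq 4m\cdot(\max_i \tilde p_i/\tilde q_i)\cdot\dxs(\tilde p,\tilde q)$, and the naive bound $\max_i \tilde p_i/\tilde q_i \leq 2n$ would force $m=\Omega(n/\ve^2)$, which is too large. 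This is averted by combining the pointwise inequality $(\tilde p_i - \tilde q_i)^2/\tilde q_i \leq \dxs(\tilde p,\tilde q)$ with $\tilde q_i \geq \delta/n$, which yields $\max_i \tilde p_i/\tilde q_i \leq 1 + O(\sqrt n\,\ve)$ in the yes case; the variance then becomes $O(n + m\sqrt n\,\ve^3)$, well below the required $\Omega(m^2\ve^4)$ for $m = C\sqrt n/\ve^2$ with $C$ sufficiently large. An analogous bookkeeping, using the larger no-case value of $\dxs(\tilde p,\tilde q)$ to absorb the correspondingly larger variance, handles the no-case tail, and Chebyshev's inequality finishes the argument.
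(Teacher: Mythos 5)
Your proposal is correct and follows essentially the same route as the paper: mix $p$ and $q$ with the uniform distribution at rate $\delta=1/2$ so that $\tilde q_i \geq \delta/n$, observe that the $\ell_2$-closeness hypothesis then translates into $\dxs(\tilde p,\tilde q) = O(\ve^2)$, and run the $\chi^2$-statistic tester of Acharya--Daskalakis--Kamath on the mixed pair. The only (cosmetic) difference is that you re-derive the far-case separation via $\dxs \geq 4\dtv^2$ and redo the variance bookkeeping explicitly, whereas the paper verifies only the new mean bound in the close case and delegates the remainder to the analysis of Algorithm~1 of~\cite{AcharyaDK15}.
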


\begin{theorem}\label{thm:ones-h}
There exists an algorithm for identity testing between $p$ and $q$ distinguishing the cases:
\begin{itemize}
\item $\dlt(p,q) \leq \frac{\ve^2}{\sqrt{n}}$;
\item $\dh(p,q) \geq \ve$.
\end{itemize}
The algorithm uses $O\left(\frac{n^{1/2}}{\ve^2}\right)$ samples.
\end{theorem}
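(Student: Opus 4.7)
The plan is to reduce the problem to the $\chi^2$-vs-Hellinger identity tester of Theorem~\ref{thm:ones-csq-h} via the ``mixing'' strategy of Proposition~\ref{prop:mixing}. Fix a sufficiently small constant $c>0$ and set $\delta := c\ve^2$. The algorithm simulates samples from the mixed distribution $p^{+\delta} := (1-\delta)p + \delta U_n$ by flipping, for each sample from~$p$, a $\delta$-biased coin: on heads, return a uniformly random element of $[n]$; on tails, return the sample from $p$. Since $q$ is given explicitly, $q^{+\delta}$ can be computed explicitly. We then invoke (the tester from) Theorem~\ref{thm:ones-csq-h} on $(p^{+\delta}, q^{+\delta})$ at scale $\ve' = \Theta(\ve)$.

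Correctness rests on two estimates. For \emph{completeness}, mixing makes $q^{+\delta}$ well-conditioned: $q_i^{+\delta} \geq \delta/n$ for every $i$. Combined with $p_i^{+\delta} - q_i^{+\delta} = (1-\delta)(p_i - q_i)$, this yields
\begin{equation*}
\dxs(p^{+\delta}, q^{+\delta}) \;=\; \sum_i \frac{(1-\delta)^2 (p_i - q_i)^2}{q_i^{+\delta}} \;\leq\; \frac{n}{\delta}\,\dlt^2(p, q) \;\leq\; \frac{\ve^4}{\delta} \;=\; \frac{\ve^2}{c}.
\end{equation*}
For \emph{soundness}, Proposition~\ref{prop:mixing} gives $\dh(p^{+\delta}, q^{+\delta}) \geq \ve - 2\sqrt{\delta} = (1-2\sqrt{c})\ve = \Omega(\ve)$. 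Thus, at scale $\Theta(\ve)$, both sides of the hypothesis of Theorem~\ref{thm:ones-csq-h} are met up to constants, and the sample complexity remains $O(\sqrt{n}/\ve^{\prime 2}) = O(\sqrt{n}/\ve^2)$. Note that samples from~$p^{+\delta}$ are cheap to simulate, so no blow-up is incurred at the sampling stage.

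The main obstacle is constant-factor reconciliation: a strict black-box invocation of Theorem~\ref{thm:ones-csq-h} with parameter $\ve' = (1-2\sqrt{c})\ve$ would require $\ve^2/c \leq \ve'^2$, i.e.\ $1/c \leq (1-2\sqrt{c})^2$, which has no positive solution. I would therefore reopen the proof of Theorem~\ref{thm:ones-csq-h} and observe that its underlying $\chi^2$-type statistic separates the completeness and soundness cases as long as the $\chi^2$-upper bound and the squared Hellinger lower bound are within \emph{some} absolute constant factor; the well-conditioning $q_i^{+\delta} \geq \delta/n = \Theta(\ve^2/n)$ induced by mixing is exactly what controls the variance of that statistic at the $O(\sqrt{n}/\ve^2)$ level. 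Intuitively, mixing plays the same role for Theorem~\ref{thm:ones-h} as the ``less aggressive pruning of light elements'' plays in the proof of Theorem~\ref{thm:ones-csq-h}, uniformly across the domain, thereby also eliminating the need for the auxiliary light-element test.
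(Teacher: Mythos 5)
There is a genuine gap here, and it is exactly at the point you flag and then wave past. The mixing parameter $\delta$ is squeezed from both sides with no room left: soundness needs $\dh(p^{+\delta},q^{+\delta}) \geq \ve - 2\sqrt{\delta} = \Omega(\ve)$, forcing $\delta \leq \ve^2/16$, while your completeness bound is only $\dxs(p^{+\delta},q^{+\delta}) \leq \ve^4/\delta \geq 16\ve^2$. Reopening the proof of Theorem~\ref{thm:ones-csq-h} does not rescue this, because the statistic's mean \emph{is} $m\cdot\dxs$ (restricted to heavy elements): its guaranteed upper bound in the close case is $m\ve^4/\delta$, its guaranteed lower bound in the far case is $2m(\ve-2\sqrt{\delta})^2$ (via $\dxs \geq 2\dh^2$), and writing $\delta = c\ve^2$ the required inequality $1 < 2c(1-2\sqrt{c})^2$ fails for every $c>0$ (the right side is at most $1/32$). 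So no threshold separates the two cases, and this is not an artifact of loose bounds: the close instance $q=(1,0,\dots,0)$, $p=(1-a,a,0,\dots,0)$ with $a = \ve^2/\sqrt{2n}$ has $\dlt(p,q)=\ve^2/\sqrt{n}$ (and $\dh^2(p,q)\approx a/2 \ll \ve^2$, so it must be accepted) yet $\dxs(p^{+\delta},q^{+\delta}) = (1-\delta)^2\ve^4/(2\delta) \geq 8(1-\delta)^2\ve^2$, whereas the Paninski-type far instance ($q$ uniform, $p_i=(1\pm 2\sqrt{2}\ve)/n$, so $\dh(p,q)=\ve$) has $\dxs(p^{+\delta},q^{+\delta}) = 8(1-\delta)^2\ve^2$. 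The close instance's mean is at least the far instance's mean for every admissible $\delta$, so the reduction fails outright; your remark that mixing ``eliminates the need for the auxiliary light-element test'' is a symptom of the problem rather than a bonus.

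The paper's proof takes a different and essentially unavoidable route: it does not perturb the distributions at all, but reuses Algorithm~\ref{alg:testing} verbatim and only re-proves the completeness bound. On $\mathcal{A}=\{i: q_i \geq c_1\ve^2/n\}$ one has $\E[Z'] = m_2\,\dxs(p_{\mathcal A},q_{\mathcal A}) \leq m_2 \cdot \frac{n}{c_1\ve^2}\dlt^2(p,q) = O(m_2\ve^2)$, and the early rejection step is harmless since $p(\mathcal{\bar A}) = \Omega(\ve^2)$ forces $\dlt(p,q) = \Omega(\ve^2/\sqrt{n})$ by Proposition~\ref{prop:ltinequalities}. The structural point you are missing is that pruning degrades the far-case guarantee only \emph{additively}: by the exact decomposition $\dh^2(p,q) = \dh^2(p_{\mathcal A},q_{\mathcal A}) + \dh^2(p_{\mathcal{\bar A}},q_{\mathcal{\bar A}})$ together with the light-element test, the heavy part retains $(1-\tfrac{c_1+c_2}{2})\ve^2$ of the squared Hellinger distance, so $c_1$ and the constant in the $\dlt$ hypothesis can be taken small independently of soundness. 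Mixing instead pays a multiplicative loss $(1-2\sqrt{\delta}/\ve)^2$ in squared Hellinger distance, which is fatal here. (Your strategy is exactly the one the paper uses for Theorem~\ref{thm:ones-tv}, where it works because $\dtv$ shrinks only by the factor $(1-\delta)$ under mixing, so one may take $\delta=1/2$; Hellinger is precisely the case where the mixing trick breaks.)
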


We prove Theorem~\ref{thm:ones-csq-h} in Section~\ref{sec:id-csq-h}, and Theorems~\ref{thm:ones-tv} and~\ref{thm:ones-h} in Section~\ref{sec:id-lt}.

\subsection{Identity Testing with Hellinger Distance and $\chi^2$-Tolerance}
\label{sec:id-csq-h}

We prove Theorem~\ref{thm:ones-csq-h} by analyzing Algorithm~\ref{alg:testing}.
We will set $c_1 = \frac{1}{100}, c_2 = \frac{6}{25}$, and let $C$ be a sufficiently large constant.
\begin{algorithm}[h]
\caption{$\chi^2$-close versus Hellinger-far testing algorithm}\label{alg:testing}
\begin{algorithmic}[1]
\State \textbf{Input:} $\ve$; an explicit distribution $q$; sample access to a distribution $p$
\State Implicitly define $\mathcal{A} \leftarrow \{i:q_i \geq c_1\ve^2/n\}$, $\mathcal{\bar A} \leftarrow [n] \setminus \mathcal{A}$
\State Let $\hat p$ be the empirical distribution\footnote{The empirical distribution is defined by taking a set of samples and normalizing the counts such that the result forms a probability distribution.} from drawing $m_1 = \Theta(1/\ve^2)$ samples from $p$
\If {$\hat p(\mathcal{\bar A}) \geq \frac34 c_2\ve^2$} \label{ln:light-test}
\State \Return \reject \label{ln:early-reject}
\EndIf
\State Draw a multiset $S$ of $\mathrm{Poisson}(m_2)$ samples from $p$, where $m_2 = C\sqrt{n}/\ve^2$
\State Let $N_i$ be the number of occurrences of the $i$th domain element in $S$
\State Let $S'$ be the set of domain elements observed in $S$
\State $Z \leftarrow \sum_{i \in S' \cap \mathcal{A}} \frac{(N_i - m_2q_i)^2 - N_i}{m_2q_i} + m_2 (1 - q(S' \cap \mathcal{A}))$ \label{ln:statistic}
\If {$Z \leq \frac{3}{2}m_2\ve^2$}
\State \Return \accept
\Else 
\State \Return \reject
\EndIf 
\end{algorithmic}
\end{algorithm}

We note that the sample and time complexity are both $O(\sqrt{n}/\ve^2)$.
We draw $m_1 + m_2 = \Theta(\sqrt{n}/\ve^2)$ samples total.
All steps of the algorithm only involve inspecting domain elements where a sample falls, and it runs linearly in the number of such elements.
Indeed, Step~\ref{ln:statistic} of the algorithm is written in an unusual way in order to ensure the running time of the algorithm is linear.

We first analyze the test in Step \ref{ln:light-test} of the algorithm.
Folklore results state that with probability at least $99/100$, this preliminary test will reject any $p$ with $p(\mathcal{\bar A}) \geq c_2 \ve^2$, it will not reject any $p$ with $p(\mathcal{\bar A}) \leq \frac{c_2}{2} \ve^2$, and behavior for any other $p$ is arbitrary.
Condition on the event the test does not reject for the remainder of the proof.
Note that since both thresholds here are $\Theta(\ve^2)$, it only requires $m_1 = \Theta(1/\ve^2)$ samples, rather than the ``non-extreme'' regime, where we would require $\Theta(1/\ve^4)$ samples.

\begin{remark}
We informally refer to this ``extreme'' versus ``non-extreme'' regime in distribution testing.
To give an example of what we mean in these two cases, consider distinguishing $Ber(1/2)$ from $Ber(1/2 + \ve)$.
The complexity of this problem is $\Theta(1/\ve^2)$, and we consider this to be in the non-extreme regime.
On the other hand, distinguishing $Ber(\ve)$ from $Ber(2\ve)$ has a sample complexity of $\Theta(1/\ve)$, and we consider this to be in the extreme regime.
\end{remark}

We justify that any $p$ which may be rejected in Step \ref{ln:early-reject} (i.e., any $p$ such that $p(\mathcal{\bar A}) > \frac{c_2}{2} \ve^2$) has the property that $\dxs(p,q) > \ve^2$ (in other words, we do not wrongfully reject any $p$).

Consider a $p$ such that $p(\mathcal{\bar A}) \geq \frac{c_2}{2}\ve^2$.
Note that $\dxs(p, q) \geq \dxs(p_\mathcal{\bar A}, q_\mathcal{\bar A})$, which we lower bound as follows:
\begin{align*}
\dxs(p_\mathcal{\bar A}, q_\mathcal{\bar A})
&= \sum_{i \in \mathcal{\bar A}} \frac{(p_i - q_i)^2}{q_i} \\
&\geq \frac{n}{c_1 \ve^2} \sum_{i \in \mathcal{\bar A}} (p_i - q_i)^2 \\
&\geq \frac{n}{c_1 \ve^2} \cdot \frac{1}{n} \left(\sum_{i \in \mathcal{\bar A}} (p_i - q_i) \right)^2  \\
&\geq \frac{n}{c_1 \ve^2} \frac{\ve^4\left(\frac{c_2}{2} - c_1\right)^2}{n} \\
&= \frac{\left(\frac{c_2}{2} - c_1\right)^2}{c_1}\ve^2
\end{align*}
The first inequality is by the definition of $\mathcal{\bar A}$, the second is by Cauchy-Schwarz, and the third is since $p(\mathcal{\bar A}) \geq \frac{c_2}{2}\ve^2$ and $q(\mathcal{\bar A}) \leq c_1\ve^2$.
By our setting of $c_1$ and $c_2$, this implies that $\dxs(p, q) > \ve^2$, and we are not rejecting any $p$ which should be accepted.

For the remainder of the proof, we will implicitly assume that $p(\mathcal{\bar A}) \leq c_2 \ve^2$.

Let
$$Z' = \sum_{i \in \mathcal{A}} \frac{(N_i - m_2 q_i)^2 - N_i}{m_2q_i}.$$

Note that the statistic $Z$ can be rewritten as follows:
\begin{align*}
Z &= \sum_{i \in S' \cap \mathcal{A}} \frac{(N_i - m_2q_i)^2 - N_i}{m_2q_i} + m_2 (1 - q(S' \cap \mathcal{A})) \\
  &= \sum_{i \in S' \cap \mathcal{A}} \frac{(N_i - m_2q_i)^2 - N_i}{m_2q_i} + \sum_{i \in \mathcal{A} \setminus S'} m_2 q_i + m_2  q(\mathcal{\bar A}) \\
  &= \sum_{i \in S' \cap \mathcal{A}} \frac{(N_i - m_2q_i)^2 - N_i}{m_2q_i} + \sum_{i \in \mathcal{A} \setminus S'} \frac{(N_i - m_2q_i)^2 - N_i}{m_2q_i} + m_2  q(\mathcal{\bar A}) \\
  &= Z' + m_2 q(\mathcal{\bar A})
\end{align*}

We proceed by analyzing $Z'$.
First, note that it has the following expectation and variance:
\begin{align}
\E[Z'] &= m_2 \cdot \sum_{i \in \mathcal{A}} \frac{(p_i - q_i)^2}{q_i} = m_2 \cdot \dxs(p_\mathcal{A}, q_\mathcal{A}) \label{eqn:mean} \\
\Var[Z'] &= \sum_{i \in \mathcal{A}} \left[2\frac{p_i^2}{q_i^2} + 4m_2 \cdot \frac{p_i \cdot (p_i - q_i)^2}{q_i^2}\right] \label{eqn:variance}
\end{align}
These properties are proven in Section A of~\cite{AcharyaDK15}.

We require the following two lemmas, which state that the mean of the statistic is separated in the two cases, and that the variance is bounded.
The proofs largely follow the proofs of two similar lemmas in~\cite{AcharyaDK15}.
\begin{lemma}
\label{lem:means}
If $\dxs(p,q) \leq \ve^2$, then $\E[Z'] \leq m_2 \ve^2$. 
If $\dh(p,q) \geq \ve$, then $\E[Z'] \geq (2 - c_1 - c_2)m_2 \ve^2$.
\end{lemma}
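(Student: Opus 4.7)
The plan is to use the formula $\E[Z'] = m_2 \cdot \dxs(p_\mathcal{A}, q_\mathcal{A})$ from Equation~\eqref{eqn:mean} and then bound $\dxs(p_\mathcal{A}, q_\mathcal{A})$ in each of the two regimes. The completeness direction is immediate: since every term in the $\chi^2$-sum is nonnegative, restricting to a subset only decreases the value, so $\dxs(p_\mathcal{A}, q_\mathcal{A}) \leq \dxs(p,q) \leq \ve^2$, giving $\E[Z'] \leq m_2 \ve^2$.

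The soundness direction requires more care because we want to lower-bound the restricted $\chi^2$ by a quantity expressed in terms of the \emph{full} Hellinger distance. First I would note the pointwise inequality
\begin{equation*}
\frac{(p_i-q_i)^2}{q_i} = \frac{(\sqrt{p_i}+\sqrt{q_i})^2}{q_i}\cdot (\sqrt{p_i}-\sqrt{q_i})^2 \;\geq\; (\sqrt{p_i}-\sqrt{q_i})^2,
\end{equation*}
since $(\sqrt{p_i}+\sqrt{q_i})^2 \geq q_i$. Summing this over $i \in \mathcal{A}$ and using $\sum_i (\sqrt{p_i}-\sqrt{q_i})^2 = 2\dh^2(p,q) \geq 2\ve^2$ yields
\begin{equation*}
\dxs(p_\mathcal{A},q_\mathcal{A}) \;\geq\; \sum_{i\in\mathcal{A}} (\sqrt{p_i}-\sqrt{q_i})^2 \;\geq\; 2\ve^2 - \sum_{i\in\mathcal{\bar A}} (\sqrt{p_i}-\sqrt{q_i})^2.
\end{equation*}

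The remaining task is to bound the Hellinger-type contribution of the light set $\mathcal{\bar A}$ by $(c_1+c_2)\ve^2$. Here I would use the trivial estimate $(\sqrt{p_i}-\sqrt{q_i})^2 \leq p_i + q_i$, giving
\begin{equation*}
\sum_{i\in\mathcal{\bar A}}(\sqrt{p_i}-\sqrt{q_i})^2 \;\leq\; p(\mathcal{\bar A}) + q(\mathcal{\bar A}).
\end{equation*}
By definition of $\mathcal{A}$ we have $q_i < c_1\ve^2/n$ for $i \in \mathcal{\bar A}$, so $q(\mathcal{\bar A}) \leq c_1\ve^2$; and conditional on passing the preliminary test in Step~\ref{ln:light-test} we have $p(\mathcal{\bar A}) \leq c_2\ve^2$. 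Combining everything gives $\dxs(p_\mathcal{A},q_\mathcal{A}) \geq (2-c_1-c_2)\ve^2$, hence $\E[Z'] \geq (2-c_1-c_2)m_2\ve^2$.

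The main (mild) obstacle is recognizing that the correct lower bound to pipe through is not $\dxs(p,q) \geq 2\dh^2(p,q)$ applied globally (which says nothing about the restriction to $\mathcal{A}$), but rather the \emph{pointwise} bound $\frac{(p_i-q_i)^2}{q_i} \geq (\sqrt{p_i}-\sqrt{q_i})^2$ together with the observation that the Hellinger mass on $\mathcal{\bar A}$ is controlled by the total probability mass $p(\mathcal{\bar A})+q(\mathcal{\bar A})$, which is why the preliminary pruning test was tuned to make $p(\mathcal{\bar A}) \leq c_2\ve^2$ in the first place.
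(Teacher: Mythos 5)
Your proposal is correct and follows essentially the same route as the paper: decompose the squared Hellinger distance over $\mathcal{A}$ and $\mathcal{\bar A}$, bound the light-set contribution by $\tfrac12\bigl(p(\mathcal{\bar A})+q(\mathcal{\bar A})\bigr)\leq \tfrac{c_1+c_2}{2}\ve^2$ using the preliminary test and the definition of $\mathcal{A}$, and then pass from the restricted Hellinger distance to the restricted $\chi^2$-divergence via $\dxs(p_\mathcal{A},q_\mathcal{A})\geq 2\dh^2(p_\mathcal{A},q_\mathcal{A})$. The only cosmetic difference is that you rederive that last inequality pointwise rather than invoking Proposition~\ref{prop:distanceinequalities} for restrictions, which the paper does.
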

\begin{proof}
The former case is immediate from (\ref{eqn:mean}).

For the latter case, note that
$$\dh^2(p,q) = \dh^2(p_\mathcal{A}, q_\mathcal{A}) + \dh^2(p_\mathcal{\bar A}, q_\mathcal{\bar A}).$$
We upper bound the latter term as follows:
\begin{align*}
\dh^2(p_\mathcal{\bar A}, q_\mathcal{\bar A}) 
&\leq \dtv(p_\mathcal{\bar A}, q_\mathcal{\bar A}) \\
&= \frac12 \sum_{i \in \mathcal{\bar A}} |p_i - q_i| \\
&\leq \frac12 \left(p(\mathcal{\bar A}) + q(\mathcal{\bar A})\right) \\
&\leq \left(\frac{c_1 + c_2}{2}\right)\ve^2 \\
\end{align*}
The first inequality is from Proposition \ref{prop:distanceinequalities}, and the third inequality is from our prior condition that $p(\mathcal{\bar A}) \leq c_2 \ve^2$.

Since $\dh^2(p,q) \geq \ve^2$, this implies $\dh^2(p_\mathcal{A}, q_\mathcal{A}) \geq \left(1 - \frac{c_1 + c_2}{2}\right)\ve^2$.
Proposition \ref{prop:distanceinequalities} further implies that $\dxs(p_\mathcal{A}, q_\mathcal{A}) \geq \left(2 - c_1 - c_2\right)\ve^2$.
The lemma follows from (\ref{eqn:mean}).
\end{proof}

\begin{lemma}
\label{lem:vars}
If $\dxs(p,q) \leq \ve^2$, then $\Var[Z'] = O(m_2^2 \ve^4)$. 
If $\dh(p,q) \geq \ve$, then $\Var[Z'] \leq O(\E[Z']^2)$.
The constant in both expressions can be made arbitrarily small with the choice of the constant $C$.
\end{lemma}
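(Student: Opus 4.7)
The plan is to bound each of the two summands in the variance formula~(\ref{eqn:variance}) separately, closely following the analogous lemma of~\cite{AcharyaDK15} but adapting the algebra to the weaker pruning threshold $q_i \ge c_1\ve^2/n$ used here. Throughout I write $\dxs_i := (p_i-q_i)^2/q_i$ and $D := \dxs(p_\mathcal{A},q_\mathcal{A}) = \sum_{i\in\mathcal{A}}\dxs_i$.

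For $\sum_{i\in\mathcal{A}} 2 p_i^2/q_i^2$, the main step is the expansion $p_i^2/q_i^2 = 1 + 2(p_i-q_i)/q_i + \dxs_i/q_i$. Summed over $\mathcal{A}$, the constant piece contributes at most $|\mathcal{A}|\le n$; the cross term, by Cauchy--Schwarz, is at most $\sqrt{D}\cdot\sqrt{\sum_\mathcal{A} 1/q_i}\le\sqrt{D}\cdot n/(\ve\sqrt{c_1})$ after using $q_i \ge c_1\ve^2/n$; and the quadratic piece is at most $(n/(c_1\ve^2))\cdot D$ by pulling out $1/q_i$.

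For $4m_2\sum_{i\in\mathcal{A}}p_i(p_i-q_i)^2/q_i^2$, I would write $p_i/q_i = 1+(p_i-q_i)/q_i$ to split the sum as $4m_2 D + 4m_2\sum_{i\in\mathcal{A}}(p_i-q_i)\dxs_i/q_i$. To the second piece I would apply Cauchy--Schwarz,
\begin{equation*}
\Bigl|\sum_{i\in\mathcal{A}}\tfrac{(p_i-q_i)\dxs_i}{q_i}\Bigr|\le\sqrt{D}\cdot\sqrt{\sum_{i\in\mathcal{A}}\dxs_i^2/q_i},
\end{equation*}
and bound $\sum_{i\in\mathcal{A}}\dxs_i^2/q_i\le(\max_i\dxs_i)\cdot\sum_{i\in\mathcal{A}}\dxs_i/q_i\le D\cdot(n/(c_1\ve^2))\cdot D$, using the trivial $\max_i\dxs_i\le\sum_{i\in\mathcal{A}}\dxs_i=D$.

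Substituting $m_2=C\sqrt{n}/\ve^2$, a short calculation shows that every term obtained above is, up to a factor of $O(1/C)$ or $O(1/C^2)$, of the form $m_2^2\ve^4$ when $D\le\ve^2$ and of the form $m_2^2 D^2$ when instead $D=\Omega(\ve^2)$. Under Case~1, $D\le\dxs(p,q)\le\ve^2$, so all terms collapse to $O(m_2^2\ve^4)$. Under Case~2, Lemma~\ref{lem:means} gives $D\ge(2-c_1-c_2)\ve^2=\Omega(\ve^2)$, so all terms collapse to $O(m_2^2 D^2)=O(\E[Z']^2)$. Choosing $C$ sufficiently large drives the hidden constant arbitrarily close to zero, as claimed.

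The main obstacle is the Case~2 bound on the second summand, since $\dxs(p,q)$ is no longer a priori bounded by $\ve^2$. The resolution is the Cauchy--Schwarz step on $\sum(p_i-q_i)\dxs_i/q_i$ combined with the elementary observation $\max_i\dxs_i\le D$, which converts this cubic-in-$(p_i-q_i)$ quantity into something scaling quadratically in $D$, matching the target order $\E[Z']^2=m_2^2 D^2$.
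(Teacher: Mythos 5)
Your proof is correct, and it shares the paper's overall strategy: bound the two summands of~(\ref{eqn:variance}) separately, exploit the pruning threshold $q_i \geq c_1\ve^2/n$, express everything in terms of $n$ and $D = \E[Z']/m_2$, substitute $m_2 = C\sqrt{n}/\ve^2$, and case-split on $D \leq \ve^2$ versus $D \geq (2-c_1-c_2)\ve^2$. The intermediate decompositions differ, though, most notably on the second summand. The paper applies Cauchy--Schwarz globally, $\sum_i p_i(p_i-q_i)^2/q_i^2 \leq (\sum_i p_i^2/q_i^2)^{1/2}(\sum_i (p_i-q_i)^4/q_i^2)^{1/2}$, recycles its bound on the first summand inside the first factor, and finishes with the monotonicity $\Vert x \Vert_2 \leq \Vert x \Vert_1$ applied to $x_i = (p_i-q_i)^2/q_i$. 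You instead expand $p_i = q_i + (p_i - q_i)$, peel off the benign $4m_2 D = 4\E[Z']$ piece, and control the cubic remainder via Cauchy--Schwarz together with $\max_i (p_i-q_i)^2/q_i \leq D$; that max-versus-sum observation plays precisely the role of the paper's $\ell_2 \leq \ell_1$ step, as both convert a sum of squares of the per-coordinate $\chi^2$ contributions into $D^2$. (Similarly, on the first summand you handle the cross term $\sum_i (p_i-q_i)/q_i$ by Cauchy--Schwarz against $\sqrt{\sum_i 1/q_i}$ where the paper uses AM--GM; the extra $\sqrt{D}\,n/\ve$ term this produces is harmless in both cases, as you verify.) Your route is slightly more self-contained, since the second summand's bound never depends on the first's, at the cost of tracking one additional cross term; the paper's is marginally shorter. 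All terms in both arguments carry a factor of $1/C$ or better after the substitution, so the claim about driving the constants down with $C$ goes through as you state.
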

\begin{proof}
We bound the terms of (\ref{eqn:variance}) separately, starting with the first.

\begin{align}
2\sum_{i \in \mathcal{A}} \frac{p_i^2}{q_i^2} &= 2\sum_{i \in \mathcal{A}} \left(\frac{(p_i - q_i)^2}{q_i^2} + \frac{2p_iq_i - q_i^2}{q_i^2}\right) \nonumber \\
                                             &= 2\sum_{i \in \mathcal{A}} \left(\frac{(p_i - q_i)^2}{q_i^2} + \frac{2q_i(p_i - q_i) + q_i^2}{q_i^2}\right) \nonumber\\
                                             &\leq 2n + 2\sum_{i \in \mathcal{A}} \left(\frac{(p_i - q_i)^2}{q_i^2} + 2\frac{(p_i - q_i)}{q_i}\right) \nonumber\\
                                             &\leq 4n + 4\sum_{i \in \mathcal{A}} \frac{(p_i - q_i)^2}{q_i^2} \nonumber\\
                                             &\leq 4n + \frac{4n}{c_1\ve^2} \sum_{i \in \mathcal{A}} \frac{(p_i - q_i)^2}{q_i}\nonumber\\
                                             &= 4n + \frac{4n}{c_1\ve^2}\frac{E[Z']}{m_2} \nonumber\\
                                             &\leq 4n + \frac{4}{c_1C}\sqrt{n} E[Z']\label{eq:first-var-term-in}
\end{align}
The second inequality is the AM-GM inequality, the third inequality uses that $q_i \geq \frac{c_1\ve^2}{n}$ for all $i \in \mathcal{A}$, the last equality uses \eqref{eqn:mean}, and the final inequality substitutes a value $m_2 \geq C\frac{\sqrt{n}}{\ve^2}$.

The second term can be similarly bounded:
\begin{align*}
4m_2 \sum_{i \in \mathcal{A}} \frac{p_i(p_i - q_i)^2}{q_i^2} &\leq 4m_2 \left(\sum_{i \in \mathcal{A}} \frac{p_i^2}{q_i^2}\right)^{1/2}\left(\sum_{i \in \mathcal{A}} \frac{(p_i - q_i)^4}{q_i^2}\right)^{1/2} \\
                                                          &\leq 4m_2 \left(4n + \frac{4}{c_1C}\sqrt{n} E[Z'] \right)^{1/2}\left(\sum_{i \in \mathcal{A}} \frac{(p_i - q_i)^4}{q_i^2}\right)^{1/2} \\
                                                          &\leq 4m_2 \left(2\sqrt{n} + \frac{2}{\sqrt{c_1C}}n^{1/4} E[Z']^{1/2}\right)\left(\sum_{i \in \mathcal{A}} \frac{(p_i - q_i)^2}{q_i}\right) \\
                                                          &= \left(8\sqrt{n} + \frac{8}{\sqrt{c_1C}}n^{1/4} E[Z']^{1/2}\right)E[Z'].
\end{align*}
The first inequality is Cauchy-Schwarz, the second inequality uses (\ref{eq:first-var-term-in}), the third inequality uses the monotonicity of the $\ell_p$ norms, and the equality uses~\eqref{eqn:mean}.

Combining the two terms, we get
$$\Var[Z'] \leq 4n + \left(8 + \frac{4}{c_1C}\right)\sqrt{n} \E[Z']  + \frac{8}{\sqrt{c_1C}}n^{1/4} \E[Z']^{3/2}  .$$

We now consider the two cases in the statement of our lemma.
\begin{itemize}
\item
When $\dxs(p,q) \leq \ve^2$, we know from Lemma~\ref{lem:means} that $\E[Z'] \leq m_2 \ve^2$. 
Combined with a choice of $m_2 \geq C \frac{\sqrt{n}}{\ve^2}$ and the above expression for the variance, this gives:
\begin{align*}
\Var[Z']
& \leq \frac{4}{C^2}m_2^2\ve^4 + \left(\frac{8}{C} + \frac{4}{c_1C^2}\right)m_2^2 \ve^4+ \frac{8}{C\sqrt{c_1}}m_2^2 \ve^4 \\
& = \left(\frac{8}{C} + \frac{8}{C\sqrt{c_1}} + \frac{4}{C^2} + \frac{4}{c_1C^2} \right)m_2^2 \ve^4 = O(m_2^2 \ve^4).
\end{align*}

\item When $\dh(p,q) \geq \ve$, Lemma~\ref{lem:means} and  $m_2 \geq C\frac{\sqrt{n}}{\ve^2}$ give:
$$\E[Z'] \geq (2 - c_1 - c_2) m_2 \ve^2 \geq C(2 - c_1 - c_2) \sqrt{n}.$$

Similar to before, combining this with our expression for the variance we get:
\begin{align*}
\Var[Z']
&\leq \left(\frac{8}{C(2 -c_1 - c_2)} + \frac{8}{C\sqrt{c_1 (2 - c_1 - c_2)}} +  \frac{4}{C^2(2-c_1-c_2)^2} + \frac{4}{C^2c_1(2 - c_1 - c_2)} \right) \E[Z']^2 \\
&= O(\E[Z']^2).\qedhere
\end{align*}
\end{itemize}
\end{proof}

To conclude the proof, we consider the two cases.
\begin{itemize}
\item Suppose $\dxs(p,q) \leq \ve^2$. 
By Lemma~\ref{lem:means} and the definition of $\mathcal{A}$, we have that $\E[Z] \leq (1 + c_1)m_2\ve^2$. 
By Lemma~\ref{lem:vars}, $\Var[Z] = O(m_2^2\ve^4)$.
Therefore, for constant $C$ sufficiently large, Chebyshev's inequality implies $\Pr(Z > \frac32 m_2 \ve^2) \leq 1/10$.
\item
Suppose $\dh(p,q) \geq \ve$.
By Lemma~\ref{lem:means}, we have that $\E[Z'] \geq (2 - c_1 - c_2)m_2\ve^2$. 
By Lemma~\ref{lem:vars}, $\Var[Z'] = O(\E[Z']^2)$.
Therefore, for constant $C$ sufficiently large, Chebyshev's inequality implies $\Pr(Z' < \frac32 m_2 \ve^2) \leq 1/10$.
Since $Z \geq Z'$, $\Pr(Z < \frac32 m_2 \ve^2) \leq 1/10$ as well.
\end{itemize}

\subsection{Identity Testing with $\ell_2$ Tolerance}
\label{sec:id-lt}
In this section, we sketch the algorithms required to achieve $\ell_2$ tolerance for identity testing.
Since the algorithms and analysis are very similar to those of Algorithm 1 of~\cite{AcharyaDK15} and Algorithm~\ref{alg:testing}, the full details are omitted.

First, we prove Theorem~\ref{thm:ones-tv}.
The algorithm is Algorithm 1 of~\cite{AcharyaDK15}, but instead of testing on $p$ and $q$, we instead test on $p^{+\frac12}$ and $q^{+\frac12}$, as defined in Proposition~\ref{prop:mixing}.
By this proposition, this operation preserves total variation and $\ell_2$ distance, up to a factor of $2$, and also makes it so that the minimum probability element of $q^{+\frac12}$ is at least $1/2n$. 
In the case where $\dlt(p,q) \leq \frac{\ve}{\sqrt{n}}$, we have the following upper bound on $\E[Z]$:
$$\E[Z'] = m \sum_{i \in \mathcal{\bar A}} \frac{(p_i - q_i)^2}{q_i} \leq O\left( m \cdot n \cdot \dlt^2(p,q)\right) \leq O(m_2 \ve^2).$$
This is the same bound as in Lemma 2 of~\cite{AcharyaDK15}.
The rest of the analysis follows identically to that of Algorithm 1 of~\cite{AcharyaDK15}, giving us Theorem~\ref{thm:ones-tv}.

Next, we prove Theorem~\ref{thm:ones-h}.
We observe that Algorithm~\ref{alg:testing} as stated can be considered as $\ell_2$-tolerant instead of $\chi^2$-tolerant, if desired.
First, we do not wrongfully reject any $p$ (i.e., those with $\dlt(p,q) \leq \frac{\ve^2}{\sqrt{n}}$) in Step~\ref{ln:early-reject}.
This is because we reject in this step if there is $\geq \Omega(\ve^2)$ total variation distance between $p$ and $q$ (witnessed by the set $\mathcal{\bar A}$), which implies that $p$ and $q$ are far in $\ell_2$-distance by Proposition~\ref{prop:ltinequalities}.
It remains to prove an upper bound on $\E[Z']$ in the case where $\dlt(p,q) \leq \frac{\ve^2}{\sqrt{n}}$.
$$\E[Z'] = m_2 \dxs(p,q) = m_2 \sum_{i \in \mathcal{\bar A}} \frac{(p_i - q_i)^2}{q_i} \leq O\left( m_2 \cdot \left(\frac{n}{\ve^2}\right)\cdot \dlt^2(p,q)\right) \leq O(m_2 \ve^2).$$
We note that this is the same bound as in Lemma~\ref{lem:means}.
With this bound on the mean, the rest of the analysis is identical to that of Theorem~\ref{thm:ones-csq-h}, giving us Theorem~\ref{thm:ones-h}.

\section{Upper Bounds for Equivalence Testing}
\label{sec:twos-ub}
In this section, we prove the following theorems for equivalence testing.

\begin{theorem}\label{thm:twos-tv}
There exists an algorithm for equivalence testing between $p$ and $q$ distinguishing the cases:
\begin{itemize}
\item $\dlt(p,q) \leq \frac{\ve}{2\sqrt{n}}$ 
\item $\dtv(p,q) \geq \ve$
\end{itemize}
The algorithm uses $O\left(\max\left\{\frac{n^{2/3}}{\ve^{4/3}}, \frac{n^{1/2}}{\ve^2}\right\}\right)$ samples.
\end{theorem}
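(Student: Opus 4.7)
The plan is to combine the mixing technique of Proposition~\ref{prop:mixing} with a $\chi^2$-style two-sample statistic in the spirit of~\cite{ChanDVV14}. Set $p' = p^{+1/2}$ and $q' = q^{+1/2}$; samples from $p'$ can be simulated from samples of $p$ by independently flipping a fair coin per sample and returning either the next sample from $p$ or a uniformly random coordinate (analogously for $q'$). By Proposition~\ref{prop:mixing}, the close case $\dlt(p,q) \leq \ve/(2\sqrt{n})$ implies $\dlt(p',q') \leq \ve/(4\sqrt{n})$, while the far case $\dtv(p,q) \geq \ve$ implies $\dtv(p',q') \geq \ve/2$. Crucially, mixing also guarantees $p'_i,\, q'_i \geq 1/(2n)$ for every coordinate $i$, making the distributions well-conditioned.

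Next, take $m = C \cdot \max\{n^{2/3}/\ve^{4/3},\, n^{1/2}/\ve^2\}$ for a sufficiently large constant $C$, draw $\mathrm{Poisson}(m)$ samples from each of $p'$ and $q'$, and let $X_i,\, Y_i$ denote the per-coordinate counts (so $X_i \sim \mathrm{Poi}(mp'_i)$ and $Y_i \sim \mathrm{Poi}(mq'_i)$ independently). Compute the statistic
\begin{equation*}
Z \;=\; \sum_{i\,:\,X_i + Y_i > 0} \frac{(X_i - Y_i)^2 - X_i - Y_i}{X_i + Y_i},
\end{equation*}
and accept iff $Z \leq m\ve^2/8$. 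A standard Poisson moment computation yields $\E[Z] = m \sum_i (p'_i - q'_i)^2 / (p'_i + q'_i)$. Using $p'_i + q'_i \geq 1/n$, the close case gives $\E[Z] \leq m n\,\dlt^2(p',q') \leq m\ve^2/16$. In the far case, Proposition~\ref{prp:didnt-know-it-was-hellinger} together with $\dtv \leq \sqrt{2}\,\dh$ yields $\E[Z] \geq 2m\,\dh^2(p',q') \geq m\,\dtv^2(p',q') \geq m\ve^2/4$, producing a mean-gap of $\Omega(m\ve^2)$ between the two cases.

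The main technical obstacle is showing $\Var[Z] = O(m^2 \ve^4)$ at the prescribed sample complexity. I would follow the variance decomposition of~\cite{ChanDVV14}, splitting coordinates into \emph{heavy} (say $p'_i + q'_i \geq n^{-1/3}$) and \emph{light} buckets. On heavy coordinates the Poisson counts concentrate well and a standard chi-square calculation bounds the variance contribution by a term of the form $O(m\ve^2 \cdot \E[Z])$. On light coordinates the counts are typically $0$ or $1$, each contributing $O(1)$ per coordinate, which sums to at most $O(n)$. The two terms in the max defining $m$ correspond precisely to these two regimes: the $n^{2/3}/\ve^{4/3}$ term is what forces the $O(n)$ light-coordinate variance to be $o((m\ve^2)^2)$, while the $n^{1/2}/\ve^2$ term controls the heavy-coordinate mean-times-mean contribution.

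With the variance bound in hand, Chebyshev's inequality separates $\E[Z] \leq m\ve^2/16$ from $\E[Z] \geq m\ve^2/4$ with constant probability. The $\ell_2$ tolerance is inherited ``for free'': the close-case mean bound $\E[Z] \leq mn\,\dlt^2(p',q')$ is driven entirely by the $\ell_2$ distance, which mixing preserves up to a factor of $2$, and the variance bound in the close case only uses the global structure of $p',q'$ (namely, the $1/(2n)$ lower bound and a small-$\E[Z]$ regime), not that $p' = q'$. The tight regime to verify carefully is the crossover near $\ve \asymp n^{-1/4}$, where neither term in the max dominates and both variance contributions must be bookkept simultaneously.
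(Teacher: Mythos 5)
Your overall architecture (mix with the uniform distribution via Proposition~\ref{prop:mixing}, run the \cite{ChanDVV14} statistic, threshold, Chebyshev) is the same as the paper's, but there is a genuine gap in your mean bounds that breaks the tester in the regime $m \leq n$ (equivalently $\ve \gtrsim n^{-1/4}$, where the $n^{2/3}/\ve^{4/3}$ term dominates). The exact expectation of $\bZ$ is \emph{not} $m\sum_i (p_i'-q_i')^2/(p_i'+q_i')$; it is $\sum_i \frac{(p_i'-q_i')^2}{p_i'+q_i'}\, m\, f(m(p_i'+q_i'))$ with $f(x) = 1 - \frac{1-e^{-x}}{x}$ (Proposition~\ref{prop:mean-var}). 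The corrective factor $f$ satisfies $f(x)\approx x/2$ for $x<1$, so on coordinates with $m(p_i'+q_i') \ll 1$ the contribution to the mean is suppressed by a factor of roughly $m(p_i'+q_i')$. Mixing only guarantees $p_i'+q_i' \geq 1/n$, so when $m \ll n$ essentially every coordinate can be in this suppressed regime. Concretely, take a far pair whose TV mass $\ve$ is spread evenly over all $n$ coordinates ($|p_i'-q_i'| = 2\ve/n$): then $\E[\bZ] = \Theta(m^2\ve^2/n)$, which for $m = C n^{2/3}/\ve^{4/3} \ll n$ is far below your threshold $m\ve^2/8$, and since $\Var[\bZ] = O(m)$ the statistic concentrates below the threshold and your tester accepts a far instance. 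Your chain $\E[Z] \geq 2m\dh^2 \geq m\dtv^2 \geq m\ve^2/4$ implicitly assumes $f \equiv 1$ and is simply false here.

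The paper repairs exactly this point with Proposition~\ref{prop:their-bound} (the Cauchy--Schwarz argument from Lemma~4 of~\cite{ChanDVV14}), which yields the correct far-case lower bound $\E[\bZ] \geq \frac{4m^2}{2n+2m}\dtv^2(p',q')$, and correspondingly places the threshold at scale $\Theta\bigl(\frac{m^2\ve^2}{m+n}\bigr)$ rather than $\Theta(m\ve^2)$. Symmetrically, the close-case upper bound must then be pushed below that smaller threshold, which requires $\E[\bZ] \leq m^2\dlt^2(p',q')$ on light coordinates (Proposition~\ref{prop:general-upper}, using $f(x)\leq x$), not just the $mn\dlt^2$ bound you state. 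Once the means are fixed, the variance analysis is the easy part: the paper avoids your ad hoc heavy/light bucketing at $n^{-1/3}$ and just uses $\Var[\bZ] \leq 2\min\{m,n\} + 20m\dh^2(p',q')$ (Corollary~\ref{cor:var-hel}), splitting into $m\leq n$ (where $\Var[\bZ]\leq 22m$ suffices and forces $m \gtrsim n^{2/3}/\ve^{4/3}$) and $m\geq n$ (where a further split into $\{i: m(p_i'+q_i')\geq 1\}$ and its complement is used in the far case). You correctly identified which term of the max controls which regime, but without the $f$-aware mean bounds the argument as written does not go through.
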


\begin{theorem}\label{thm:twos-h}
There exists an algorithm for equivalence testing between $p$ and $q$ distinguishing the cases:
\begin{itemize}
\item $\dlt(p,q) \leq \frac{\ve^2}{32\sqrt{n}}$ 
\item $\dh(p,q) \geq \ve$
\end{itemize}
The algorithm uses $O\left(\min\left\{\frac{n^{2/3}}{\ve^{8/3}}, \frac{n^{3/4}}{\ve^2}\right\}\right)$ samples.
\end{theorem}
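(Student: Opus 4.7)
The plan is to adapt the $\chi^2$-style statistic of~\cite{ChanDVV14} to Hellinger distance via Proposition~\ref{prp:didnt-know-it-was-hellinger}, and to handle $\ell_2$-tolerance via the uniform-mixing trick of Proposition~\ref{prop:mixing} (as in~\cite{Goldreich16}). First, I would replace $p$ and $q$ with $p^{+\delta}$ and $q^{+\delta}$ for $\delta = c\ve^2$, where $c$ is a sufficiently small constant; sampling from a mixed distribution costs only a constant factor. Proposition~\ref{prop:mixing} guarantees (i) $\dlt$ is preserved up to a constant factor, so the close case still has $\dlt(p^{+\delta},q^{+\delta}) = O(\ve^2/\sqrt{n})$; (ii) $\dh$ drops by at most $2\sqrt{\delta} = O(\ve)$, so the far case retains $\dh(p^{+\delta},q^{+\delta}) \geq \Omega(\ve)$; and crucially (iii) every coordinate satisfies $p^{+\delta}_i + q^{+\delta}_i \geq 2\delta/n = \Omega(\ve^2/n)$.

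Next, I would draw $\mathrm{Poisson}(m)$ samples from each of $p^{+\delta}$ and $q^{+\delta}$, obtain counts $X_i, Y_i$, and compute
$$Z = \sum_{i \,:\, X_i + Y_i > 0} \frac{(X_i - Y_i)^2 - (X_i + Y_i)}{X_i + Y_i}.$$
A standard Poisson computation gives $\E[Z] = m \sum_i (p^{+\delta}_i - q^{+\delta}_i)^2/(p^{+\delta}_i + q^{+\delta}_i)$, which by Proposition~\ref{prp:didnt-know-it-was-hellinger} lies in $[2m\dh^2(p^{+\delta},q^{+\delta}),\, 4m\dh^2(p^{+\delta},q^{+\delta})]$. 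In the far case this is $\Omega(m\ve^2)$, while in the close case the uniform lower bound on $p^{+\delta}_i + q^{+\delta}_i$ gives
$$\E[Z] \leq \frac{m}{\min_i(p^{+\delta}_i + q^{+\delta}_i)}\cdot \dlt^2(p^{+\delta},q^{+\delta}) \leq \frac{mn}{\Omega(\ve^2)} \cdot O(\ve^4/n) = O(m\ve^2),$$
and by shrinking $c$ and the constant in $\dlt(p,q) \leq \ve^2/(32\sqrt{n})$, the gap between the two cases is $\Omega(m\ve^2)$.

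The main obstacle is bounding $\Var[Z]$, since $X_i + Y_i$ is a \emph{random} quantity in the denominator, requiring either a Taylor expansion or a conditioning argument as in~\cite{ChanDVV14}. Following the CDVV/DK16 framework, I would partition the analysis (not the algorithm, which has no explicit access to $p$ or $q$) into ``heavy'' coordinates, where $p^{+\delta}_i + q^{+\delta}_i$ exceeds a threshold $T$, and ``light'' coordinates. Heavy contributions are controlled by the number of coordinates of mass $\geq T$, which is at most $O(1/T)$; light contributions are controlled via Cauchy--Schwarz, using the uniform floor $\Omega(\ve^2/n)$ from the mixing step. Optimizing $T$ gives two competing bounds on the number of samples needed to have $\Var[Z] \ll (m\ve^2)^2$: one yields $m = O(n^{3/4}/\ve^2)$ (dominant for larger $\ve$) and the other $m = O(n^{2/3}/\ve^{8/3})$ (dominant for smaller $\ve$). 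We run whichever is smaller for the current parameters; Chebyshev's inequality with threshold halfway between the two expectation regimes then distinguishes the two cases with probability at least $2/3$.
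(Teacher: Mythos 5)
Your overall architecture matches the paper's (mixing with $\delta=\Theta(\ve^2)$ to impose a probability floor of $\Omega(\ve^2/n)$, the statistic $\bZ$ of~\cite{ChanDVV14}, a heavy/light split, Chebyshev), but there is a genuine gap at the center of the argument: the claim that $\E[\bZ]=m\sum_i (p_i'-q_i')^2/(p_i'+q_i')$ and hence $\E[\bZ]\geq 2m\dh^2(p',q')=\Omega(m\ve^2)$ in the far case. The true expectation (Proposition~\ref{prop:mean-var}) carries a corrective factor $f(m(p_i'+q_i'))$ with $f(x)\approx x/2$ for $x\ll 1$, so a coordinate with $p_i'+q_i'\ll 1/m$ contributes roughly $m^2(p_i'-q_i')^2$, not $m(p_i'-q_i')^2/(p_i'+q_i')$. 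If the Hellinger mass sits entirely on such light coordinates, $\E[\bZ]$ can be as small as $\Theta(m^2\ve^4/n)$, which for $m\ll n/\ve^2$ (the relevant regime for both of your candidate sample sizes) is far below your claimed $\Omega(m\ve^2)$ --- and indeed below your close-case upper bound of $O(m\ve^2)$, so the threshold ``halfway between the two expectation regimes'' does not exist as you have set things up. A quick sanity check: if your expectation bound were correct, the CDVV variance bound $\Var[\bZ]\leq 2\min\{m,n\}+20m\dh^2$ would already yield an $O(\sqrt{n}/\ve^2)$ tester, contradicting the $\Omega(\min\{n^{2/3}/\ve^{8/3},n^{3/4}/\ve^2\})$ lower bound of~\cite{DiakonikolasK16}.

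The repair lives in the mean, not (only) in the variance, and it is exactly where the stated sample complexity comes from. In the far case you need two different lower bounds: on the heavy coordinates $L=\{i: m(p_i'+q_i')\geq 1\}$ the factor $f$ is $\Theta(1)$ and $\E[\bZ_L]=\Omega(m\,\dh^2(p_L',q_L'))$; on the light coordinates $S$ you must instead run Cauchy--Schwarz against $g(x)=x/f(x)\leq 2+x$ to obtain $\E[\bZ_S]=\Omega\bigl(m^2\dtv^2(p_S',q_S')/n\bigr)=\Omega\bigl(m^2\dh^4(p_S',q_S')/n\bigr)$ (Proposition~\ref{prop:their-bound}). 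Since one of $S,L$ carries at least half the squared Hellinger distance, $\E[\bZ]=\Omega(\min\{m^2\ve^4/n,\,m\ve^2\})$. Correspondingly, in the close case your bound $\E[\bZ]\leq \frac{m}{\min_i(p_i'+q_i')}\dlt^2=O(m\ve^2)$ must be supplemented by the unconditional bound $\E[\bZ]\leq m^2\dlt^2(p',q')$ (Proposition~\ref{prop:general-upper}, using $f(x)\leq x$): it is the $m^2\dlt^2$ term that beats the far case's $m^2\ve^4/n$ term, and the constant $32$ in the hypothesis $\dlt(p,q)\leq \ve^2/(32\sqrt n)$ is precisely what makes the two $\min$'s separate by a constant factor. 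Once the means are separated in this $\min\{m^2\ve^4/n,\,m\ve^2\}$ form, the variance needs no new heavy/light machinery: $\Var[\bZ]\leq 2\min\{m,n\}+20m\dh^2(p',q')$ suffices, with the only extra care being to split the Hellinger mass between $S$ and $L$ when comparing $\Var[\bZ]$ to $\E[\bZ]^2$ in the far, $m>n$ case.
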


Consider drawing~$\mathrm{Poisson}(m)$ samples from two unknown distributions $p = (p_1, \ldots, p_n)$ and $q = (q_1, \ldots, q_n)$.
Given the resulting histograms~$\bX$ and~$\bY$, \cite{ChanDVV14} define the following statistic:
\begin{equation}\label{eq:hel-statistic}
\bZ = \sum_{i=1}^n \frac{(\bX_i - \bY_i)^2 - \bX_i - \bY_i}{\bX_i + \bY_i}.
\end{equation}
This can be viewed as a modification to the empirical triangle distance applied to~$\bX$ and~$\bY$.
Both of our equivalence testing upper bounds will be obtained by appropriate thresholding of the statistic $\bZ$.

The organization of this section is as follows.
In Section~\ref{sec:eq-prelim}, we prove some basic properties of $\bZ$.
In Section~\ref{sec:eq-tv}, we prove Theorem~\ref{thm:twos-tv}.
In Section~\ref{sec:eq-h}, we prove Theorem~\ref{thm:twos-h}.

\subsection{Some facts about $\mathbf Z$}
\label{sec:eq-prelim}
Chan et al.~\cite{ChanDVV14} give the following expressions for the mean and variance of~$\bZ$.

\begin{proposition}[\cite{ChanDVV14}]\label{prop:mean-var}
Consider the function
\begin{equation*}
f(x) = \left(1 - \frac{1- e^{-x}}{x}\right).
\end{equation*}
Then for any subset $A\subseteq [n]$,
\begin{equation}\label{eq:z-mean}
\E[\bZ_A] = \sum_{i \in A} \frac{(p_i-q_i)^2}{p_i+q_i} m \cdot f(m(p_i + q_i)).
\end{equation}
As a result, $\bZ$ is mean-zero when $p= q$.  Furthermore,
\begin{equation*}
\Var[\bZ] \leq 2 \min\{m, n\} + \sum_{i=1}^n 5m \frac{(p_i - q_i)^2}{p_i + q_i}.
\end{equation*}
\end{proposition}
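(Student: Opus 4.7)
The plan is to exploit Poisson sampling heavily: under $\mathrm{Poisson}(m)$ sampling the counts $\bX_i$ and $\bY_i$ are independent both across coordinates $i$ and, for each $i$, across the two distributions. Consequently the summands $T_i := ((\bX_i - \bY_i)^2 - \bX_i - \bY_i)/(\bX_i + \bY_i)$ (with the convention $T_i := 0$ when $\bX_i + \bY_i = 0$) are independent in $i$, so $\E[\bZ_A] = \sum_{i \in A} \E[T_i]$ and $\Var[\bZ] = \sum_i \Var[T_i]$. Everything reduces to a one-coordinate calculation, which I would carry out by first conditioning on $W_i := \bX_i + \bY_i$. The relevant Poissonization facts are that $W_i \sim \mathrm{Poisson}(\lambda_i)$ with $\lambda_i := m(p_i + q_i)$, and that, conditionally on $W_i = k$, $\bX_i \sim \mathrm{Binomial}(k, r_i)$ with $r_i := p_i/(p_i + q_i)$ and $s_i := 1 - r_i$.

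For the mean, the useful algebraic rearrangement is $(\bX_i - \bY_i)^2 - W_i = W_i(W_i - 1) - 4 \bX_i \bY_i$, which is immediate from $\bX_i - \bY_i = 2\bX_i - W_i$. Together with the binomial identity $\E[\bX_i \bY_i \mid W_i = k] = k(k-1) r_i s_i$, this gives $\E[(\bX_i - \bY_i)^2 - W_i \mid W_i = k] = k(k-1)(1 - 4 r_i s_i) = k(k-1)(r_i - s_i)^2$, and hence $\E[T_i \mid W_i = k] = (k-1)(r_i - s_i)^2$ for $k \geq 1$. Unconditioning using the Poisson identity $\E[(W_i - 1)\mathbf{1}\{W_i \geq 1\}] = \lambda_i - (1 - e^{-\lambda_i}) = \lambda_i f(\lambda_i)$ yields $\E[T_i] = (r_i - s_i)^2 \lambda_i f(\lambda_i) = \frac{(p_i - q_i)^2}{p_i + q_i} \cdot m \cdot f(m(p_i + q_i))$, which is exactly \eqref{eq:z-mean}. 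The mean-zero assertion when $p = q$ is then immediate because $r_i = s_i$.

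For the variance I would apply the same conditioning strategy to $\E[T_i^2]$. On $\{W_i > 0\}$ the identity above becomes $T_i = (W_i - 1) - 4 \bX_i \bY_i / W_i$, so $T_i^2$ is a polynomial in $W_i$ and $\bX_i \bY_i / W_i$; conditioning on $W_i = k$ leaves an expression that can be evaluated using the second and fourth binomial moments of $\mathrm{Binomial}(k, r_i)$, and unconditioning is then routine via the Poisson factorial moments $\E[W_i(W_i-1)\cdots(W_i - j + 1)] = \lambda_i^j$. I would split the resulting per-coordinate variance into a \emph{noise} contribution (what survives at $p = q$, i.e.\ at $r_i = s_i = 1/2$) and a \emph{signal} contribution proportional to $(r_i - s_i)^2$. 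The noise contribution admits two simultaneous bounds, $O(\lambda_i)$ and $O(1)$; the first sums to $O(m)$ via $\sum_i \lambda_i = 2m$, and the second sums to $O(n)$, jointly yielding the $2\min\{m,n\}$ term. The signal contribution at coordinate $i$ is $O\!\left(m(p_i - q_i)^2/(p_i + q_i)\right)$, summing to the second term in the bound, with the constant absorbing the polynomial coefficients to give $5$.

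The main obstacle is the variance bookkeeping. The $1/W_i$ singularity at small $W_i$ has to be handled carefully (it is tamed because $W_i = 0$ kills the whole term and $W_i = 1$ has probability only $\lambda_i e^{-\lambda_i}$), and the two independent bounds on the noise contribution must be tracked in parallel throughout the calculation in order to recover the $\min\{m, n\}$ factor rather than just one of the two estimates. Once these are in place, everything else is routine Poisson--binomial algebra.
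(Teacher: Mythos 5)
The paper does not actually prove this proposition---it is quoted from \cite{ChanDVV14}---and your derivation follows the same route as that reference: Poissonization, conditioning on $W_i = \bX_i + \bY_i$ so that $\bX_i$ becomes binomial with parameter $r_i = p_i/(p_i+q_i)$, and the identity $(\bX_i-\bY_i)^2 - W_i = W_i(W_i-1) - 4\bX_i\bY_i$. Your computation of the mean is complete and correct, including the step $\E[(W_i-1)\mathbf{1}\{W_i\geq 1\}] = \lambda_i - (1-e^{-\lambda_i}) = \lambda_i f(\lambda_i)$ that produces the corrective factor $f$. The variance portion, however, is only an outline: the decomposition into a noise term (double-bounded by $O(\lambda_i)$ and $O(1)$, yielding $2\min\{m,n\}$) plus a signal term $O\bigl(m(p_i-q_i)^2/(p_i+q_i)\bigr)$ is the correct structure and matches the cited proof, but the binomial fourth-moment bookkeeping that pins down the constants $2$ and $5$ is asserted rather than carried out.
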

\noindent
Applying Proposition~\ref{prp:didnt-know-it-was-hellinger}, we immediately have the following corollary.
\begin{corollary}\label{cor:var-hel}
$\displaystyle
\Var[\bZ] \leq 2 \min\{m, n\} +  20m \dh(p,q)^2.
$
\end{corollary}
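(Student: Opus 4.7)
The plan is to simply combine Proposition~\ref{prop:mean-var} with the right-hand inequality of Proposition~\ref{prp:didnt-know-it-was-hellinger}. Proposition~\ref{prop:mean-var} already bounds $\Var[\bZ]$ by $2\min\{m,n\}$ plus $5m$ times the triangle-distance sum $\sum_{i=1}^n (p_i-q_i)^2/(p_i+q_i)$, and Proposition~\ref{prp:didnt-know-it-was-hellinger} tells us that this sum is at most $4\dh^2(p,q)$. Substituting the second bound into the first gives $\Var[\bZ] \leq 2\min\{m,n\} + 5m \cdot 4\dh^2(p,q) = 2\min\{m,n\} + 20m\dh^2(p,q)$, which is exactly the statement.

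So there is essentially nothing to do beyond invoking the two cited results in sequence. There is no real obstacle; the only thing to be careful about is that the factor-of-$4$ constant in Proposition~\ref{prp:didnt-know-it-was-hellinger} (which compares the triangle distance to $\dh^2$) matches the coefficient $20 = 5 \cdot 4$ appearing in the corollary's statement.
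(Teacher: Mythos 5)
Your proposal is correct and is exactly the paper's argument: substitute the bound $\sum_{i=1}^n (p_i-q_i)^2/(p_i+q_i) \leq 4\dh^2(p,q)$ from Proposition~\ref{prp:didnt-know-it-was-hellinger} into the variance bound of Proposition~\ref{prop:mean-var}, giving the coefficient $20 = 5\cdot 4$. Nothing further is needed.
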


Without the corrective factor of $f(m(p_i + q_i))$,
Equation~\eqref{eq:z-mean} would just be~$m$ times the triangle distance between~$p$ and~$q$.
Our goal then is to understand the function $f(x)$ and how it affects this quantity.
Aside from the removable discontinuity at $x=0$, $f$ is a monotonically increasing function,
and for $x > 0$, it is strictly bounded between~$0$ and~$1$.
Furthermore, for~$x > 0$ there are roughly two ``regimes" that $f(x)$ exhibits:
when $x < 1$, where $f(x)$ is well-approximated by $x/2$,
and when $x \geq 1$, where $f(x)$ is ``morally the constant one,'' slowly increasing from~$e^{-1}$ to~$1$.
In fact, we have the following explicit bound on~$f(x)$.
\begin{fact}\label{fact:f-upper}
For all $x > 0$, $f(x) \leq \min\{1, x\}.$
\end{fact}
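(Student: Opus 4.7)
The plan is to treat the two sides of the $\min$ separately. The bound $f(x) \leq 1$ is essentially free: for $x > 0$ we have $e^{-x} < 1$, hence $\frac{1-e^{-x}}{x} > 0$, and so $f(x) = 1 - \frac{1-e^{-x}}{x} < 1$. (One can even get the sharper bound $\frac{1-e^{-x}}{x} \geq e^{-x}$ by viewing the quotient as the average of $e^{-t}$ on $[0,x]$, but this is not needed.)

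The substantive inequality is $f(x) \leq x$, which after rearranging is equivalent to
\begin{equation*}
\frac{1-e^{-x}}{x} \geq 1 - x, \quad \text{i.e.,} \quad e^{-x} \leq 1 - x + x^2 \qquad (x > 0).
\end{equation*}
For $x \geq 1$ this is trivial, since the right-hand side is at least $1$ while $e^{-x} < 1$. So the interesting range is $0 < x < 1$.

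For that range I would use a two-derivative monotonicity argument. Define $g(x) = 1 - x + x^2 - e^{-x}$. Then $g(0) = 0$, $g'(x) = -1 + 2x + e^{-x}$ with $g'(0) = 0$, and $g''(x) = 2 - e^{-x} > 0$ for all $x > 0$. Hence $g'$ is strictly increasing on $(0,\infty)$, so $g'(x) > 0$ there, and therefore $g(x) > 0$ on $(0,\infty)$. This yields $e^{-x} < 1 - x + x^2$ as desired, completing the proof of the second bound.

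There is no real obstacle; the only potential pitfall is trying to prove $e^{-x} \leq 1 - x + x^2$ by a direct Taylor-series comparison, since the $\frac{x^2}{2}$-truncation of $e^{-x}$ does not straightforwardly dominate $e^{-x}$ without a further argument. Working at the level of the second derivative of $g$ sidesteps that issue cleanly.
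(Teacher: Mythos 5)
Your proof is correct. The paper states Fact~\ref{fact:f-upper} without proof, so there is nothing to compare against; your argument (the bound $f(x) < 1$ from positivity of $(1-e^{-x})/x$, and the reduction of $f(x)\leq x$ to $e^{-x}\leq 1-x+x^2$ handled by a two-derivative monotonicity argument) is a valid and complete way to fill it in. One minor remark: the ``pitfall'' you flag is not really one, since the standard inequality $e^{-x}\leq 1-x+\tfrac{x^2}{2}$ for $x\geq 0$ follows from the identical argument applied to $h(x)=1-x+\tfrac{x^2}{2}-e^{-x}$ (with $h(0)=h'(0)=0$ and $h''(x)=1-e^{-x}\geq 0$), and it even yields the sharper bound $f(x)\leq x/2$, consistent with the paper's remark that $f(x)$ is well-approximated by $x/2$ for small $x$.
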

\noindent
In terms of $f(m(p_i + q_i))$, these regimes correspond to whether $p_i + q_i$ is less than or greater than~$\frac{1}{m}$.
Hence, the expression for the mean of~$\bZ$ (i.e.\ Equation~\eqref{eq:z-mean} for $A = [n]$)
splits in two: those terms for ``large" $p_i + q_i$ look roughly like the triangle distance (times~$m$),
and those terms for ``small" $p_i + q_i$ look roughly like the $\ell_2^2$ distance (times~$m^2$).
This is why we have given ourselves the flexibility to consider subsets~$A$ of the domain.

We will now prove several upper and lower bounds on $\E[\bZ_A]$,
based in part on whether we will apply them in the large or small $p_i + q_i$ regime.
Let us begin with a pair of upper bounds.

\begin{proposition}\label{prop:well-conditioned-upper}
Suppose for every $i \in A$, $p_i + q_i \geq \delta$. Then
\begin{equation*}
\E[\bZ_A] \leq \frac{m}{\delta} \dlt^2(p_A, q_A).
\end{equation*}
\end{proposition}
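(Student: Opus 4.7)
The plan is to chain together the exact mean formula from Proposition~\ref{prop:mean-var} with the simple upper bound on $f$ from Fact~\ref{fact:f-upper} and then use the well-conditioning hypothesis to convert the triangle-distance-like sum into an $\ell_2^2$ sum.

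More concretely, I would start from the identity
\begin{equation*}
\E[\bZ_A] = \sum_{i \in A} \frac{(p_i-q_i)^2}{p_i+q_i}\, m \cdot f(m(p_i+q_i)),
\end{equation*}
which is exactly Equation~\eqref{eq:z-mean}. Since every term in the sum is nonnegative (the fraction is nonnegative and $f(x) \in [0,1)$ for $x > 0$), I can bound things termwise. Fact~\ref{fact:f-upper} gives $f(m(p_i+q_i)) \leq 1$ unconditionally, so
\begin{equation*}
\E[\bZ_A] \leq m \sum_{i \in A} \frac{(p_i-q_i)^2}{p_i+q_i}.
\end{equation*}

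Next, I would invoke the hypothesis that $p_i + q_i \geq \delta$ for all $i \in A$, so that $\frac{1}{p_i+q_i} \leq \frac{1}{\delta}$ on $A$. Substituting this in yields
\begin{equation*}
\E[\bZ_A] \leq \frac{m}{\delta} \sum_{i \in A}(p_i - q_i)^2 = \frac{m}{\delta}\, \dlt^2(p_A, q_A),
\end{equation*}
which is precisely the desired bound.

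There is no real obstacle here; the entire content of the proposition is the observation that in the ``large'' regime of $f$ (where the bound $f \leq 1$ is the relevant one), the mean of $\bZ$ behaves like $m$ times the triangle distance, and well-conditioning $p_i + q_i \geq \delta$ lets one convert that into an $\ell_2^2$ bound at a cost of $1/\delta$. The proof is essentially two inequalities applied in sequence to the formula for $\E[\bZ_A]$.
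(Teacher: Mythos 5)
Your proof is correct and follows exactly the paper's argument: apply the mean formula from Proposition~\ref{prop:mean-var}, bound $f(m(p_i+q_i)) \leq 1$, and use $p_i + q_i \geq \delta$ to pass to the $\ell_2^2$ sum. Nothing to add.
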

\begin{proof}
Because $f(x) \leq 1$ for all $x > 0$,
\begin{equation*}
\E[\bZ_A]
= \sum_{i \in A} \frac{(p_i-q_i)^2}{p_i+q_i} m \cdot f(m(p_i + q_i))
\leq  \sum_{i \in A} \frac{(p_i-q_i)^2}{p_i+q_i} m
\leq \frac{m}{\delta} \sum_{i \in A} (p_i-q_i)^2
=  \frac{m}{\delta} \dlt^2(p_A, q_A).\qedhere
\end{equation*}
\end{proof}

\begin{proposition}\label{prop:general-upper}
$\displaystyle
\E[\bZ] \leq m^2 \dlt^2(p, q).
$
\end{proposition}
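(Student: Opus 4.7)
The plan is to combine the mean formula from Proposition~\ref{prop:mean-var} with the ``small $x$'' bound on $f$ from Fact~\ref{fact:f-upper}. Specifically, Proposition~\ref{prop:mean-var} applied to $A = [n]$ gives
\begin{equation*}
\E[\bZ] = \sum_{i=1}^n \frac{(p_i - q_i)^2}{p_i + q_i} \cdot m \cdot f(m(p_i + q_i)),
\end{equation*}
and Fact~\ref{fact:f-upper} tells us in particular that $f(x) \leq x$ for all $x > 0$.

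My plan is therefore to substitute $f(m(p_i + q_i)) \leq m(p_i + q_i)$ into the sum above. This cancels the $(p_i + q_i)$ factor in the denominator cleanly and leaves $m^2 \sum_{i=1}^n (p_i - q_i)^2$, which by definition equals $m^2 \dlt^2(p,q)$. So the whole proof amounts to two lines: invoke the mean formula, apply the bound $f(x) \leq x$ term by term, and recognize the resulting sum as the squared $\ell_2$ distance.

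There is no real obstacle here. This is the complementary bound to Proposition~\ref{prop:well-conditioned-upper}: that proposition used the regime bound $f(x) \leq 1$, which is tight when the ``large $p_i + q_i$'' regime dominates, whereas here we use $f(x) \leq x$, which is tight in the ``small $p_i + q_i$'' regime. The only care needed is to note that the bound $f(x) \leq x$ does not require any assumption on $p_i + q_i$ (unlike the well-conditioned bound), which is precisely why the resulting inequality is unconditional and picks up the extra factor of $m$.
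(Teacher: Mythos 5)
Your proof is correct. It differs slightly from the paper's: the paper splits the domain into $L = \{i : m(p_i+q_i) \geq 1\}$ and its complement, bounds $\E[\bZ_L]$ via Proposition~\ref{prop:well-conditioned-upper} with $\delta = 1/m$, bounds $\E[\bZ_{\overline{L}}]$ via $f(x) \leq x$, and then recombines using additivity of $\dlt^2$ over the partition. You instead observe that $f(x) \leq x$ holds for \emph{all} $x > 0$ (as Fact~\ref{fact:f-upper} indeed asserts), so the single substitution $f(m(p_i+q_i)) \leq m(p_i+q_i)$ applied termwise cancels the denominator everywhere and yields the claim in one line. The two arguments coincide on $\overline{L}$, and on $L$ they are equivalent because there $f(x) \leq 1 \leq x$, so your uniform bound is never worse; your version is the cleaner of the two, dispensing with the case split and the appeal to Proposition~\ref{prop:well-conditioned-upper} entirely.
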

\begin{proof}
Let $L$ be the set of $i$ such that $m(p_i + q_i) \geq 1$.
Then $\E[\bZ] = \E[\bZ_L] + \E[\bZ_{\overline{L}}]$, and by Proposition~\ref{prop:well-conditioned-upper},
$\E[\bZ_L] \leq m^2 \dlt^2(p_L, q_L)$.
On the other hand, by Fact~\ref{fact:f-upper}, $f(x) \leq x$, and therefore
\begin{equation*}
\E[\bZ_{\overline{L}}]
= \sum_{i \in \overline{L}} \frac{(p_i-q_i)^2}{p_i+q_i} m \cdot f(m(p_i + q_i))
\leq \sum_{i \in \overline{L}} (p_i-q_i)^2 m^2
= m^2 \dlt^2(p_{\overline{L}}, q_{\overline{L}}).
\end{equation*}
The proof is completed by noting that $\dlt^2(p_L, q_L) + \dlt^2(p_{\overline{L}}, q_{\overline{L}}) = \dlt^2(p, q).$
\end{proof}

Now we give a pair of lower bounds.

\begin{proposition}\label{prop:pretty-much-a-trivial-lower-bound-i-dont-know-what-to-tell-you}
Suppose for every $i \in A$, $m(p_i + q_i) \geq 1$.  Then
\begin{equation*}
\E[\bZ_A] \geq \frac{2m}{3} \dh^2(p_A,q_A).
\end{equation*}
\end{proposition}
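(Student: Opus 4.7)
The plan is to combine the exact expression for $\E[\bZ_A]$ from Proposition~\ref{prop:mean-var} with a simple lower bound on the corrective factor $f$ in the regime where $m(p_i+q_i) \geq 1$, and then convert the resulting sum to a Hellinger-squared distance via Proposition~\ref{prp:didnt-know-it-was-hellinger}.

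First I would invoke Proposition~\ref{prop:mean-var}, which gives
\begin{equation*}
\E[\bZ_A] = \sum_{i \in A} \frac{(p_i-q_i)^2}{p_i+q_i}\, m \cdot f(m(p_i+q_i)),
\end{equation*}
with $f(x) = 1 - (1-e^{-x})/x$. The key observation is that $f$ is monotonically increasing on $(0,\infty)$ (as already noted in the discussion preceding the proposition), so under the hypothesis $m(p_i+q_i) \geq 1$ we have $f(m(p_i+q_i)) \geq f(1) = 1 - (1-e^{-1}) = e^{-1} > 1/3$ for every $i \in A$.

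Plugging this pointwise lower bound into the formula yields
\begin{equation*}
\E[\bZ_A] \geq \frac{m}{3} \sum_{i \in A} \frac{(p_i-q_i)^2}{p_i+q_i}.
\end{equation*}
Then Proposition~\ref{prp:didnt-know-it-was-hellinger}, applied to the restrictions $p_A$ and $q_A$ (its proof goes through verbatim for arbitrary nonnegative vectors, and we are summing only over $i \in A$), gives $\sum_{i \in A}(p_i-q_i)^2/(p_i+q_i) \geq 2\dh^2(p_A,q_A)$, so we conclude $\E[\bZ_A] \geq \tfrac{2m}{3}\dh^2(p_A,q_A)$, as desired.

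There is no real obstacle here; the only thing to double-check is the claim that $f$ is increasing (so that $f \geq e^{-1}$ on $[1,\infty)$). Writing $f(x) = 1 - g(x)$ with $g(x) = (1-e^{-x})/x$, a quick derivative computation shows $g'(x) = ((x+1)e^{-x}-1)/x^2$, and $h(x) := (x+1)e^{-x}$ satisfies $h(0)=1$, $h'(x) = -xe^{-x} < 0$, so $h(x) < 1$ and hence $g'(x) < 0$ for all $x>0$. This confirms monotonicity and completes the proof.
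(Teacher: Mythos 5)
Your proof is correct and follows essentially the same route as the paper's: use the exact expression for $\E[\bZ_A]$ from Proposition~\ref{prop:mean-var}, lower bound $f(m(p_i+q_i)) \geq f(1) = e^{-1} \geq 1/3$ by monotonicity, and convert the triangle-distance sum to Hellinger via Proposition~\ref{prp:didnt-know-it-was-hellinger}. The explicit derivative check confirming that $f$ is increasing is a welcome addition that the paper leaves implicit.
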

\begin{proof}
Because $f(x)$ is monotonically increasing and $f(1) = 1/e$,
\begin{equation*}
\E[\bZ_A]
= m \sum_{i \in A} \frac{(p_i-q_i)^2}{p_i+q_i} f(m(p_i+q_i))
\geq m \sum_{i \in A} \frac{(p_i-q_i)^2}{p_i+q_i} f(1)
\geq \frac{2m}{e}\dh^2(p_A,q_A),
\end{equation*}
where the first step is by Proposition~\ref{prop:mean-var} and the last is by Proposition~\ref{prp:didnt-know-it-was-hellinger}.
The result follows from $e \leq 3$.
\end{proof}

The next proposition is essentially the second half of the proof of Lemma~$4$ from~\cite{ChanDVV14}.

\begin{proposition}\label{prop:their-bound}
For any subset~$A$,
\begin{equation*}
\E[\bZ_A] \geq \left(\frac{4m^2}{2|A| + m\cdot(p(A) + q(A))}\right)\cdot \dtv^2(p_A,q_A),
\end{equation*}
where we write $p(A) = \sum_{i \in A} p(i)$ and likewise for~$q(A)$.
\end{proposition}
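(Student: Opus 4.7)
The plan is to combine the exact formula for $\E[\bZ_A]$ from Proposition~\ref{prop:mean-var} with an appropriate lower bound on the corrective factor $f(m(p_i+q_i))$, and then apply Cauchy--Schwarz to convert a sum of squares into a squared sum (which produces the $\dtv$ on the right-hand side). The main technical point is to find a lower bound on $f$ that matches both of the asymptotic regimes simultaneously: $f(x) \approx x/2$ for small $x$ and $f(x) \to 1$ for large $x$, so that the result works without splitting the domain by whether $m(p_i+q_i)$ is above or below $1$.

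First I would establish the pointwise inequality
\[
f(x) \;\geq\; \frac{x}{x+2} \quad\text{for all } x > 0.
\]
This can be verified by clearing denominators: the claim reduces to showing that $g(x) := x - 2 + (x+2) e^{-x} \geq 0$ for $x>0$. Since $g(0)=0$, $g'(0)=0$, and $g''(x) = x e^{-x} > 0$ on $(0,\infty)$, both $g'$ and then $g$ are nonnegative on $(0,\infty)$. Note that the bound is tight at $x \to 0$ and matches the shape of $f$ throughout, which is why this is the right envelope for the argument.

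Plugging this into Equation~\eqref{eq:z-mean} gives
\[
\E[\bZ_A] \;\geq\; \sum_{i \in A} \frac{(p_i - q_i)^2}{p_i + q_i} \cdot m \cdot \frac{m(p_i+q_i)}{m(p_i+q_i)+2} \;=\; \sum_{i \in A} \frac{m^2 (p_i - q_i)^2}{m(p_i+q_i) + 2}.
\]
Now I would apply Cauchy--Schwarz in the form
\[
\Bigl(\sum_{i \in A} |p_i - q_i|\Bigr)^2 \;\leq\; \Bigl(\sum_{i \in A} \frac{m^2 (p_i-q_i)^2}{m(p_i+q_i)+2}\Bigr) \cdot \Bigl(\sum_{i \in A} \frac{m(p_i+q_i)+2}{m^2}\Bigr),
\]
pairing the factors $m|p_i-q_i|/\sqrt{m(p_i+q_i)+2}$ with $\sqrt{m(p_i+q_i)+2}/m$. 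The left-hand side equals $4\,\dtv^2(p_A,q_A)$, and the second factor on the right equals $(m(p(A)+q(A)) + 2|A|)/m^2$. Rearranging yields exactly
\[
\E[\bZ_A] \;\geq\; \frac{4m^2 \, \dtv^2(p_A,q_A)}{2|A| + m(p(A) + q(A))},
\]
as desired. The only non-routine step is the bound $f(x) \geq x/(x+2)$; everything else is direct manipulation plus Cauchy--Schwarz, and I do not expect any serious obstacle.
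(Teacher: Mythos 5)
Your proof is correct and follows essentially the same route as the paper: the key bound $f(x) \geq x/(x+2)$ is exactly the paper's statement that $g(x) = x f(x)^{-1} \leq 2 + x$, and the Cauchy--Schwarz pairing is the same (the paper merely applies the bound after Cauchy--Schwarz rather than before). As a small bonus, you actually verify the envelope inequality via the second-derivative argument, which the paper asserts without proof.
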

\begin{proof}
Consider the function $g(x) = x f(x)^{-1}$.
Then $g(x) \leq 2+x$ for nonnegative~$x$.
Furthermore,
\begin{equation*}
\frac{(p_i - q_i)^2}{g(m(p_i + q_i))} = \frac{(p_i-q_i)^2}{m(p_i+q_i)} \left(1 - \frac{1-e^{-m(p_i+q_i)}}{m(p_i + q_i)}\right),
\end{equation*}
which, from Proposition~\ref{prop:mean-var}, is $\frac{1}{m^2} \cdot \E[\bZ_{\{i\}}]$.
As a result,
\begin{multline*}
\dtv^2(p_A,q_A)
= \frac14 \left(\sum_{i \in A} |p_i - q_i|\right)^2
= \frac14 \left(\sum_{i \in A} |p_i - q_i|\cdot \frac{\sqrt{g(m(p_i+q_i))}}{\sqrt{g(m(p_i+q_i))}}\right)^2\\
\leq \frac14\left(\sum_{i \in A} \frac{(p_i - q_i)^2}{g(m(p_i + q_i))}\right) \cdot \left(\sum_{i \in A} g(m(p_i + q_i))\right)
\leq \frac{1}{4m^2} \cdot \E[\bZ_A] \cdot(2|A| + m\cdot (p(A) + q(A))),
\end{multline*}
where the first inequality is Cauchy-Schwarz. Rearranging finishes the proof.
\end{proof}

\subsection{Equivalence Testing with Total Variation Distance}
\label{sec:eq-tv}
In this section, we prove Theorem~\ref{thm:twos-tv}. 
We will take the number of samples to be 
\begin{equation}\label{eq:l1-max}
m = \max\left\{C\cdot \frac{n^{2/3}}{\epsilon^{4/3}}, C^{3/2}\cdot \frac{n^{1/2}}{\epsilon^2}\right\},
\end{equation}
where~$C$ is some constant which can be taken to be~$10^{10}$.
 
Rather than drawing samples from~$p$ or~$q$,
our algorithm draws samples from~$p^{+1/2}$ and~$q^{+1/2}$.
By Proposition~\ref{prop:mixing}, we have the following guarantees in the two cases:
\begin{equation*}
\text{(Case 1):}~\dlt(p^{+1/2},q^{+1/2}) \leq \frac{\epsilon}{4 \sqrt{n}},
\qquad
\text{(Case 2):}~\dtv(p^{+1/2},q^{+1/2}) \geq \frac{\epsilon}{2}.
\end{equation*}
Furthermore, for any $i \in [n]$, we know the $i$-th coordinates of $p^{+1/2}$ and~$q^{+1/2}$ are both at least~$\frac{1}{2n}$.
Henceforth, we will write~$p'$ and~$q'$ for~$p^{+1/2}$ and~$q^{+1/2}$, respectively.

In Case 1, if we apply Proposition~\ref{prop:well-conditioned-upper} with $A = [n]$ and $\delta = \frac{1}{n}$
and Proposition~\ref{prop:general-upper},
\begin{equation*}
\E[\bZ]
\leq \min\{m^2, mn\} \cdot \dlt^2(p',q')
\leq \min\{m^2, mn\} \cdot \frac{\epsilon^2}{16 n}
\leq \frac{m^2}{4(2m + 2n)} \cdot \epsilon^2.
\end{equation*}
On the other hand, in Case 2, applying Proposition~\ref{prop:their-bound} with $A = [n]$,
\begin{equation*}
\E[\bZ] \geq \frac{4m^2}{2m + 2n} \cdot \dtv(p',q')^2 \geq \frac{m^2}{2m + 2n} \cdot \epsilon^2.
\end{equation*}
Our algorithm therefore thresholds~$\bZ$ on the value $\frac{5 m^2}{8(2m +2n)} \epsilon^2$,
outputting ``close" if it's below this value and ``far" otherwise.

The two bounds in~\eqref{eq:l1-max} meet when $C^3 \epsilon^{-4} = n$,
which is exactly when $m = n$.
When $m \leq n$, the first bound applies, and when $m > n$ the second bound applies.
As a result, we will split our analysis into the two cases.

\begin{lemma}
The tester succeeds in the $m \leq n$ case of Theorem~\ref{thm:twos-tv}.
\end{lemma}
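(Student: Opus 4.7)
The plan is to apply Chebyshev's inequality to~$\bZ$, using the mean bounds already established above and a variance bound from Corollary~\ref{cor:var-hel}. The separation between the Case~1 upper bound on $\E[\bZ]$ and the threshold $\frac{5m^2}{8(2m+2n)}\epsilon^2$ equals $\frac{5m^2\epsilon^2}{8(2m+2n)} - \frac{2m^2\epsilon^2}{8(2m+2n)} = \frac{3m^2\epsilon^2}{8(2m+2n)}$. Symmetrically, the Case~2 lower bound exceeds the threshold by the same quantity $\frac{8m^2\epsilon^2}{8(2m+2n)} - \frac{5m^2\epsilon^2}{8(2m+2n)} = \frac{3m^2\epsilon^2}{8(2m+2n)}$. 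Specializing to $m \leq n$ gives $2m+2n \leq 4n$, so in both cases the gap between $\E[\bZ]$ and the threshold is at least $\frac{3m^2\epsilon^2}{32n}$.

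Next I would bound $\Var[\bZ]$ using Corollary~\ref{cor:var-hel}. Since $m \leq n$, the first term contributes $2m$. The second term is $20m \dh^2(p',q') \leq 20m$, using only the trivial bound $\dh^2 \leq 1$ which holds in both cases. Hence $\Var[\bZ] \leq 22m$ throughout the regime $m \leq n$. (A sharper Case~1 bound is available by combining $\dh^2(p',q') \leq \dtv(p',q')$ with Proposition~\ref{prop:ltinequalities} and the hypothesis $\dlt(p',q') \leq \epsilon/(4\sqrt{n})$, but it is not needed here.)

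Finally, Chebyshev's inequality bounds the failure probability in each case by
\begin{equation*}
\frac{\Var[\bZ]}{(\text{gap})^2} \;\leq\; \frac{22m}{\bigl(3m^2\epsilon^2/(32n)\bigr)^2} \;=\; O\!\left(\frac{n^2}{m^3 \epsilon^4}\right).
\end{equation*}
Substituting $m = C n^{2/3}/\epsilon^{4/3}$ from~\eqref{eq:l1-max} yields $O(1/C^3)$, which for $C = 10^{10}$ is comfortably below $1/3$.

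The computation is routine given the mean and variance bounds already in hand; the only subtlety is that the quantity $m^3 \epsilon^4/n^2$ produced by the Chebyshev calculation is exactly what drives the choice $m = \Theta(n^{2/3}/\epsilon^{4/3})$ in this regime, so one simply needs to check the constants line up with the choice of $C$ made in~\eqref{eq:l1-max}.
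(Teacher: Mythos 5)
Your proposal is correct and follows essentially the same route as the paper: the $\Var[\bZ] \leq 22m$ bound from Corollary~\ref{cor:var-hel} with the trivial $\dh \leq 1$ estimate, followed by Chebyshev against the gap $\frac{3m^2\epsilon^2}{8(2m+2n)}$ in each case, yielding $O(n^2/(m^3\epsilon^4))$. The only cosmetic difference is that you use the absolute gap in Case~2 where the paper writes the deviation as a $3/8$ fraction of $\E[\bZ]$, and you explicitly substitute $m = Cn^{2/3}/\epsilon^{4/3}$ to exhibit the $O(1/C^3)$ failure probability; both are equivalent.
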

\begin{proof}
By Corollary~\ref{cor:var-hel}
\begin{equation*}
\Var[\bZ] \leq 2 \min\{m, n\} +  20m \dh(p',q')^2
\leq 22m,
\end{equation*}
where we used the fact that $\dh(p',q') \leq 1$.
In Case 1, by Chebyshev's inequality,
\begin{equation*}
\Pr\left[\bZ \geq \frac{5 m^2}{8(2m +2n)} \epsilon^2\right]
\leq \frac{\Var[\bZ]}{\left(\frac{3m^2}{8(2m +2n)} \epsilon^2\right)^2}
= O\left(\frac{m}{\frac{m^4}{n^2} \epsilon^4}\right)
= O\left(\frac{n^2}{m^3 \epsilon^4}\right).
\end{equation*}
In Case 2,
\begin{equation*}
\Pr\left[\bZ \leq \frac{5 m^2}{8(2m +2n)} \epsilon^2\right]
\leq \frac{64 \Var[\bZ]}{9\E[\bZ]^2}
= O\left(\frac{m}{\frac{m^4}{n^2} \epsilon^4}\right)
= O\left(\frac{n^2}{m^3 \epsilon^4}\right).
\end{equation*}
Both of these bounds can be made arbitrarily small constants by setting~$C$ sufficiently large.
\end{proof}

\begin{lemma}
The tester succeeds in the $m \geq n$ case of Theorem~\ref{thm:twos-tv}.
\end{lemma}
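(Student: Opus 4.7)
The plan is to apply Chebyshev's inequality to the statistic $\bZ$ with the same threshold $T = \frac{5m^2\epsilon^2}{8(2m+2n)}$ used in the $m \leq n$ case, but with a finer analysis of both the gap and the variance tailored to the regime $m \geq n$. In this regime $2m+2n \in [2m, 4m]$, so the basic gap $\frac{3m^2\epsilon^2}{8(2m+2n)}$ is at least $\frac{3m\epsilon^2}{32}$. Equivalently, $\E[\bZ] \leq \tfrac{2T}{5}$ in Case~1 and $\E[\bZ] \geq \tfrac{8T}{5}$ in Case~2, so in Case~2 the gap is also at least $\tfrac{3}{8}\E[\bZ]$ -- a bound that scales with the mean, which will be important when $\E[\bZ]$ is large.

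The main obstacle is controlling $\Var[\bZ]$ in Case~2, where we have no direct upper bound on $\dh^2(p',q')$. The naive bound $\dh \leq 1$ in Corollary~\ref{cor:var-hel} gives only $\Var[\bZ] \leq 22m$, which produces a Chebyshev error of $O(1/(m\epsilon^4)) = O(1/(C^{3/2}\sqrt{n}\epsilon^2))$ -- not a small constant when $\sqrt{n}\epsilon^2$ is tiny. I resolve this by exploiting the mixing: since $p'_i + q'_i \geq 1/n$ for every $i$, the regime $m \geq n$ forces $m(p'_i + q'_i) \geq 1$ and hence $f(m(p'_i+q'_i)) \geq f(1) = 1/e$ uniformly in $i$. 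Plugging this lower bound on $f$ into the expression for $\E[\bZ]$ from Proposition~\ref{prop:mean-var} gives $m\sum_i (p'_i-q'_i)^2/(p'_i+q'_i) \leq e\,\E[\bZ]$, and therefore the sharper variance bound $\Var[\bZ] \leq 2n + 5e\,\E[\bZ]$.

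With these tools in place I would handle the two cases separately. In Case~1, the mixing bound $p'_i+q'_i \geq 1/n$ combined with the hypothesis $\dlt(p',q') \leq \epsilon/(4\sqrt{n})$ directly gives $\dh^2(p',q') \leq \epsilon^2/32$, so $\Var[\bZ] \leq 2n + O(m\epsilon^2)$; dividing by $(3m\epsilon^2/32)^2$ and substituting $m = C^{3/2}\sqrt{n}/\epsilon^2$ yields Chebyshev error $O(1/C^3 + 1/(C^{3/2}\sqrt{n}))$. In Case~2, I would use the variance bound $\Var[\bZ] \leq 2n + 5e\,\E[\bZ]$ together with the mean-scaled gap $\geq \tfrac{3}{8}\E[\bZ]$, giving Chebyshev error $O(n/\E[\bZ]^2 + 1/\E[\bZ])$. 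Since $\E[\bZ] \geq m\epsilon^2/4$ in this regime, this again reduces to $O(1/C^3 + 1/(C^{3/2}\sqrt{n}))$, which is at most an arbitrarily small constant once $C$ is taken large enough, completing the proof.
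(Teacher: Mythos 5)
Your proof is correct, and its skeleton (Chebyshev on the same threshold $\frac{5m^2\epsilon^2}{8(2m+2n)}$, separate variance analyses for the two cases, the same final error terms $O(n/(m^2\epsilon^4) + 1/(m\epsilon^2))$) matches the paper's. The one place you diverge is the Case~2 variance bound. The paper partitions the domain into $L = \{i : m(p_i'+q_i') \geq 1\}$ and its complement, bounds $\dh^2(p_{\overline{L}}',q_{\overline{L}}') \leq n/2m$, and then invokes Proposition~\ref{prop:pretty-much-a-trivial-lower-bound-i-dont-know-what-to-tell-you} on $L$ to get $\Var[\bZ] \leq 12n + 30\,\E[\bZ]$. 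You instead observe that the $+\tfrac12$ mixing forces $p_i'+q_i' \geq 1/n$ for every $i$, so in the $m \geq n$ regime every coordinate satisfies $m(p_i'+q_i') \geq 1$ and $f(m(p_i'+q_i')) \geq 1/e$ uniformly; hence the triangle-distance term in Proposition~\ref{prop:mean-var}'s variance bound is at most $e\,\E[\bZ]$ directly, giving $\Var[\bZ] \leq 2n + 5e\,\E[\bZ]$. These are really the same argument: your observation amounts to noticing that $L = [n]$ and $\overline{L} = \emptyset$ here, which the paper's more general bookkeeping obscures. Your version is slightly cleaner in this specific regime and correctly identifies why the naive $\Var[\bZ] \leq 22m$ bound fails (the $1/(m\epsilon^4)$ term need not be small); the paper's $L/\overline{L}$ machinery pays off only in the Hellinger tester of Theorem~\ref{thm:twos-h}, where the mixing weight is $\delta = \epsilon^2/32$ and $\overline{L}$ can genuinely be nonempty. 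Your Case~1 computation and the reduction of the final bounds to $O(1/C^3 + 1/(C^{3/2}\sqrt{n}))$ via $\E[\bZ] \geq m\epsilon^2/4$ and $m \geq C^{3/2}\sqrt{n}/\epsilon^2$ are also correct.
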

\begin{proof}
We first consider Case 1.
By Proposition~\ref{prop:mean-var},
\begin{equation*}
\Var[\bZ]
\leq 2 \min\{m, n\} + \sum_{i=1}^n 5m \frac{(p_i' - q_i')^2}{p_i' + q_i'}
\leq 2 n + 5 m n \dlt^2(p',q')
\leq 2 n + \tfrac{5}{16} m \epsilon^2.
\end{equation*}
Then, we have that
\begin{equation*}
\Pr\left[\bZ \geq \frac{5 m^2}{8(2m +2n)} \epsilon^2\right]
\leq \frac{\Var[\bZ]}{\left(\frac{3m^2}{8(2m +2n)} \epsilon^2\right)^2}
= O\left(\frac{n}{m^2 \epsilon^4} + \frac{m\epsilon^2}{m^2 \epsilon^4}\right)
= O\left(\frac{n}{m^2 \epsilon^4} + \frac{1}{m \epsilon^2}\right).
\end{equation*}
Next, we focus on Case 2.
Write $L$ for the set of $i \in [n]$ such that $m(p_i' + q_i') \geq 1$.
Then $\dh^2(p_{\overline{L}}',q_{\overline{L}}') \leq \frac12 \sum_{i \in \overline{L}} (p_i' + q_i') \leq n/2m$.
As a result, 
by Corollary~\ref{cor:var-hel}
\begin{equation*}
\Var[\bZ] \leq 2 \min\{m, n\} +  20m \dh^2(p',q')
\leq 12 n + 20m \dh^2(p_L',q_L'). 
\end{equation*}
By Proposition~\ref{prop:pretty-much-a-trivial-lower-bound-i-dont-know-what-to-tell-you},
$\E[\bZ] \geq \frac{2m}{3} \dh^2(p_L',q_L')$.
Hence, 
\begin{align*}
\Pr\left[\bZ \leq \frac{5 m^2}{8(2m +2n)} \epsilon^2\right]
&\leq \frac{64 \Var[\bZ]}{9\E[\bZ]^2}
= O\left(\frac{n}{\E[\bZ]^2} + \frac{m \dh^2(p_L',q_L')}{\E[\bZ]^2}\right)\\
&= O\left(\frac{n}{\E[\bZ]^2} + \frac{1}{\E[\bZ]}\right)
= O\left(\frac{n}{m^2 \epsilon^4} + \frac{1}{m \epsilon^2}\right).
\end{align*}
Both of these bounds can be made arbitrarily small constants by setting~$C$ sufficiently large.
\end{proof}

\subsection{Equivalence Testing with Hellinger Distance}
\label{sec:eq-h}

In this section, we prove Theorem~\ref{thm:twos-h}. 
We will take the number of samples to be 
\begin{equation*}
m = \min\left\{C\cdot \frac{n^{2/3}}{\epsilon^{8/3}}, C^{3/4}\cdot \frac{n^{3/4}}{\epsilon^2}\right\},
\end{equation*}
where~$C$ is some constant which can be taken to be~$10^{10}$.

Rather than drawing samples from~$p$ or~$q$,
our algorithm draws samples from~$p^{+\delta}$ and~$q^{+\delta}$
for $\delta = \epsilon^2/32$.
By Proposition~\ref{prop:mixing}, we have the following guarantees in the two cases:
\begin{equation*}
\text{(Case 1):}~\dlt(p,q) \leq \frac{\epsilon^2}{32 \sqrt{n}},
\qquad
\text{(Case 2):}~\dh(p,q) \geq \frac{1}{2} \epsilon.
\end{equation*}
Furthermore, for any $i \in [n]$, we know the $i$-th coordinates of $p^{+\delta}$ and~$q^{+\delta}$ are both at least~$\frac{\epsilon^2}{32n}$.
Henceforth, we will write~$p'$ and~$q'$ for~$p^{+\delta}$ and~$q^{+\delta}$, respectively.

The two bounds meet when $C^{3/4}\epsilon^{-2} = n^{1/4}$,
which is exactly when $m = n$.
When $m \leq n$, the first bound applies, and when $m > n$ the second bound applies.
As a result, we will split our analysis into the two cases.

\begin{lemma}
The tester succeeds in the $m \leq n$ case of Theorem~\ref{thm:twos-h}.
\end{lemma}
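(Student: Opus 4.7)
The plan is to apply Chebyshev's inequality to the statistic $\bZ$ evaluated on Poisson samples from the mixtures $p' := p^{+\delta}$ and $q' := q^{+\delta}$ with $\delta = \ve^2/32$, and then threshold $\bZ$ to separate the two cases. By Proposition~\ref{prop:mixing}, Case~1 implies $\dlt(p',q') \le \ve^2/(32\sqrt{n})$, Case~2 implies $\dh(p',q') \ge \ve/2$, and every coordinate satisfies $p'_i, q'_i \ge \delta/n = \ve^2/(32n)$, so $p'_i + q'_i \ge \ve^2/(16n)$. In this regime $m = C n^{2/3}\ve^{-8/3}$ for a large constant $C$; the condition $m \le n$ forces $n\ve^8 \ge C^3$, which in particular implies $m\ve^2/(16n) \le 1$.

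For Case~1, Proposition~\ref{prop:general-upper} gives the upper bound
\[
\E[\bZ] \le m^2 \dlt^2(p',q') \le \frac{m^2\ve^4}{1024\,n}
\]
immediately. The main obstacle lies in Case~2: in the $m \le n$ regime the argument $m(p'_i+q'_i)$ can be as small as $m\ve^2/(16n) \ll 1$, so the ``large-$x$'' bound $f(m(p'_i+q'_i)) \ge 1/e$ used in Proposition~\ref{prop:pretty-much-a-trivial-lower-bound-i-dont-know-what-to-tell-you} is unavailable. Instead I would use the small-$x$ estimate $f(x) \ge x/3$ on $x \in [0,1]$ (which follows from the alternating Taylor expansion $f(x) = x/2 - x^2/6 + \cdots$), together with the uniform mixing lower bound $m(p'_i+q'_i) \ge m\ve^2/(16n)$, to deduce $f(m(p'_i+q'_i)) \ge m\ve^2/(48n)$ for every $i$. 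Plugging this into the mean formula of Proposition~\ref{prop:mean-var} and invoking Proposition~\ref{prp:didnt-know-it-was-hellinger},
\[
\E[\bZ] \ge \frac{m^2\ve^2}{48\,n}\sum_i \frac{(p'_i-q'_i)^2}{p'_i+q'_i} \ge \frac{m^2\ve^2}{24\,n}\,\dh^2(p',q') \ge \frac{m^2\ve^4}{96\,n}.
\]

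Finally, by Corollary~\ref{cor:var-hel}, $\Var[\bZ] \le 2\min\{m,n\} + 20 m\, \dh^2(p',q') \le 22\,m$. Thresholding $\bZ$ at, say, $T := m^2\ve^4/(200\,n)$, both means are separated from $T$ by $\Omega(m^2\ve^4/n)$, so Chebyshev bounds the probability of error in either case by $O(m / (m^2\ve^4/n)^2) = O(n^2/(m^3\ve^8))$. Substituting $m = C n^{2/3}\ve^{-8/3}$ makes this $O(1/C^3)$, which is an arbitrarily small constant for $C$ sufficiently large. The substantive part of the argument is entirely the Case~2 lower bound: dropping from the ``$f \ge 1/e$'' regime into the small-$x$ regime costs a factor of $m\ve^2/(16n)$ in the signal, and it is precisely this loss---compensated by mixing---that moves the sample complexity from $n^{1/2}\ve^{-2}$ (the $m \ge n$ case) to $n^{2/3}\ve^{-8/3}$.
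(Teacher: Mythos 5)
Your proof is correct, and for the one substantive step---the Case 2 lower bound on $\E[\bZ]$---you take a genuinely different route from the paper. The paper invokes Proposition~\ref{prop:their-bound} with $A=[n]$ (the Cauchy--Schwarz/TV-based bound inherited from Lemma~4 of \cite{ChanDVV14}), giving $\E[\bZ] \geq \frac{4m^2}{2n+2m}\dtv^2(p',q') \geq \frac{4m^2}{2n+2m}\dh^4(p',q') \geq \frac{m^2\ve^4}{16n}$; you instead lower bound the corrective factor pointwise, using monotonicity of $f$, the estimate $f(x)\geq x/3$ on $[0,1]$, and the floor $p_i'+q_i' \geq \ve^2/(16n)$ supplied by mixing, and then pass to Hellinger via Proposition~\ref{prp:didnt-know-it-was-hellinger}. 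Both yield $\E[\bZ]=\Omega(m^2\ve^4/n)$, and the rest (Case 1 via Proposition~\ref{prop:general-upper}, the $\Var[\bZ]\leq 22m$ bound, thresholding, Chebyshev) matches the paper. The trade-off: the paper's route reuses an existing proposition and does not need the pointwise mass floor at this step, whereas your route is more self-contained and makes transparent exactly where the $n^{2/3}\ve^{-8/3}$ rate comes from (losing a factor of order $m\ve^2/n$ in $f$ relative to the $m\geq n$ regime). The cost is that your argument leans on the $+\delta$ mixing in an additional place; also, your stated deduction ``$f(m(p_i'+q_i'))\geq m\ve^2/(48n)$ for every $i$'' implicitly uses monotonicity of $f$ to handle coordinates with $m(p_i'+q_i')>1$, which is worth making explicit, though it is immediate since $f(1)=1/e \geq 1/3 \geq m\ve^2/(48n)$ under your verified condition $m\ve^2/(16n)\leq 1$.
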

\begin{proof}
In Case 1, if we apply Proposition~\ref{prop:general-upper},
\begin{equation*}
\E[\bZ]
\leq m^2 \cdot  \dlt^2(p',q')
\leq \frac{m^2 \epsilon^4}{32^2 n}.
\end{equation*}
On the other hand, in Case 2, applying Proposition~\ref{prop:their-bound} with $A = [n]$,
\begin{equation*}
\E[\bZ]
\geq \left(\frac{4m^2}{2n+2m}\right) \cdot \dtv(p',q')^2
\geq \left(\frac{4m^2}{2n+2m}\right) \cdot \dh(p',q')^4
\geq \frac{m^2\epsilon^4}{16n}.
\end{equation*}
Our algorithm therefore thresholds~$\bZ$ on the value $ \frac{m^2\epsilon^4}{128n}$,
outputting ``close" if it's below this value and ``far" otherwise.

By Corollary~\ref{cor:var-hel}
\begin{equation*}
\Var[\bZ] \leq 2 \min\{m, n\} +  20m \dh(p',q')^2
\leq 22m,
\end{equation*}
where we used the fact that $\dh(p',q') \leq 1$.
In Case 1,
\begin{equation*}
\Pr\left[\bZ \geq \frac{m^2\epsilon^4}{128n}\right]
\leq \frac{\Var[\bZ]}{\left(\frac{m^2\epsilon^4}{256n}\right)^2}
= O\left(\frac{m}{\frac{m^4}{n^2} \epsilon^8}\right)
= O\left(\frac{n^2}{m^3 \epsilon^8}\right).
\end{equation*}
In Case 2,
\begin{equation*}
\Pr\left[\bZ \leq \frac{m^2\epsilon^4}{128n}\right]
\leq \frac{64 \Var[\bZ]}{49\E[\bZ]^2}
= O\left(\frac{m}{\frac{m^4}{n^2} \epsilon^8}\right)
= O\left(\frac{n^2}{m^3 \epsilon^8}\right).
\end{equation*}
Both of these bounds can be made arbitrarily small constants by setting~$C$ sufficiently large.
\end{proof}

\begin{lemma}
The tester succeeds in the $m > n$ case of Theorem~\ref{thm:twos-h}.
\end{lemma}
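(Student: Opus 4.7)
The plan is to threshold the statistic $\bZ$ using sub-regime-dependent bounds on $\E[\bZ]$ that exploit both the uniform lower bound $p'_i + q'_i \geq \ve^2/(16n)$ (from the mixing step) and a heavy/light decomposition of the domain. In this regime $m = C^{3/4} n^{3/4}/\ve^2$.

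For Case 1 ($\dlt(p',q')\leq \ve^2/(32\sqrt{n})$), combining Proposition~\ref{prop:well-conditioned-upper} (with $\delta = \ve^2/(16n)$) and Proposition~\ref{prop:general-upper} yields
\begin{equation*}
\E[\bZ] \leq U := \min\!\left(\frac{m\ve^2}{64},\; \frac{m^2\ve^4}{1024\,n}\right).
\end{equation*}
For Case 2 ($\dh(p',q') \geq \ve/2$), partition $[n]$ into the heavy set $L = \{i : m(p'_i + q'_i) \geq 1\}$ and its complement $\bar L$. Since $\dh^2(p',q')\geq \ve^2/4$, either $\dh^2(p'_L, q'_L) \geq \ve^2/8$ or $\dh^2(p'_{\bar L}, q'_{\bar L}) \geq \ve^2/8$. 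In the heavy sub-case, Proposition~\ref{prop:pretty-much-a-trivial-lower-bound-i-dont-know-what-to-tell-you} yields $\E[\bZ] \geq \E[\bZ_L] \geq m\ve^2/12$. In the light sub-case, Proposition~\ref{prop:their-bound} applied with $A = \bar L$ gives $\E[\bZ_{\bar L}] \geq m^2\ve^4/(48 n)$: on $\bar L$, $m \cdot (p'(\bar L) + q'(\bar L)) < n$ bounds the denominator by $3n$, and $\dtv \geq \dh^2 \geq \ve^2/8$ supplies the numerator. Thus $\E[\bZ] \geq \mathcal{L} := \min(m\ve^2/12,\; m^2\ve^4/(48 n))$, and a short case check (splitting on whether $m\ve^2 \geq 16 n$, $4n \leq m\ve^2 < 16n$, or $m\ve^2 < 4n$) verifies $\mathcal{L} \geq 5 U$ in every sub-regime. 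I therefore threshold $\bZ$ at $T = 3U$.

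For the variance, I split $\sum_i (p'_i - q'_i)^2/(p'_i + q'_i)$ into its parts over $L$ and $\bar L$: on $\bar L$ we have $(p'_i - q'_i)^2 \leq (p'_i + q'_i)^2$, so the light contribution is at most $\sum_{\bar L}(p'_i+q'_i) \leq n/m$; on $L$, Proposition~\ref{prp:didnt-know-it-was-hellinger} bounds the contribution by $4\dh^2(p'_L, q'_L)$, which Proposition~\ref{prop:pretty-much-a-trivial-lower-bound-i-dont-know-what-to-tell-you} further bounds by $(6/m)\E[\bZ]$. Plugging into Proposition~\ref{prop:mean-var} gives $\Var[\bZ] \leq 7n + 30 \E[\bZ]$. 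Chebyshev's inequality then bounds the Case 1 failure probability by $(7n + 30U)/(2U)^2$, and (using $\E[\bZ] - T \geq (2/5)\E[\bZ]$ since $T \leq (3/5)\mathcal{L}$) the Case 2 failure probability by $O(n/\mathcal{L}^2) + O(1/\mathcal{L})$. Since both $U$ and $\mathcal{L}$ are $\Omega(\sqrt{n})$ in every sub-regime, these quantities can be made arbitrarily small by taking $C$ large.

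The main obstacle is the bookkeeping across sub-regimes: both $U$ and $\mathcal{L}$ are minima of two terms with different scalings ($m\ve^2$ vs.\ $m^2\ve^4/n$), so the separation $\mathcal{L}\geq 5U$ must be checked in each combination. The most delicate sub-regime is $4n \leq m\ve^2 \leq 16n$, where $U$ comes from the ``general'' upper bound of Proposition~\ref{prop:general-upper} while $\mathcal{L}$ comes from the ``heavy-dominated'' lower bound; there one checks directly that the ratio $1024\,n/(12\,m\ve^2) \geq 5$ whenever $m\ve^2 \leq 16 n$. Once the sub-regime analysis is in place, the variance computation and Chebyshev step are routine.
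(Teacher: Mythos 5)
Your proposal is correct and follows essentially the same route as the paper: the same mixing, the same heavy/light split $L = \{i : m(p_i'+q_i')\geq 1\}$, Proposition~\ref{prop:pretty-much-a-trivial-lower-bound-i-dont-know-what-to-tell-you} on the heavy part and Proposition~\ref{prop:their-bound} on the light part for the Case 2 mean, and Propositions~\ref{prop:well-conditioned-upper} and~\ref{prop:general-upper} for Case 1, yielding the same pair of minima up to constants. The only notable deviation is your single unified variance bound $\Var[\bZ]\leq 7n+30\E[\bZ]$, which cleanly replaces the paper's separate Case 1 ($\ell_2$-based) and Case 2 (Hellinger sub-case split) variance arguments.
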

\begin{proof}
In Case 1, if we apply Proposition~\ref{prop:well-conditioned-upper} with $A = [n]$ and $\delta = \frac{\epsilon^2}{16n}$
and Proposition~\ref{prop:general-upper},
\begin{equation*}
\E[\bZ]
\leq \min\left\{m^2, 16\frac{mn}{\epsilon^2}\right\} \cdot \dlt^2(p',q')
\leq \min\left\{m^2, 16\frac{mn}{\epsilon^2}\right\} \cdot \frac{\epsilon^4}{32^2 n}
= \min\left\{\frac{m^2\epsilon^4}{32^2 n}, \frac{m\epsilon^2}{64}\right\}.
\end{equation*}
Case 2 is more complicated.
We will need to define the set of ``large" coordinates
$L = \{i : m (p_i' + q_i') \geq 1\}$
and the set of ``small" coordinates $S = [n] \setminus L$.
Applying Proposition~\ref{prop:their-bound} to~$S$, we have
\begin{equation*}
\E[\bZ_S] \geq \left(\frac{4m^2}{2|S| + m\cdot(p'(S) + q'(S))}\right)\cdot \dtv^2(p_S',q_S')
\geq \frac{4m^2}{3n} \dtv^2(p_S',q_S'),
\end{equation*}
where $m\cdot(p'(S)+q'(S)) \leq n$ by the definition of~$S$.
If we also apply Proposition~\ref{prop:pretty-much-a-trivial-lower-bound-i-dont-know-what-to-tell-you} to~$L$,
we get
\begin{equation*}
\E[\bZ] = \E[\bZ_S] + \E[\bZ_L]
\geq \frac{4m^2}{3n} \dtv^2(p_S',q_S') + \frac{2m}{3} \dh^2(p_L',q_L')
\geq \min\left\{\frac{m^2\epsilon^4}{48n}, \frac{m\epsilon^2}{12}\right\},
\end{equation*}
where the last step follows
because $\dh^2(p_S',q_S') + \dh^2(p_L',q_L') = \dh^2(p',q')$ and $\dtv^2(p_S',q_S') \geq \dh^4(p_S',q_S')$.
As a result, we threshold~$\bZ$ on the value
\begin{equation*}
\frac{1}{2} \cdot \min\left\{\frac{m^2\epsilon^4}{48n}, \frac{m\epsilon^2}{12}\right\},
\end{equation*}
outputting ``close" if it's below this value and ``far" otherwise.

In Case 1, by Proposition~\ref{prop:mean-var},
\begin{equation*}
\Var[\bZ]
\leq 2 \min\{m, n\} + \sum_{i=1}^m 5m \frac{(p_i' - q_i')^2}{p_i' + q_i'}
\leq 2 n + \frac{80 m n}{\epsilon^2} \Vert p' - q' \Vert_2^2
\leq 2 n + \frac{5}{64}m \epsilon^2.
\end{equation*}
Hence, by Chebyshev's inequality,
\begin{multline*}
\Pr\left[\bZ \geq\frac{1}{2} \cdot \min\left\{\frac{m^2\epsilon^4}{48n}, \frac{m\epsilon^2}{12}\right\}\right]
\leq \frac{\Var[\bZ]}{(\frac{1}{8} \cdot \min\left\{\frac{m^2\epsilon^4}{48n}, \frac{m\epsilon^2}{12}\right\})^2}\\
\leq O\left(\frac{n}{(\frac{m^2 \epsilon^4}{n})^2} + \frac{n}{(m \epsilon^2)^2}
	+ \frac{m\epsilon^2}{(\frac{m^2 \epsilon^4}{n})^2} + \frac{m\epsilon^2}{(m\epsilon^2)^2}\right)\\
= O\left(\frac{n^3}{m^4 \epsilon^8} + \frac{n}{m^2 \epsilon^4}
	+ \frac{n^2}{m^3 \epsilon^6} + \frac{1}{m\epsilon^2}\right).
\end{multline*}
This can be made an arbitrarily small constant by setting~$C$ sufficiently large.

In Case 2, by Corollary~\ref{cor:var-hel},
\begin{equation}\label{eq:gonna-split}
\Pr\left[\bZ \leq \frac{\E[\bZ]}{2} \right]
\leq \frac{4 \Var[\bZ]}{\E[\bZ]^2}
\leq \frac{8 n + 80 m\dh(p',q')^2}{\E[\bZ]^2}.
\end{equation}
Because $\dh(p',q')^2 = \dh^2(p_S',q_S') + \dh^2(p_L',q_L')$,
either $\dh^2(p_S',q_S')$ or $\dh^2(p_L',q_L')$ is at least $\frac{1}{2}\dh^2(p',q')$.
Suppose that $\dh^2(p_S',q_S') \geq \frac{1}{2}\dh^2(p',q')$.
We note that
\begin{equation*}
m \dh^2(p_S',q_S')
= \frac{m}{2} \sum_{i \in S} (\sqrt{p_i'} - \sqrt{q_i'})^2
\leq \frac{m}{2} \sum_{i \in S} |p_i' + q_i'|
\leq \frac{n}{2},
\end{equation*}
by the definition of~$S$.
Thus,
\begin{equation*}
\eqref{eq:gonna-split}
\leq \frac{8n + 160 m \dh^2(p_S',q_S')}{(\frac{4m^2}{3n}\dtv^2(p_S',q_S'))^2}
\leq \frac{88n}{(\frac{4m^2}{3n}\dtv^2(p_S',q_S'))^2}
= O\left(\frac{n^3}{m^4 \dtv^4(p_S',q_S')}\right)
\leq O\left(\frac{n^3}{m^4 \epsilon^8}\right),
\end{equation*}
where the last step used the fact that $\dtv(p_S',q_S') \geq \dh^2(p_S',q_S') \geq \frac{1}{2}\dh^2(p',q') \geq \frac{1}{2}\epsilon^2$.

In the case when $\dh^2(p_L',q_L') \geq \frac{1}{2} \dh^2(p',q')$,
\begin{equation*}
\eqref{eq:gonna-split}
\leq \frac{8n + 160 m \dh^2(p_L',q_L')}{(\frac{2m}{3} \dh^2(p_L',q_L'))^2}
= O\left(\frac{n}{m^2 \dh^4(p_L',q_L')} + \frac{1}{m \dh^2(p_L',q_L')}\right)
\leq O\left(\frac{n}{m^2 \epsilon^4} + \frac{1}{m \epsilon^2}\right).
\end{equation*}
This can be made an arbitrarily small constant by setting~$C$ sufficiently large.
\end{proof}

\section{Upper Bounds Based on Estimation}
\label{sec:est-ub}

We start by showing a simple meta-algorithm -- in short, it says that if a testing problem is well-defined (i.e., has appropriate separation between the cases) and we can estimate one of the distances, it can be converted to a testing algorithm.
\begin{theorem}\label{thm:est-ub}
Suppose there exists an $m(n, \ve)$-sample algorithm which, given sample access to distributions $p$ and $q$ over $[n]$ and parameter $\ve$, estimates some distance $d(p,q)$ up to an additive $\ve$ with probability at least $2/3$.
Consider distances $d_X(\cdot, \cdot), d_Y(\cdot, \cdot)$ and $\ve_1, \ve_2 > 0$ such that $ d_Y(p,q) \geq \ve_2 \rightarrow d_X(p,q) > 3\ve_1/2$ and $d_X(p,q) \leq \ve_1 \rightarrow d_Y(p,q) < 2\ve_2/3$, and $d(\cdot, \cdot)$ is either $d_X(\cdot, \cdot)$ or $d_Y(\cdot, \cdot)$.

Then there exists an algorithm for equivalence testing between $p$ and $q$ distinguishing the cases:
\begin{itemize}
\item $d_X(p,q) \leq \ve_1$;
\item $d_Y(p,q) \geq \ve_2$.
\end{itemize}
The algorithm uses either $m(n, O(\ve_1))$ or $m(n, O(\ve_2))$ samples, depending on whether $d = d_X$ or $d_Y$.
\end{theorem}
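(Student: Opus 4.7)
The approach is a direct reduction: run the distance estimator with a well-chosen accuracy parameter, then threshold the returned value. The gap conditions in the hypothesis are precisely what guarantees that, on either side of the testing problem, the true distance $d(p,q)$ (whether $d=d_X$ or $d=d_Y$) lies on a predictable side of some threshold, with room to spare.

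First I would split into the two cases based on which distance the estimator computes. Suppose $d = d_X$. I would invoke the estimator with accuracy parameter $\ve_1/4$, obtaining $\hat d$ with $|\hat d - d_X(p,q)| \leq \ve_1/4$ with probability at least $2/3$. In the ``close'' case, $d_X(p,q) \leq \ve_1$ gives $\hat d \leq 5\ve_1/4$; in the ``far'' case, the hypothesis $d_Y(p,q) \geq \ve_2 \Rightarrow d_X(p,q) > 3\ve_1/2$ gives $\hat d > 3\ve_1/2 - \ve_1/4 = 5\ve_1/4$. Thus the tester outputs \accept{} iff $\hat d \leq 5\ve_1/4$. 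The sample cost is $m(n, \ve_1/4) = m(n, O(\ve_1))$.

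Next, suppose $d = d_Y$. I would invoke the estimator with accuracy $\ve_2/6$, obtaining $\hat d$ with $|\hat d - d_Y(p,q)| \leq \ve_2/6$. In the ``close'' case, the hypothesis $d_X(p,q) \leq \ve_1 \Rightarrow d_Y(p,q) < 2\ve_2/3$ gives $\hat d < 2\ve_2/3 + \ve_2/6 = 5\ve_2/6$; in the ``far'' case, $d_Y(p,q) \geq \ve_2$ yields $\hat d \geq \ve_2 - \ve_2/6 = 5\ve_2/6$. Threshold at $5\ve_2/6$, with sample cost $m(n, \ve_2/6) = m(n, O(\ve_2))$.

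There is no real obstacle here; the only substantive content is the arithmetic of choosing the accuracy parameter small enough (relative to the multiplicative slack $3/2$ or $2/3$ in the gap hypotheses) so that the estimator's error cannot straddle the threshold in either direction. The constant $2/3$ success probability of the estimator matches the required error probability for the tester, so no amplification is needed. The choices $\ve_1/4$ and $\ve_2/6$ above are conservative and can be tightened, but they already yield the claimed $m(n, O(\ve_1))$ or $m(n, O(\ve_2))$ sample complexity.
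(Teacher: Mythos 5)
Your proposal is correct and matches the paper's proof essentially verbatim: the paper also runs the estimator with accuracy $\ve_1/4$ and thresholds at $5\ve_1/4$ in the $d = d_X$ case, dismissing the $d = d_Y$ case as ``similar.'' Your explicit treatment of the $d = d_Y$ case (accuracy $\ve_2/6$, threshold $5\ve_2/6$) is a valid instantiation of that omitted symmetric argument.
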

\begin{proof}
Suppose that $d = d_X$, the other case follows similarly.
Using the $m(n, \ve_1/4)$ samples, obtain an estimate $\hat \tau$ of $d_X(p,q)$, accurate up to an additive $\ve_1/4$.
If $\hat \tau \leq 5\ve_1/4$, output that $d_X(p,q) \leq \ve_1$, else output that $d_Y(p,q) \geq \ve_2$. 
Conditioning on the correctness of the estimation algorithm, correctness for the case when $d_X(p,q) \leq \ve_1$ is immediate, and correctness for the case when $d_Y(p,q) \geq \ve_2$ follows from the separation between the cases.
\end{proof}

It is folklore that a distribution over $[n]$ can be $\ve$-learned in $\ell_2$-distance with $O(1/\ve^2)$ samples (see, i.e., \cite{ChanDVV14, Waggoner15} for a reference).
By triangle inequality, this implies that we can estimate the $\ell_2$ distance between $p$ and $q$ up to an additive $O(\ve)$ with $O(1/\ve^2)$ samples, leading to the following corollary.

\begin{corollary}\label{cor:l2-est}
There exists an algorithm for equivalence testing between $p$ and $q$ distinguishing the cases:
\begin{itemize}
\item $d(p,q) \leq f(n, \ve)$;
\item $\dlt(p,q) \geq \ve$,
\end{itemize}
where $d(\cdot, \cdot)$ is a distance and $f(n, \ve)$ is such that $\dlt(p,q) \geq \ve \rightarrow d(p,q) \geq 3f(n, \ve)/2$ and $d(p,q) \leq f(n, \ve) \rightarrow \dlt(p,q) \leq 2\ve/3$.  
The algorithm uses $O(1/\ve^2)$ samples.
\end{corollary}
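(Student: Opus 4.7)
The plan is to apply Theorem~\ref{thm:est-ub} with $d_X := d$, $d_Y := \dlt$, $\ve_1 := f(n,\ve)$, and $\ve_2 := \ve$, where the distance that gets estimated is $d_Y = \dlt$. The separation conditions that Theorem~\ref{thm:est-ub} requires (namely $d_Y(p,q) \geq \ve_2 \Rightarrow d_X(p,q) > 3\ve_1/2$ and $d_X(p,q) \leq \ve_1 \Rightarrow d_Y(p,q) < 2\ve_2/3$) are exactly the two implications assumed about $d$ and $f(n,\ve)$ in the corollary's statement, so this instantiation is legal.

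The first step is to recall the folklore $\ell_2$-learning guarantee: if $\hat p$ is the empirical distribution obtained from $m$ independent samples from $p$, then since $\E\bigl[\|\hat p - p\|_2^2\bigr] = \frac{1}{m}\sum_i p_i(1-p_i) \leq \frac{1}{m}$, Markov's inequality yields $\|\hat p - p\|_2 \leq \eta$ with probability at least $5/6$ for $m = O(1/\eta^2)$. Applying this to both $p$ and $q$ with $\eta = \ve'/2$ and taking a union bound, the triangle inequality gives
\begin{equation*}
\bigl|\,\|\hat p - \hat q\|_2 - \|p-q\|_2\,\bigr| \;\leq\; \|\hat p - p\|_2 + \|\hat q - q\|_2 \;\leq\; \ve',
\end{equation*}
with probability at least $2/3$. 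Hence $\|\hat p - \hat q\|_2$ is an additive-$\ve'$ estimator of $\dlt(p,q)$ using $m(n,\ve') = O(1/\ve'^2)$ samples.

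Second, feed this estimator into Theorem~\ref{thm:est-ub} with the choice of distance $d = d_Y$. The theorem's sample complexity in this branch is $m(n, O(\ve_2)) = m(n, O(\ve)) = O(1/\ve^2)$, which is the bound claimed by the corollary. Correctness and success probability are inherited directly from Theorem~\ref{thm:est-ub}.

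There is no real obstacle here; the proof is a one-line invocation of Theorem~\ref{thm:est-ub} once the $\ell_2$ estimator is in hand. The only mild care needed is in tracking constants so that the additive error of the $\ell_2$ estimator is a small enough fraction of $\ve$ to fit inside the $(3/2,\,2/3)$ gap used by Theorem~\ref{thm:est-ub}; this is absorbed into the $O(\cdot)$ on the sample complexity.
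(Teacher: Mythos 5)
Your proposal is correct and matches the paper's argument: the paper likewise cites the folklore $O(1/\ve^2)$-sample $\ell_2$-learning result, converts it via the triangle inequality into an additive-$O(\ve)$ estimator of $\dlt(p,q)$, and plugs that into Theorem~\ref{thm:est-ub} with $d = d_Y = \dlt$. Your write-up merely makes the folklore step explicit (via the variance bound and Markov), which is a harmless elaboration.
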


Finally, we note that total variation distance between $p$ and $q$ can be additively estimated up to a constant using $O(n/\log n)$ samples~\cite{LehmannC06, ValiantV11b, JiaoHW16}, leading to the following corollary:
\begin{corollary}\label{cor:tv-est}
For constant $\ve > 0$, there exists an algorithm for equivalence testing between $p$ and $q$ distinguishing the cases:
\begin{itemize}
\item $\dtv(p,q) \leq \ve^2/4$;
\item $\dh(p,q) \geq \ve/\sqrt{2}$.
\end{itemize}
The algorithm uses $O(n/\log n)$ samples.
\end{corollary}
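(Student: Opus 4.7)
The plan is to apply the meta-algorithm of Theorem~\ref{thm:est-ub} directly, taking $d_X = \dtv$, $d_Y = \dh$, $\ve_1 = \ve^2/4$, $\ve_2 = \ve/\sqrt{2}$, and choosing the estimated distance to be $d = d_X = \dtv$. The underlying estimator will be the total variation distance estimator of Jiao, Han, and Weissman~\cite{JiaoHW16}, which for any constant target additive accuracy $\eta$ uses $O(n/\log n)$ samples from $p$ and from $q$.

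First I would verify the separation condition that Theorem~\ref{thm:est-ub} requires in order to use $d = d_X$: namely, that $\dh(p,q) \geq \ve/\sqrt{2}$ implies $\dtv(p,q) > (3/2)\cdot(\ve^2/4)$. By the chain of inequalities from Proposition~\ref{prop:distanceinequalities}, we have $\dh^2(p,q) \leq \dtv(p,q)$, so the hypothesis $\dh(p,q) \geq \ve/\sqrt{2}$ gives $\dtv(p,q) \geq \ve^2/2 = 2\ve_1$, which is strictly larger than $3\ve_1/2$. Thus the two cases ``$\dtv(p,q) \leq \ve^2/4$'' and ``$\dh(p,q) \geq \ve/\sqrt{2}$'' map, under the quantity being estimated, to the two disjoint ranges $[0, \ve^2/4]$ and $[\ve^2/2, 1]$, respectively, with a constant-sized gap (since $\ve$ is a constant).

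With this separation in hand, the meta-algorithm simply estimates $\dtv(p,q)$ to additive accuracy $\ve_1/4 = \ve^2/16$ and thresholds the estimate at $5\ve_1/4 = 5\ve^2/16$; this threshold lies strictly between $\ve^2/4 + \ve^2/16$ and $\ve^2/2 - \ve^2/16$, so the output is correct in both cases whenever the estimate is within its claimed additive error. Since $\ve$ is constant, so is the required additive accuracy $\ve^2/16$, and therefore the estimator of~\cite{JiaoHW16} consumes only $O(n/\log n)$ samples, which yields the claimed bound. There is no real obstacle here beyond confirming the direction of the Hellinger-vs-TV inequality and checking that the additive accuracy demanded of the estimator remains constant, both of which are immediate.
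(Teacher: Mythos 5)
Your proposal is correct and matches the paper's own route: Corollary~\ref{cor:tv-est} is obtained exactly by feeding the $O(n/\log n)$-sample total variation estimator of~\cite{JiaoHW16} into the meta-algorithm of Theorem~\ref{thm:est-ub}, with the separation supplied by $\dh^2(p,q) \leq \dtv(p,q)$ from Proposition~\ref{prop:distanceinequalities}. (Your explicit threshold analysis also sidesteps the fact that the second hypothesis of Theorem~\ref{thm:est-ub}, $d_X \leq \ve_1 \Rightarrow d_Y < 2\ve_2/3$, is not needed when estimating $d_X$, which is the only direction one can actually verify here.)
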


\section{Lower Bounds}
\label{sec:lb}
We start with a simple lower bound, showing that identity testing with respect to KL divergence is impossible.
A similar observation was made in~\cite{BatuFRSW00}.
\begin{theorem}\label{thm:untestable}
No finite sample test can perform identity testing between $p$ and $q$ distinguishing the cases:
\begin{itemize}
\item $p = q$;
\item $\dkl(p,q) \geq \ve^2$.
\end{itemize}
\end{theorem}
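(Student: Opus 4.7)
\begin{proofsketch}
The plan is to exploit the key asymmetry of KL divergence: it can be made arbitrarily large by placing even vanishing probability mass on a domain element whose $q$-probability is much smaller still. By contrast, the total variation distance between the two distributions is governed by the magnitude of the masses themselves, so I can arrange for $p$ and $q$ to be totally-variation-close while being KL-far.

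First, I would fix an arbitrary sample size $m$ and an arbitrary target $\ve > 0$, and aim to construct two distributions $p$ and $q$ (supported on two points, with the remaining $n-2$ points carrying zero mass) such that (i) $\dkl(p,q) \geq \ve^2$, and (ii) $\dtv(p^{\otimes m}, q^{\otimes m}) < 1/3$. Concretely, for a large parameter $M$ to be chosen later, let
\begin{equation*}
q = (1 - e^{-M},\; e^{-M},\; 0,\; \ldots,\; 0), \qquad p = \bigl(1 - \tfrac{1}{10m},\; \tfrac{1}{10m},\; 0,\; \ldots,\; 0\bigr).
\end{equation*}
A direct calculation gives
\begin{equation*}
\dkl(p,q) \geq \tfrac{1}{10m}\log\!\left(\tfrac{1/(10m)}{e^{-M}}\right) + (1-\tfrac{1}{10m})\log\!\left(\tfrac{1-1/(10m)}{1-e^{-M}}\right) = \tfrac{M - \log(10m)}{10m} - o(1),
\end{equation*}
which tends to $\infty$ as $M \to \infty$. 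So choosing $M$ large enough relative to $\ve$ and $m$ ensures $\dkl(p,q) \geq \ve^2$.

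Next, I would check indistinguishability. By construction $\dtv(p,q) = \tfrac{1}{10m} - e^{-M} \leq \tfrac{1}{10m}$, and the standard tensorization bound $\dtv(p^{\otimes m}, q^{\otimes m}) \leq m\,\dtv(p,q) \leq \tfrac{1}{10}$ then shows that the joint laws of $m$ samples from the two distributions are within total variation $1/10$. Consequently no (possibly randomized) tester taking $m$ samples can distinguish the case $p = q$ (where the true distribution is $q$) from the case $\dkl(p,q) \geq \ve^2$ (where the true distribution is $p$ as above) with success probability greater than $\tfrac{1}{2} + \tfrac{1}{20} < \tfrac{2}{3}$, contradicting any putative finite-sample tester. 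Since $m$ was arbitrary, the conclusion follows.

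There is no real obstacle: the only thing to be careful about is that the construction is done \emph{after} fixing the sample budget $m$, which is exactly the freedom granted by the ``no finite sample test'' quantifier. The essential point — and the moral of the proof — is that since $\dkl$ is not continuous with respect to $\dtv$ (it blows up when the support of $p$ leaks outside a region where $q$ has nontrivial mass), the soundness set $\{\dkl(p,q) \geq \ve^2\}$ is not separated from $\{p = q\}$ in total variation, so no sample-based tester can exist.
\end{proofsketch}
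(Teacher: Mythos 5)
Your proof is correct and uses essentially the same idea as the paper's: a two-point construction exploiting the fact that $\dkl$ blows up when $p$ places a small amount of mass where $q$ has (essentially) none, while the two distributions remain statistically indistinguishable from $m$ samples. The paper's version is slightly more economical --- it fixes $q=(1,0)$ once and for all (so $\dkl(p,q)=\infty$ outright) and lets only $p=(1-\delta,\delta)$ vary with $\delta\to 0$, which yields the marginally stronger conclusion that the problem is untestable even for a single fixed reference distribution $q$, whereas in your construction $q$ must depend on the sample budget $m$.
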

\begin{proof}
Simply take $q = (1, 0)$ and let~$p$ be either $(1, 0)$ or $(1-\delta, \delta)$, for~$\delta > 0$ tending to zero.
Then $p = q$ in the first case and $\dkl(p,q) = \infty$ in the second, but distinguishing between these two possibilities for~$p$
takes $\Omega(\delta^{-1})\rightarrow \infty$ samples.
\end{proof}

Next, we prove our lower bound for KL tolerant identity testing.

\begin{theorem}\label{thm:ones-lb}
There exist constants $0 < s < c$, such that any algorithm for identity testing between $p$ and $q$ distinguishing the cases:
\begin{itemize}
\item $\dkl(p,q) \leq s$;
\item $\dtv(p,q) \geq c$;
\end{itemize}
requires $\Omega(n/\log n)$ samples.
\end{theorem}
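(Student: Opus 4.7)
The plan is to reduce from the Valiant--Valiant $\Omega(n/\log n)$ lower bound for tolerant uniformity testing, strengthening its yes-side guarantee from total variation to KL divergence. I take $q = U_n$, the uniform distribution on $[n]$, and construct two random ensembles $\mathcal{P}_Y, \mathcal{P}_N$ of distributions on $[n]$ satisfying: (i) every $p \in \mathcal{P}_Y$ has $\dkl(p,q) \leq s$; (ii) every $p \in \mathcal{P}_N$ has $\dtv(p,q) \geq c$; (iii) no $m$-sample tester distinguishes a random $p \sim \mathcal{P}_Y$ from $p \sim \mathcal{P}_N$ with advantage $2/3$ unless $m = \Omega(n/\log n)$.

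Each ensemble is parameterized by a \emph{profile} -- a distribution over nonnegative reals describing the (unnormalized) probability values. A distribution is drawn by sampling $n$ i.i.d.\ values from the profile and normalizing (alternatively, one works throughout in the Poissonized model so normalization is unnecessary). For the yes-profile, concentrate the support on the interval $[\alpha, \beta]$ for constants $0 < \alpha < 1 < \beta$ chosen so that $\log \beta \leq s$. Then for any $p \in \mathcal{P}_Y$, the density ratios satisfy $p_i/q_i \in [\alpha, \beta]$ (up to lower-order fluctuations in normalization, which can be absorbed), hence
\begin{equation*}
\dkl(p,q) = \sum_i p_i \log(p_i/q_i) \leq \log \beta \leq s.
\end{equation*}
For the no-profile, place an atom of constant mass $\eta$ at the value $0$, with the remaining mass supported on the same bounded range $[\alpha, \beta]$; then a constant fraction of the support of $p \in \mathcal{P}_N$ receives zero probability, yielding $\dtv(p,q) \geq c$ for a constant $c = \Omega(\eta)$.

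For (iii), I invoke the Valiant--Valiant moment-matching / fingerprinting framework: under Poissonized sampling with $m$ samples, the distribution of the sample fingerprint induced by an ensemble depends (up to $o(1)$ additive TV error) only on the first few moments of the profile, and matching the first $k = \Theta(\log n)$ moments of two profiles renders the fingerprint distributions indistinguishable at sample size $m = o(n/\log n)$. Existence of profiles $P_Y, P_N$ with the required support constraints and matching first $\Theta(\log n)$ moments is a standard linear-program / Chebyshev-polynomial duality argument analogous to what Valiant--Valiant carry out for TV tolerance.

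The main obstacle is the simultaneous feasibility of a narrow yes-support (to bound the density ratio and hence KL), a nontrivial zero-mass atom on the no side (to force TV at least $c$), and matching $\Theta(\log n)$ moments. This is precisely the three-way trade-off that Valiant--Valiant handle; the change from requiring a yes-profile compactly supported near $1$ in TV to requiring a yes-profile with density ratios in $[\alpha, \beta]$ only affects the numerical constants, so their construction adapts directly and yields the desired $\Omega(n/\log n)$ sample lower bound.
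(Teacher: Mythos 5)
You have correctly identified the source of hardness: the Valiant--Valiant moment-matching lower bound for distinguishing a distribution that is nearly uniform on $n$ elements from one that is nearly uniform on $n/2$ elements. However, the load-bearing step of your argument --- the existence of two profiles that simultaneously (a) match $\Theta(\log n)$ moments, (b) have \emph{all} yes-case density ratios confined to a constant interval $[\alpha,\beta]$, and (c) have a constant fraction of zero-probability elements on the no side --- is asserted rather than proved, and the claim that the Valiant--Valiant construction ``adapts directly'' to give (b) is not justified. What their Theorem~1 actually delivers is closeness in \emph{relative earthmover distance} to the uniform distribution, which is strictly weaker than a pointwise density-ratio bound: a profile can be $O(\delta|\log\delta|)$-close to uniform in relative earthmover distance while placing a small amount of mass at probability values far outside $[\alpha/n,\beta/n]$ (and the corrections needed to fix up the higher moments naturally live at such extreme values). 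So your bound $\dkl(p,q)\le\log\beta$ does not follow from the packaged theorem, and verifying it would require reopening their construction; as written, the entire technical content of the theorem sits inside the sentence you wave through.

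The paper avoids this with a short observation you are missing: when $q$ is uniform, $\dkl(p,q)=\log n - H(p)$, and entropy is $1$-Lipschitz with respect to relative earthmover distance (Fact~5 of Valiant--Valiant). Hence the yes-case KL bound follows \emph{directly} from the guarantee $R(p,q)=O(\delta|\log\delta|)$, with no pointwise control of $p_i/q_i$ needed; the no-case total variation bound follows from the triangle inequality through the intermediate distribution that is uniform on $n/2$ elements, using $\dtv(p,r)\le R(p,r)$. To repair your proof, replace the density-ratio argument for the yes case with this entropy identity and use their theorem as a black box; otherwise you must verify the support constraints inside the moment-matching construction, which is a nontrivial addition.
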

\begin{proof}
Let $q = (\tfrac{1}{n}, \ldots, \tfrac{1}{n})$ be the uniform distribution.
Let $R(\cdot, \cdot)$ denote the \emph{relative earthmover distance} (see~\cite{ValiantV10a} for the definition).
By Theorem~$1$ of~\cite{ValiantV10a},
for any $\delta < \frac{1}{4}$
there exist sets of distributions~$\calC$ and~$\calF$ (for \emph{close} and \emph{far})
such that:
\begin{itemize}
\item For every $p \in \calC$, $R(p, q) = O(\delta | \log \delta|)$.
\item For every $p \in \calF$ there exists a distribution~$r$ which is uniform over~$n/2$ elements such that $R(p, r) = O(\delta | \log \delta|)$.
\item Distinguishing between $p \in \calC$ and $p \in \calF$ requires $\Omega(\frac{\delta n}{\log(n)})$ samples.
\end{itemize}
Now, if $p \in \calC$ then
\begin{equation*}
\dkl(p,q)
= \sum_{i=1}^n p_i \log\left(\frac{p_i}{1/n}\right)
= \log(n) - H(p)
\leq O(\delta |\log(\delta)|),
\end{equation*}
where $H(p)$ is the Shannon entropy of~$p$,
and here we used the fact that $|H(p) - H(q)| \leq R(p, q)$, which follows from Fact~$5$ of~\cite{ValiantV10a}.
On the other hand, if $q \in \calF$, let~$r$ be the corresponding distribution which is uniform over~$n/2$ elements.
Then
\begin{equation*}
\frac{1}{2}
= \dtv(p,q)
\leq \dtv(q,p) + \dtv(p,r)
\leq \dtv(q,p) + O(\delta | \log \delta|),
\end{equation*}
where we used the triangle inequality
and the fact that $\dtv(p,r) \leq R(p, r)$ (see~\cite{ValiantV10a} page 4).
As a result, if we set~$\delta$ to be some small constant,
$s = O(\delta |\log(\delta)|)$,
and $c = \frac{1}{2} - O(\delta | \log\delta|)$,
then this argument shows that distinguish $\dkl(p,q) \leq s$ versus $\dtv(p,q) \geq c$
requires $\Omega(n/\log n)$ samples.
\end{proof}

Finally, we conclude with our lower bound for $\chi^2$-tolerant equivalence testing.

\begin{theorem}\label{thm:twos-lb}
There exists a constant $\ve > 0$ such that any algorithm for equivalence testing between $p$ and $q$ distinguishing the cases:
\begin{itemize}
\item $\dxs(p,q) \leq \ve^2/4$;
\item $\dtv(p,q) \geq \ve$;
\end{itemize}
requires $\Omega(n/\log n)$ samples.
\end{theorem}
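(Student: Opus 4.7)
The plan is to apply Le Cam's two-point method by exhibiting two priors $\mathcal{D}_C$ and $\mathcal{D}_F$ on pairs $(p,q)$ of distributions over $[n]$ such that (i) $\dxs(p,q) \leq \ve^2/4$ almost surely under $\mathcal{D}_C$; (ii) $\dtv(p,q) \geq \ve$ almost surely under $\mathcal{D}_F$; and (iii) for $m = o(n/\log n)$, the distributions of Poissonized samples drawn from $p$ and $q$ under the two priors are $o(1)$-close in total variation. Item (iii) with Le Cam then yields the lower bound.

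We build the priors coordinatewise: for each $i \in [n]$, sample $(p_i,q_i)$ independently from a bivariate distribution $\pi_C$ (under $\mathcal{D}_C$) or $\pi_F$ (under $\mathcal{D}_F$), both supported on pairs $(a,b)$ with $a,b = \Theta(1/n)$. Arrange $\pi_C$ so that $|a-b|/b \leq \ve/2$ always, giving $\dxs(p,q) = \sum_i (p_i-q_i)^2/q_i \leq \ve^2/4$ by concentration across independent coordinates. Arrange $\pi_F$ so that $\mathbb{E}_{\pi_F}[|a-b|] \geq 2\ve/n$, giving $\dtv(p,q) = \tfrac12\sum_i|p_i-q_i| \geq \ve$ almost surely, again by concentration.

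The technical crux is simultaneously requiring that $\pi_C$ and $\pi_F$ match all bivariate Poisson moments up to order $k = \Theta(\log n)$:
\[
\mathbb{E}_{\pi_C}\bigl[(ma)^j(mb)^l e^{-m(a+b)}\bigr] \;=\; \mathbb{E}_{\pi_F}\bigl[(ma)^j(mb)^l e^{-m(a+b)}\bigr] \quad\text{for all } j+l \leq k.
\]
Such priors are constructed using a bivariate extension of Valiant--Valiant's moment-matching technique -- the same machinery underlying the construction cited in Theorem~\ref{thm:ones-lb}. One takes multi-level histograms supported on $\Theta(\log n)$ probability pairs near $1/n$ and solves the resulting linear system of moment equations via Chebyshev-type extremal polynomials; crucially, the $\chi^2$-vs-$\dtv$ separation between $\pi_C$ and $\pi_F$ is encoded in \emph{higher}-order moments, which are left unconstrained. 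Given the moment matching, a standard Valiant--Valiant calculation shows that the per-coordinate joint distributions of $(\mathrm{Poi}(mp_i),\mathrm{Poi}(mq_i))$ under the two priors differ in squared Hellinger distance by $n^{-1-\Omega(1)}$. Since coordinates are independent under Poissonization and squared Hellinger tensorizes over product measures, the global sample distributions are $o(1)$-close in squared Hellinger, and hence in TV.

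The main obstacle is producing the bivariate moment-matched priors: one must simultaneously satisfy the $\chi^2$-closeness of $\pi_C$, the TV-farness of $\pi_F$, and the $\Theta(\log n)$-order bivariate Poisson moment identities. This is a mild but bookkeeping-heavy extension of Valiant--Valiant's univariate construction -- the univariate construction delivers the pair of measures matching moments up to order $k$ while differing in a chosen functional, and the bivariate version is obtained by pairing probability values and matching the two-variable system of moments. The remaining ingredients (concentration of $\dxs$ and $\dtv$ over iid coordinates, Hellinger tensorization, dePoissonization, and Le Cam's method) are entirely routine.
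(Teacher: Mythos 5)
Your high-level framework (Poissonization, coordinatewise-independent priors, bivariate Poisson moment matching up to order $\Theta(\log n)$, Hellinger tensorization, Le Cam) is the standard way to prove such bounds from first principles, and nothing in it is wrong in outline. But the entire difficulty of the theorem is concentrated in the one step you declare to be ``a mild but bookkeeping-heavy extension'': the existence of bivariate priors $\pi_C,\pi_F$ that simultaneously (a) enforce $|a-b|/b\le\ve/2$ pointwise under $\pi_C$, (b) enforce $\E_{\pi_F}|a-b|\ge 2\ve/n$, and (c) match all mixed Poisson moments $\E[(ma)^j(mb)^l e^{-m(a+b)}]$ for $j+l\le\Theta(\log n)$. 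No such off-the-shelf bivariate construction is available to cite; the univariate construction of~\cite{ValiantV10a} invoked in Theorem~\ref{thm:ones-lb} does not directly deliver it, and the two-variable Chebyshev/extremal-polynomial system is a genuinely different (and delicate) object from the one-variable one. As written, the proposal asserts rather than proves its central lemma, so there is a real gap.

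The paper avoids this entirely with a short reduction that you should compare against your plan. Given the (already known, univariate) $\Omega(n/\log n)$-hard problem of distinguishing $\dh(p,q)\le\ve/\sqrt{48}$ from $\dtv(p,q)\ge 3\ve$ --- which follows from the TV-close-vs.-TV-far lower bound of~\cite{ValiantV10a} via $\dh^2\le\dtv$ --- one simulates samples from the mixtures $p'=\tfrac23 p+\tfrac13 q$ and $q'=\tfrac13 p+\tfrac23 q$. The point of the mixing is that $p'_i\le 2q'_i$ for every $i$, so by Proposition~\ref{prp:didnt-know-it-was-hellinger} the asymmetric $\chi^2$-divergence is controlled by the Hellinger distance, $\dxs(p',q')\le 12\,\dh^2(p',q')\le\ve^2/4$, while $\dtv(p',q')=\tfrac13\dtv(p,q)\ge\ve$ in the far case. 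This converts Hellinger-tolerance into $\chi^2$-tolerance at the cost of constants only. It is also worth noting that this mixing is exactly the device that would rescue your approach: if you take $q$ to be the (known) uniform distribution and push the univariate Valiant--Valiant prior through the map $t\mapsto(\tfrac23 t+\tfrac1{3n},\,\tfrac13 t+\tfrac2{3n})$, the bivariate moment conditions collapse back to the univariate ones, so no new two-variable polynomial construction is needed. Without that observation, your proof is incomplete.
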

\begin{proof}
We reduce the problem of distinguishing $\dh(p,q) \leq \frac{1}{\sqrt{48}} \epsilon$ from $\dtv(p,q) \geq 3\epsilon$ to this.
Define the distributions
\begin{equation*}
p' = \frac{2}{3} p + \frac{1}{3} q, \qquad q' = \frac{1}{3} p + \frac{2}{3} q.
\end{equation*}
Then $m$ samples from~$p'$ and~$q'$ can be simulated by $m$ samples from~$p$ and~$q$.
Furthermore,
\begin{equation*}
\dh(p',q') \leq \frac{1}{\sqrt{48}} \epsilon, \qquad \dtv(p',q') = \frac{1}{3} \dtv(p,q) \geq \epsilon,
\end{equation*}
where we used the fact that Hellinger distance satisfies the data processing inequality.
But then, in the ``close" case,
\begin{equation*}
\dxs(p',q')
= \sum_{i=1}^n \frac{(p'_i - q'_i)^2}{q'_i}
\leq 3 \sum_{i=1}^n \frac{(p'_i - q'_i)^2}{p'_i + q'_i}
\leq 12 \dh^2(p',q') \leq \frac{1}{4} \epsilon^2,
\end{equation*}
where we used the fact that $p'_i \leq 2q'_i$ and Proposition~\ref{prp:didnt-know-it-was-hellinger}.
Hence, this problem, which requires $\Omega(n/\log n)$ samples (by the relationship between total variation and Hellinger distance, and the lower bound for testing total variation-close versus -far of~\cite{ValiantV10a}), reduces to the problem in the proposition, and so that requires $\Omega(n/\log n)$ samples as well.
\end{proof}

\bibliographystyle{alpha}
\bibliography{biblio}
\appendix
\section{Proof of Proposition~\ref{prop:distanceinequalities}}
\label{sec:distanceinequalities}
Recall that we will prove this for restrictions of probability distributions to subsets of the support -- in other words, we do not assume $\sum_{i \in S} p_i = \sum_{i \in S} q_i = 1$, we only assume that $\sum_{i \in S} p_i \leq 1$ and $\sum_{i \in S} q_i \leq 1$.
\paragraph{$\dh^2(p_S,q_S) \leq \dtv(p_S,q_S):$}
\begin{align*}
\dh^2(p_S,q_S) &= \frac12 \sum_{i \in S} (\sqrt{p_i} - \sqrt{q_i})^2 \\
&\leq \frac12 \sum_{i \in S} |\sqrt{p_i} - \sqrt{q_i}|(\sqrt{p_i} + \sqrt{q_i}) \\
&= \frac12 \sum_{i \in S} |p_i - q_i| \\
&= \dtv(p_S,q_S).
\end{align*}

\paragraph{$\dtv(p_S,q_S) \leq \sqrt{2}\dh(p_S,q_S):$}
\begin{align*}
\dtv^2(p_S,q_S) &= \frac14 \left(\sum_{i \in S} \left|p_i - q_i\right|\right)^2 \\
&= \frac14 \left(\sum_{i \in S} \left|\sqrt{p_i} -\sqrt{q_i}\right|(\sqrt{p_i} + \sqrt{q_i})\right)^2 \\
&\leq \frac14 \left(\sum_{i \in S} \left|\sqrt{p_i} -\sqrt{q_i}\right|^2\right)\left(\sum_{i \in S}(\sqrt{p_i} + \sqrt{q_i})^2\right) \\
&\leq \dh^2(p_S, q_S) \cdot \frac12 \left(\sum_{i \in S}(\sqrt{p_i} + \sqrt{q_i})^2\right) \\
&= \dh^2(p_S, q_S) \cdot \left(\sum_{i \in S}p_i + \sum_{i \in S} q_i - \dh^2(p_S,q_S)\right) \\
&\leq \dh^2(p_S,q_S) \cdot \left(2 - \dh^2(p_S,q_S)\right) \\
&\leq 2 \dh^2(p_S,q_S).
\end{align*}
Taking the square root of both sides gives the result.
The second inequality is Cauchy-Schwarz.

\paragraph{$2\dh^2(p_S, q_S) \leq \sum_{i \in S} (q_i - p_i) + \dkl(p_S, q_S):$}
\begin{align*}
2 \dh^2(p_S, q_S) 
&= \sum_{i \in S} (q_i + p_i) - 2\sum_{i \in S} \sqrt{p_i q_i} \\
&= \sum_{i \in S} (q_i + p_i) - 2\left(\left(\sum_{j \in S} p_j\right)\sum_{i \in S} \frac{p_i}{\sum_{j \in S} p_j} \sqrt{\frac{q_i}{p_i}}\right) \\
&\leq \sum_{i \in S} (q_i + p_i) - 2\left(\left(\sum_{j \in S} p_j\right)\exp\left(\frac12 \sum_{i \in S} \frac{p_i}{\sum_{j \in S} p_j} \log{\frac{q_i}{p_i}}\right)\right) \\
&\leq \sum_{i \in S} (q_i + p_i) - 2\left(\left(\sum_{j \in S} p_j\right)\left(1 + \frac12 \sum_{i \in S} \frac{p_i}{\sum_{j \in S} p_j} \log{\frac{q_i}{p_i}}\right)\right) \\
&= \sum_{i \in S} (q_i - p_i) - \left(\sum_{i \in S} p_i \log{\frac{q_i}{p_i}} \right)\\
&= \sum_{i \in S} (q_i - p_i) + \dkl(p_S, q_S).
\end{align*}
The first inequality is Jensen's, and the second is $1 + x \leq \exp(x)$.

\paragraph{$\dkl(p_S, q_S) \leq \sum_{i \in S} (p_i - q_i) +  \dxs(p_S, q_S):$}
\begin{align*}
\dkl(p_S, q_S) 
&= \left(\sum_{j \in S} p_j\right)\left(\sum_{i \in S} \frac{p_i}{\sum_{j \in S} p_j} \log{\frac{p_i}{q_i}}\right) \\
&\leq \left(\sum_{j \in S} p_j\right)\left(\log {\frac{1}{\sum_{j\in S} p_j}\sum_{i \in S} \frac{p_i^2}{q_i}} \right) \\
&= \left(\sum_{j \in S} p_j\right) \left(\log{ \left(\frac{1}{\sum_{j\in S} p_j} \left(\dxs(p_S, q_S) + 2\sum_{i \in S} p_i - \sum_{i \in S} q_i\right)\right)} \right) \\
&= \left(\sum_{j \in S} p_j\right) \left(\log{  \left(2 + \frac{1}{\sum_{j\in S} p_j} \left(\dxs(p_S, q_S)  - \sum_{i \in S} q_i\right)\right)} \right) \\
&\leq \left(\sum_{j \in S} p_j\right) \left(1 + \frac{1}{\sum_{j\in S} p_j} \left(\dxs(p_S, q_S)  - \sum_{i \in S} q_i\right)\right) \\
&=\sum_{i \in S} (p_i - q_i) +  \dxs(p_S, q_S).
\end{align*}
The first inequality is Jensen's, and the second is $1 + x \leq \exp(x)$.

\end{document}